\crefname{supp}{Supplement}{Supplements}
\newtheorem{theorem}{Theorem}
\newtheorem{lemma}{Lemma}
\newtheorem{corollary}{Corollary}
\newtheorem{proposition}{Proposition}
\renewcommand{\vec}{\bm}
\newcommand{\cvec}{\vec{c}}
\newcommand{\dz}{\mathrm{d}z}
\newcommand{\dtau}{\mathrm{d}\tau}
\newcommand{\Siq}{S_{\vec{i}_q}}
\newcommand{\avg}[1]{\langle #1 \rangle}
\newcommand{\lap}[1]{\mathcal{L}\left\{ #1 \right\}}
\newcommand{\diag}[1]{\text{diag}\left( #1 \right)}
\newcommand{\Tsteps}{N_{\mathrm{therm}}}
\newcommand{\steps}{N_{\mathrm{meas}}}
\newcommand{\stepsPer}{N_{\mathrm{skip}}}
\newcommand{\nbins}{N_{\mathrm{bins}}}
\newcommand{\nthreads}{N_{\mathrm{p}}}
\newcommand{\walltime}{T_{\mathrm{wall}}}
\newcommand{\qavg}{\avg{q}}
\newcommand{\qmax}{q_{\mathrm{max}}}
\newcommand{\avgsign}{\avg{\mathrm{sgn}}}
\newcommand{\sigsign}{\sigma_{\mathrm{sgn}}}
\newcommand*{\estimates}{\mathrel{\widehat=}}
\newcommand{\beq}{\begin{equation}}
\newcommand{\eeq}{\end{equation}}
\newcommand{\beqnn}{\begin{equation*}}
\newcommand{\eeqnn}{\end{equation*}}
\newcommand{\bea}{\begin{eqnarray}}
\newcommand{\eea}{\end{eqnarray}}
\newcommand{\beann}{\begin{eqnarray*}}
\newcommand{\eeann}{\end{eqnarray*}}
\newcommand{\bes} {\begin{subequations}}
\newcommand{\ees} {\end{subequations}}
\begin{document}

\title{A universal black-box quantum Monte Carlo approach to quantum phase transitions}

\author{Nic Ezzell} \thanks{Corresponding author: naezzell@gmail.com}
\affiliation{Department of Physics \& Astronomy, University of Southern California, Los Angeles, California 90089, USA}
\affiliation{Information Sciences Institute, University of Southern California, Marina del Rey, CA 90292, United States of America}

\author{Lev Barash}
\affiliation{Information Sciences Institute, University of Southern California, Marina del Rey, CA 90292, United States of America}
\author{Itay Hen}
\affiliation{Department of Physics \& Astronomy, University of Southern California, Los Angeles, California 90089, USA}
\affiliation{Information Sciences Institute, University of Southern California, Marina del Rey, CA 90292, United States of America}

\date{\today}

\begin{abstract}
    We derive exact, universal, closed-form quantum Monte Carlo estimators for finite-temperature 
energy susceptibility and fidelity susceptibility, applicable to essentially arbitrary Hamiltonians. 
Combined with recent advancements in Monte Carlo, our approach enables a black-box framework 
for studying quantum phase transitions — without requiring prior knowledge of an order parameter 
or the manual design of model-specific ergodic quantum Monte Carlo update rules.
We demonstrate the utility of our method by applying a single implementation to the transverse-field Ising model, 
the XXZ model, and an ensemble of models related by random unitaries.
\end{abstract}

\maketitle

\section{Introduction}
Formally, a quantum phase transition (QPT) is defined by nonanalytic behavior
of a system's ground-state energy in the thermodynamic limit~\cite{SachdevBook}.
Traditionally, QPTs are analyzed using local order parameters, such the average local magnetization 
in the transverse-field Ising model (TFIM)~\cite{SachdevBook}. 
However, identifying a suitable local order parameter for a new model is often challenging—and for topological QPTs, 
it is entirely infeasible~\cite{wen2017ColloquiumZooQuantumtopological, hamma2008EntanglementFidelityTopological}.
Consequently, substantial effort has been devoted to developing generic methods for studying QPTs that do not rely 
on prior knowledge of an order parameter~\cite{FIDELITYAPPROACHQUANTUM, zanardi2006GroundStateOverlap, zanardi2007BuresMetricThermal, zanardi2007InformationTheoreticDifferentialGeometrya, zanardi2007MixedstateFidelityQuantum, you2007FidelityDynamicStructure, rigol2009FidelitySuperconductivityTwodimensional,kasatkin2024detecting, sun2015FidelityBerezinskiiKosterlitzThoulessQuantum,kasatkin2024detecting,zhou2008GroundStateFidelity,jiang2011ReducedFidelityEntanglement,jordan2009NumericalStudyHardcore,wang2010BerezinskiiKosterlitzThoulessTransitionUncovered, schwandt2009QuantumMonteCarlo, albuquerque2010QuantumCriticalScaling, wang2015FidelitySusceptibilityMade, damski2013FidelitySusceptibilityQuantum, bialonczyk2021UhlmannFidelityFidelity, sirker2010FiniteTemperatureFidelitySusceptibility}.
Nevertheless, these approaches still rely on manually encoding subtle, system-specific prior knowledge.
Building on recent advancements in quantum Monte Carlo (QMC) methods~\cite{barash2024QuantumMonteCarlo, akaturk2024quantum, babakhani2025quantum, ezzell2025advanced} 
and the well-established fidelity-based approach to QPTs~\cite{FIDELITYAPPROACHQUANTUM, schwandt2009QuantumMonteCarlo, albuquerque2010QuantumCriticalScaling, sirker2010FiniteTemperatureFidelitySusceptibility, wang2015FidelitySusceptibilityMade}, 
we propose an automated, order-parameter-free, black-box method for studying QPTs that requires no manual encoding of system-specific details.

More precisely, we derive QMC estimators for \emph{energy susceptibility} (ES)~\cite{albuquerque2010QuantumCriticalScaling, chen2008IntrinsicRelationGroundstate, camposvenuti2007QuantumCriticalScaling} 
and \emph{fidelity susceptibility} (FS)~\cite{FIDELITYAPPROACHQUANTUM, you2007FidelityDynamicStructure, schwandt2009QuantumMonteCarlo, albuquerque2010QuantumCriticalScaling, wang2015FidelitySusceptibilityMade,tzeng2008fidelity}, 
which can identify quantum critical points in QPTs without relying on order parameters. 
While related estimators have appeared in other QMC frameworks~\cite{albuquerque2010QuantumCriticalScaling, schwandt2009QuantumMonteCarlo, wang2015FidelitySusceptibilityMade}, 
our derivations and resulting estimators are developed within the relatively 
recent permutation matrix representation QMC (PMR-QMC), which offers several key advantages 
over earlier methods~\cite{gupta2020PermutationMatrixRepresentation, hen2018OffdiagonalSeriesExpansion, albash2017OffdiagonalExpansionQuantum, barash2024QuantumMonteCarlo, akaturk2024quantum, babakhani2025quantum, ezzell2025advanced}. 
Most relevant to this work, it was recently shown that ergodic PMR-QMC update rules satisfying detailed balance 
can be generated via a deterministic, automated algorithm for essentially arbitrary 
Hamiltonians~\cite{barash2024QuantumMonteCarlo, akaturk2024quantum, babakhani2025quantum}. 
Furthermore, our estimators are applicable to arbitrarily complex driving terms, as we demonstrate in this work.
To the best of our knowledge, neither of these capabilities was supported by previous approaches. 
Taken together, these features allow our method to be applied directly to a wide range of systems 
without the need to manually encode laborious, system-specific details.

To this end, we apply our approach to estimate ES and FS in three spin systems: a 2D TFIM model, 
a 2D XXZ model, and a 100-spin model with random Pauli operator terms that—to our knowledge—cannot be studied using any other method. 
Given any spin-$1/2$ Hamiltonian expressed as a sum of Pauli operator products, our current implementation~\cite{ezzell2025code} 
can attempt to locate quantum critical points by estimating ES and/or FS. 
As with any QMC method, our approach may encounter convergence issues due to frustration 
or the sign problem~\cite{hen2021determining, pan2024SignProblemQuantum}, 
which can sometimes be mitigated with custom QMC updates~\cite{kandel1990cluster, rakala2017cluster} 
or other techniques~\cite{gupta2020elucidating, hen_resolution_2019, hen2021determining, melko2013stochastic, sandvik2019stochastic, li2019SignProblemFreeFermionicQuantum, pan2024SignProblemQuantum}. 
For the three models studied in this work, however, no manual, system-specific adjustments were required 
to ensure convergence. In all cases, our code automatically monitors the severity 
of the sign problem~\cite{barash2024PmrQmcCode, barash2024QuantumMonteCarlo} 
and performs thermalization and autocorrelation diagnostics to ensure reliable estimates.

Our implementation~\cite{ezzell2025code} builds upon preexisting open-source code for arbitrary spin-$1/2$ models~\cite{barash2024PmrQmcCode}. 
Since our estimators are derived within the general PMR-QMC framework, which can represent arbitrary Hamiltonians~\cite{gupta2020PermutationMatrixRepresentation, ezzell2025advanced}, 
they can readily be ported to any existing or future PMR-QMC codebases. 
At present, this includes Bose-Hubbard models on arbitrary graphs~\cite{akaturk2024quantum}, 
arbitrary spin-$1/2$, high-, and mixed-spin systems~\cite{barash2024QuantumMonteCarlo,babakhani2025quantum}, and fermionic systems~\cite{babakhani2025quantum,babakhani2025fermionic}. 
In summary, PMR-QMC can automatically generate QMC moves for essentially any Hamiltonian~\cite{barash2024QuantumMonteCarlo, akaturk2024quantum, babakhani2025quantum}, 
and in this work, we derive generic ES and FS estimators within this framework. 
Combined with automatic convergence diagnostics, our approach—unlike previous numerical methods—can reliably be used as a 
black-box tool for studying QPTs.

\section{Results}
\label{sec:numerics}
Before deriving formal estimators in \cref{sec:methods}, we demonstrate our method through a series of numerical experiments 
using open source PMR-QMC code~\cite{ezzell2025code} that builds upon prior work~\cite{barash2024PmrQmcCode,barash2024QuantumMonteCarlo} 
(see \cref{app:pmr-qmc-background} for basics of PMR-QMC and \cref{app:code-and-empirical-resource-scaling}  for basic code instructions). 
Each model is a spin-$1/2$ system defined in terms of the standard Pauli operators $X$, $Y$, and $Z$. 

We use the notation for the ES $\chi_E$ and the FS $\chi_F$ as defined in ``Theoretical background: Energy and fidelity susceptibilities" in \cref{sec:methods}; 
the PMR-QMC conventions follow ``PMR-QMC background and notation" also in \cref{sec:methods}.
$X_i$, $Y_i$, and $Z_i$ denote Pauli $X$, $Y$, and $Z$ operators acting on spin $i$, respectively.
{ All error bars and continuous error bands estimate the spread of the plotted quantity over independent QMC runs: Either $2 \sigma$ (for thermal averages) or direct 95\% percentile intervals (for wall-clock times). }

We begin with illustrative experiments on well-known models, emphasizing agreement with expected results and, 
where applicable, reproducing prior plots~\cite{albuquerque2010QuantumCriticalScaling, wang2015FidelitySusceptibilityMade}. 
We then present a more challenging experiment involving an ensemble of random models, for which our approach 
still accurately estimates ES and FS. Notably, for this latter case, we are unaware of any other method 
capable of studying the model successfully. 
Yet, our single code can handle all of these models with only straightforward adjustments to input files~\cite{ezzell2025code}. \\

\noindent \textbf{A simple two spin model}

\begin{figure}
   \centering
    \includegraphics[width=0.45\textwidth]{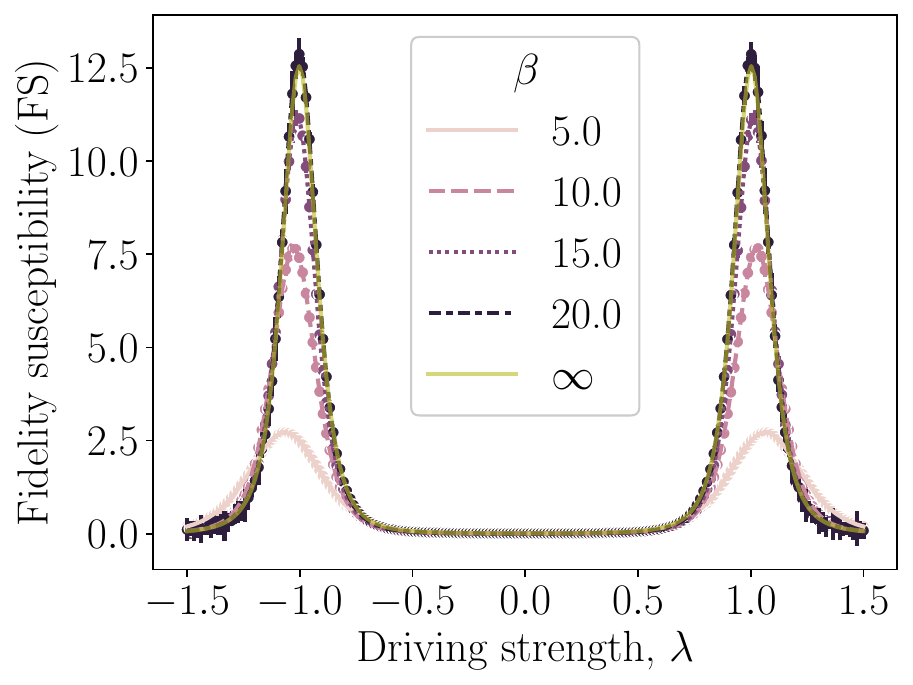}
    \caption{FS for the two spin model, Eq.~(\ref{eq:prl-model}), considered in Ref.~\cite{zhang2008detection} for different inverse temperatures, $\beta$. Data from QMC are averaged over 5 independent random seeds, whereas direct numerical calculations are shown as lines.}
    \label{fig:prl-comparison-1}
\end{figure}

We begin by studying,
\begin{equation}
    \label{eq:prl-model}
    H(\lambda) = Z_1 Z_2 + \gamma (X_1 + X_2) + \lambda(Z_1 + Z_2), 
\end{equation}
first introduced and studied in a quantum sensing experiment on an NMR system via the fidelity approach (we choose $\gamma=0.1$)~\cite{zhang2008detection}. By design, this model exhibits an avoided level crossing near the critical points $\lambda \approx \pm 1$, where the ES and FS are expected to peak. Our QMC approach correctly locates these peaks by computing the FS as shown in~\cref{fig:prl-comparison-1}. In fact, as shown, our approach correctly matches direct numerical calculations for each temperature tested. Furthermore, our plot shows expected convergence of $\chi_F^\beta \rightarrow \chi_F$ as $\beta \rightarrow \infty$. At $\beta = 20$ specifically, the QMC results closely match the exactly computed $T = 0$ results. This verifies the basic veracity of our method.  

Next, we remark that prior QMC approaches are generally incapable of estimating the FS, as they must typically restrict themselves to models for which $H_1 \propto \diag{H}$ or $H_1 \propto H - \diag{H}$~\cite{schwandt2009QuantumMonteCarlo,  albuquerque2010QuantumCriticalScaling,wang2015FidelitySusceptibilityMade,sirker2010FiniteTemperatureFidelitySusceptibility}.
Despite the simplicity of this model,  it does not satisfy this condition.  Yet,  our method easily handles this model without even needing to write a specialized code for this case. \\

\begin{figure}
    \includegraphics[width=0.45\textwidth]{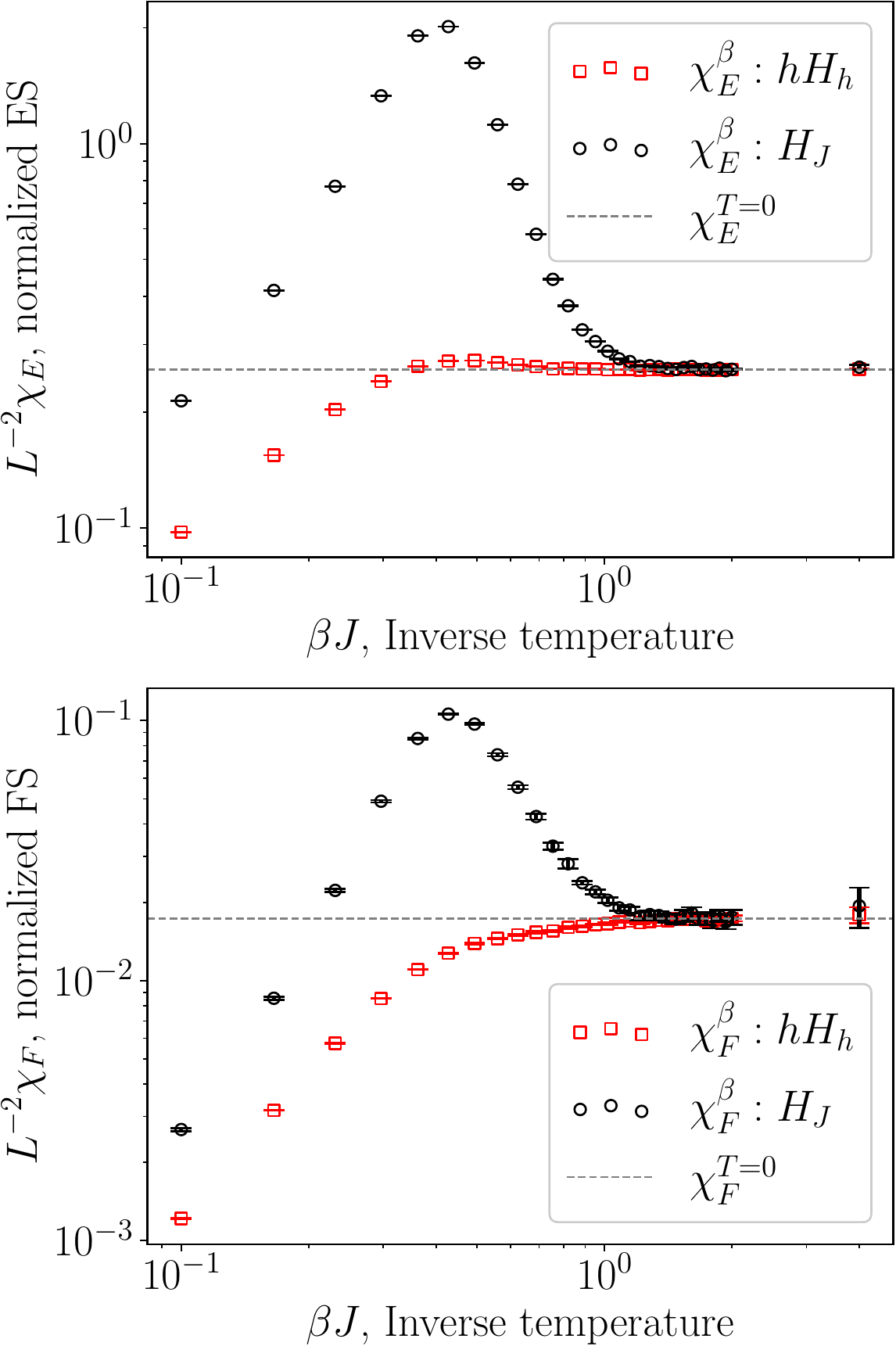}
    \caption{We compute ES and FS restricted to the positive parity subspace (see \cref{app:tfim-parity-details}) of a $4 \times 4$ square TFIM with PBC and $h = 1$ as a function of $\beta$. {  QMC data are averaged over 100 independent runs.} Our FS results are consistent with a similar computation performed with SSE-QMC~\cite{albuquerque2010QuantumCriticalScaling}.}
    \label{fig:prb-replicate-fig3}
\end{figure}

\noindent \textbf{A square lattice TFIM}

Secondly,  we consider the well-known and extensively studied~\cite{albuquerque2010QuantumCriticalScaling, blote1995ising, blote2002cluster} transverse-field Ising model (TFIM), 
\begin{equation}
    \label{eq:fidsus-tfim}
    H_{\text{TFIM}}(h) = H_J + h H_h = - \sum_{\avg{i, j}} X_i X_j - h \sum_{i=1}^n Z_i,
\end{equation}
on a square lattice with periodic boundary conditions (PBC).   For our purposes, we focus our efforts on a simple $4 \times 4$ model at fixed $h = 1$ and  compute the ES and FS as a function of inverse temperature for either $H_J$ or $h H_h$ as the ``driving term.'' Notably, these results are actually not unconditional ES and FS as defined in~\cref{eq:gibbs-ES,eq:gibbs-FS}, but rather, the result of restricting computations only to the positive parity subspace of the TFIM.  We refer interested readers to \cref{app:tfim-parity-details} for details on this methodological trick, but we remark that our approach is a non-trivial convenience of PMR-QMC that highlights its versatility.  

\begin{figure}
    \centering
       \includegraphics[width=0.45\textwidth]{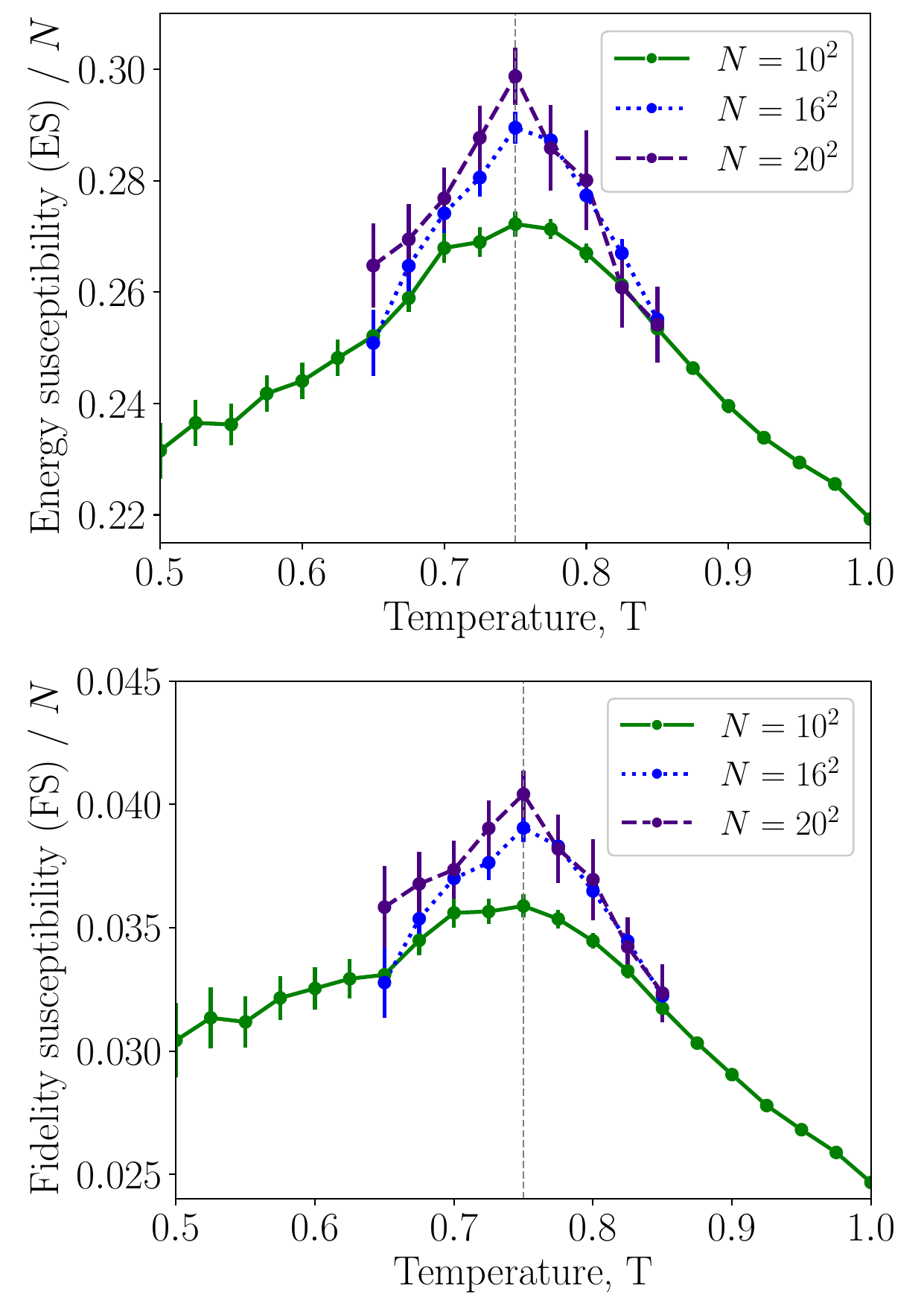}
    \caption{We compute ES and FS for the XXZ model in~\cref{eq:fidsus-xxz} on a square lattice with PBC as a function of temperature, $T$, for different model sizes. { QMC data are averaged over at least 200 independent runs.} Our FS results are consistent with a similar computation performed with SSE-QMC~\cite{wang2015FidelitySusceptibilityMade}.}
    \label{fig:prx-replicate-fig5}
\end{figure}

The results are shown in~\cref{fig:prb-replicate-fig3} and demonstrate three things.  Firstly, they showcase the ability of our method to consider different driving term choices as well as compute things in a fixed parity subspace.  Secondly, they help verify the correctness of our approach, as the FS plot correctly matches results independently calculated with SSE-QMC~\cite{albuquerque2010QuantumCriticalScaling}. Finally,  they correctly reveal an expected feature of the ES and FS.  Namely, one might naively expect that, for the purposes of detecting a quantum critical point, one is free to view either the static term $H_J$ or the driving term $h H_h$ in~\cref{eq:fidsus-tfim} as the perturbation.

In the $T = 0$ limit, this is correct since 
\begin{equation}
    \chi_{F,0}^{H_J} = h^2 \chi_{F,0}^{h H_h}, \ \ \ \chi_{E,0}^{H_J} = h^2 \chi_{E,0}^{h H_h},
\end{equation}
follows from plugging in  $H_J = H - h H_h$ into~\cref{eq:spectral-ES,eq:spectral-FS} alongside the observation $\avg{\psi_n | H | \psi_0} = 0$ for all $n \neq 0$. In these plots, we have chosen $h = 1$, and correspondingly, we find that both the ES and FS converge for either perturbation choice ($H_J$ or $h H_h)$ when $\beta$ is large enough. Yet for any finite $\beta$, the relation is less trivial, and in fact, one finds that there is no simple relation that holds across all inverse temperatures~\cite{albuquerque2010QuantumCriticalScaling}. Methodologically, this provides a different way to test for empirical convergence of the finite temperature ES and FS to groundstate values.

So far,  we have considered simple models of fixed size.  To connect these ideas to the study of QPTs in the infinite system limit,  we remark that one can simply compute the ES and FS for increasingly larger system sizes.  This is done, for example, in Fig. 4 of Ref.~\cite{albuquerque2010QuantumCriticalScaling}. As expected, this results in an increasingly sharper peak in both the ES and FS as $h$ approaches the known critical point $h_c \approx 3.044$~\cite{blote2002cluster}.  In addition, one can perform finite size scaling and data collapse in accordance with expected scaling relations,~\cref{eq:es-scaling-relation,eq:fs-scaling-relation}, to extract the critical exponent $\nu \approx 0.6301$~\cite{blote1995ising, albuquerque2010QuantumCriticalScaling}.  Expressed differently,  our method is straightforwardly compatible with studying QPTs in the infinite size limit with enough computational effort,  which we show more directly for the XXZ model next.  \\

\noindent \textbf{A square lattice XXZ model}

\begin{figure*}
\includegraphics[width=\textwidth]{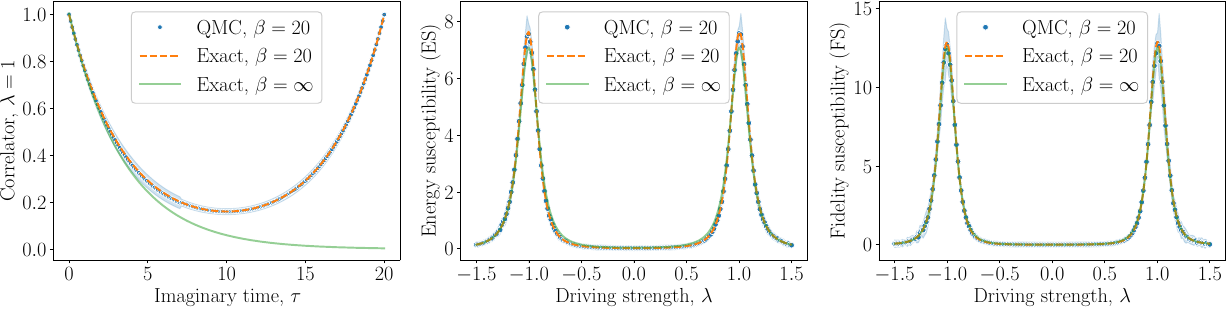}
\caption{Comparison of QMC estimates with exact values for an ensemble of 100--spin Hamiltonians generated 
by random rotations of Eq.~\eqref{eq:prl-model}. Detailed information on the ensemble can be found 
in \cref{app:random-U}.  QMC estimates are averaged across 10 random models. In all cases, 
the QMC estimates show excellent agreement with exact finite $\beta$ curves.
(a) The imaginary time correlator, $G(\tau)$, has the expected symmetry around $\beta/2 = 10.$
(b) The ES correctly predicts critical points despite the $\beta = 20$ curve not quite converging to the $\beta = \infty$ curve.
(c) The FS correctly predicts critical points more sharply than ES, and the $\beta = 20$ curve matches the $\beta = \infty$ curve.
}
\label{fig:prl-rot-obs}
\end{figure*}

Thirdly,  we study the antiferromagnetic XXZ model on a square lattice with PBC, 
\begin{equation}
    \label{eq:fidsus-xxz}
    H = \frac{J_z}{4} \sum_{\avg{i,j}} Z_i Z_j + \frac{\lambda}{4} \sum_{\avg{i,j}} (X_i X_j + Y_i Y_j),
\end{equation}
which relates to a hard-core boson model~\cite{schmid2002finite,koziol2024quantum} via the Matsubara-Matsuda transformation~\cite{matsubara1956lattice}. The finite-temperature phase diagram of this model has been studied intensely~\cite{schmid2002finite}, and various facts of this diagram have been replicated by FS approaches with exact diagonalization~\cite{yu2009fidelity} and SSE-QMC~\cite{wang2015FidelitySusceptibilityMade}.  From a QPT perspective, this model prefers  N{\'e}el order in the XY plane when $\lambda$ dominates and an antiferromagnetic Ising groundstate when $J_z$ dominates. The point $\lambda = J_z$ is a known quantum critical point known as the Heisenberg point~\cite{wang2015FidelitySusceptibilityMade, schmid2002finite}.

Furthermore~\cite{wang2015FidelitySusceptibilityMade, schmid2002finite}, it is known that thermal fluctuations at finite temperature destroy the antiferromagnetic Ising phase at a second-order phase transitions. Fixing $\lambda = 1, J_z = 1.5$ and varying temperature $T$, one can search for the expected critical point $(T / \lambda)_c \approx 0.75$~\cite{wang2015FidelitySusceptibilityMade, schmid2002finite}. Using the ES and FS, we search for this critical point in~\cref{fig:prx-replicate-fig5}, and the empirical peak as we scale system size agrees with expectations.  The FS plot in particular is consistent with results obtained in Ref.~\cite{wang2015FidelitySusceptibilityMade}. More broadly,  these experiments show that our method can scale to large system sizes (which are necessary for finite size scaling in the study of QPTs) and also can study finite temperature phase transitions. \\

\noindent \textbf{A random ensemble}

Finally,  we consider an ensemble of random 100-spin Hamiltonians generated by random unitary rotations of Eq.~(\ref{eq:prl-model}) (see \cref{app:random-U} for full details of this family).  By unitary invariance,  observables for these random models agree with the simple two qubit model despite each $H_{U_i} = U_i  H(\lambda) U_i^{\dagger}$ and associated driving term $U_i H_1 U^{\dagger}_i$ being quite complicated,  containing hundreds of random Pauli terms over 100 spins.  Using standard QMC approaches, it is not possible to study this ensemble,  but it is straightforward to try in PMR-QMC,  and the results agree with known expectations as shown in~\cref{fig:prl-rot-obs}.   

Specifically,  PMR-QMC estimators consistently show excellent agreement with exact numerical results for  the imaginary-time correlator $G(\tau)$, the ES, and the FS, as shown in Fig.~\ref{fig:prl-rot-obs}.  Furthermore, our results confirm established characteristics of these observables; for instance, $G(\tau)$ exhibits symmetry about $\beta/2$ (see Fig.~1 in Ref.~\cite{albuquerque2010QuantumCriticalScaling} and Fig.~7 in Ref.~\cite{wang2015FidelitySusceptibilityMade}). \\

{ 
\noindent \textbf{Simulation resources and scaling}
\begin{figure}
    \centering
    \includegraphics[width=0.4\textwidth]{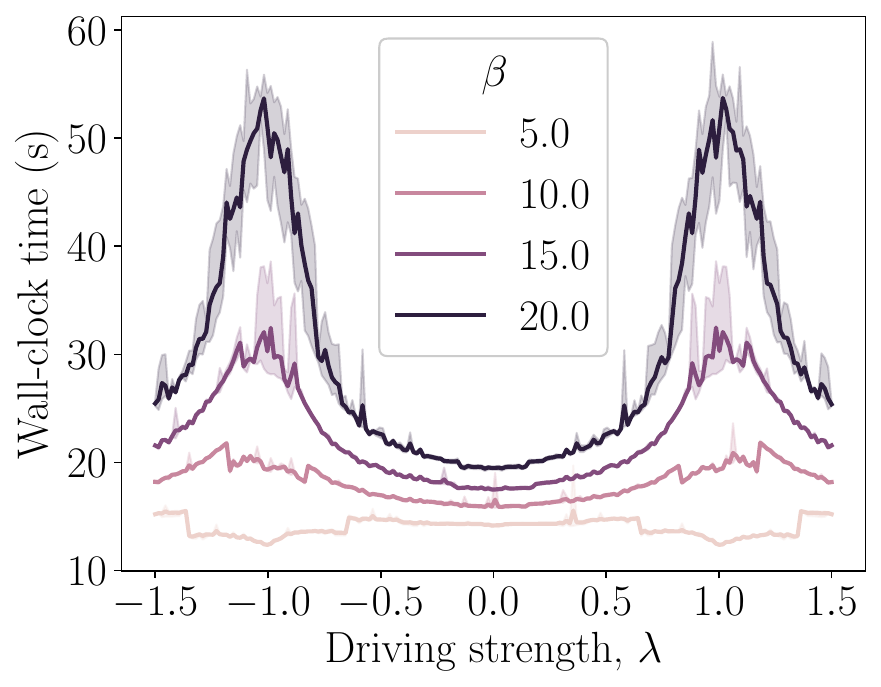}
    \caption{ Simulation time used to generate \cref{fig:prl-comparison-1}.}
    \label{fig:maintext-timing-1}
\end{figure}

\begin{table}
\centering
\begin{tabular}{lccc}
\toprule
Fig. & $\nthreads$  & $\min(\walltime)$ & $\max(\walltime)$ \\
~\ref{fig:prl-comparison-1} & 5 & 10s & 1m \\
~\ref{fig:prb-replicate-fig3} & $100$ & 15s & 14.4hrs \\
~\ref{fig:prx-replicate-fig5} & $200-400$ & 3.6hrs & 8.8days \\
~\ref{fig:prl-rot-obs} & $5$ &  4s & 8.2m \\
\midrule
\bottomrule
\end{tabular}
\caption{ We summarize the resources used to generate data in this work. Further information is contained in \cref{app:code-and-empirical-resource-scaling}.}
\label{tab:simulation_parameters}
\end{table}

\begin{figure*}
    \centering
    \includegraphics[width=0.9\textwidth]{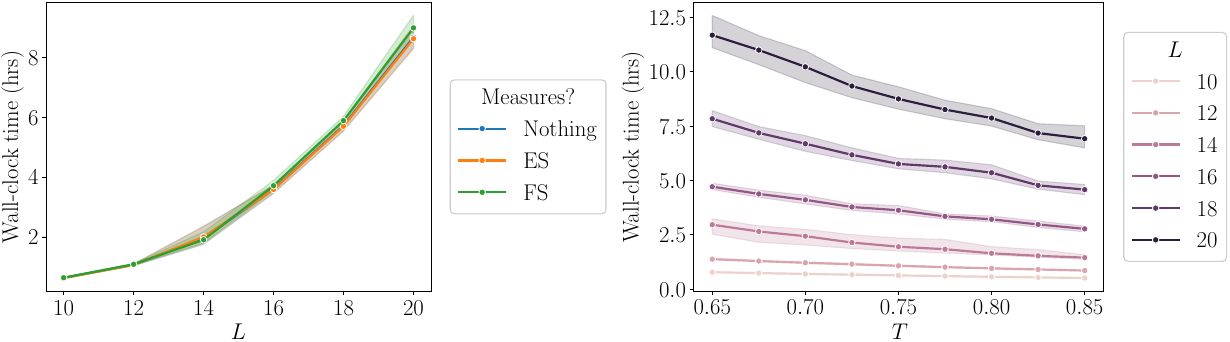}
    \caption{ Estimate of total simulation time scaling for the $L \times L$ XXZ model data in \cref{fig:prx-replicate-fig5}. Timings are given with the following simulation parameters fixed: $\Tsteps = 0$, $\steps = 4\times 10^8$, $\stepsPer = 5\times 10^3$. Errors bars are estimated over 20 independent runs ($\nthreads = 20)$. We refer readers to \cref{app:code-and-empirical-resource-scaling} for more details on the meaning of simulation parameters. (Left) Wall-clock time as a function of $L$ at the critical temperature $T_c \approx 0.75$, separated by what is being measured.  Simulation time scales roughly as $L^4 = N^2$ in this case. (Right) Wall-clock times to simulate the XXZ model as a function of $T$ for different $L$, with all measurement approaches averaging together.}
    \label{fig:maintext-timing-2}
\end{figure*}

Numerical demonstrations for \cref{fig:prl-comparison-1,fig:prl-rot-obs}, corresponding to studies of \cref{eq:prl-model} and its random rotations, were performed on a single 2.3\,GHz Intel i9 laptop processor. The remaining experiments involving the TFIM and XXZ models (\cref{fig:prb-replicate-fig3,fig:prx-replicate-fig5}) were executed on a University of Southern California high-performance computing cluster. The algorithm is highly memory-efficient~\cite{barash2024QuantumMonteCarlo,gupta2020CalculatingDividedDifferences}, with modest requirements even for large-scale simulations (approximately 200\,MB of memory per run was sufficient for all calculations reported here). Consequently, our performance analysis focuses primarily on time.

As a QMC method, our approach is naturally suited for lazy parallel execution
and exhibits near-perfect strong scaling as the number of cores increases
(see, e.g., Fig.~8 of Ref.~\cite{barash2024QuantumMonteCarlo}).
\Cref{tab:simulation_parameters} summarizes the number of MPI threads ($\nthreads$, approximately corresponding to the number of cores) and wall-clock times used to produce the results in \cref{fig:prl-comparison-1,fig:prb-replicate-fig3,fig:prx-replicate-fig5,fig:prl-rot-obs}. Additional details regarding simulation parameters and computational resources are provided in \cref{app:code-and-empirical-resource-scaling}.

Next, we briefly discuss the empirical scaling of our algorithm.
In \cref{fig:maintext-timing-1}, we plot the total simulation time used to generate \cref{fig:prl-comparison-1} in our study of the two qubit model in \cref{eq:prl-model}. Though a simple model, simulation time grows as a function of $\beta$ and near the avoided level crossings, as expected.

In \cref{fig:maintext-timing-2}, we repeat our study of the XXZ model at the critical temperature $T_c \approx 0.75$ at (a) different system sizes $N = L^2$ and (b) with different measurements protocols. We also show the modest wall-clock time scaling as a function of $T$ for different $L$.   In our timing experiments, all simulations are run on the same 2.6 GHz AMD EPYC 7513 CPU and simulation parameters are fixed to values listed in the caption, chosen to be in the same range as those used for \cref{fig:prx-replicate-fig5}. The different measurements correspond to either running QMC without performing any measurement, while measuring only the ES, or while measuring only the FS. In all cases, times are similar and the scaling is roughly proportional to $N^2 = L^4$. Given that ES and FS have $O(1)$ and $O(q^3)$ estimators (see \cref{thrm:constant-ES-estimator,thrm:cubic-FS-estimator}), this may be surprising, but measurements are performed sparsely (every $\stepsPer = 5000$ MC moves), so wall-clock time is dominated by QMC simulation itself.  Further discussion and empirical testing of resource scaling is given in \cref{app:code-and-empirical-resource-scaling}. 

\section{Discussion}

We have developed a universal black-box, finite-temperature quantum Monte Carlo (QMC) approach 
for studying quantum phase transitions (QPTs).
Specifically, we derived PMR-QMC estimators for energy susceptibility (ES)
and fidelity susceptibility (FS), which serve as general-purpose, order-parameter-free tools for locating quantum critical points. 
At the same time, given a Hamiltonian, we can compute a set of ergodic QMC updates that satisfy detailed balance.

Thanks to the generality of both the PMR-QMC framework and our PMR-based derivations, we successfully studied 
various aspects of QPTs, ES, and FS for a transverse-field Ising model, an XXZ model, and even a random ensemble—all using a single code. 
These demonstrations required no model-specific modifications to the source code—only a specification of the Hamiltonian 
and the desired partition, both of which can be arbitrary.  
As such, our method and publicly available open-source code are designed to be accessible and user friendly.

Of note, our method can estimate the ES in constant time for models with a simple driving term such as in the TFIM or XXZ model. Though the ES is expected to be a weaker indicator of criticality than the FS in general, we have found it to be sufficient in all models studied in this work. Hence, our approach is not only general but also very efficient for many systems of interest in condensed matter physics. 

There are several promising avenues for future research. Foremost, we aim to apply our approach to physically relevant models 
with non-standard driving terms. We also intend to extend our method to bosonic, fermionic, and high-spin systems of current interest.

We hope the results of this work will provide a valuable and versatile tool for condensed matter physicists 
and researchers studying quantum phase transitions.

\section{Methods}
\label{sec:methods}

We introduce requisite technical background then state and discuss the novel PMR-QMC estimators utilized in our study of QPTs in \cref{sec:numerics}.  Importantly,  we state  \cref{thrm:constant-ES-estimator,thrm:cubic-FS-estimator},  which provide estimators for ES and FS applicable to most well-studied condensed matter models.  At the cost of some efficiency,  we also generalize these theorems to arbitrary Hamiltonians.  \\

\noindent \textbf{Energy and fidelity susceptibilities}

We discuss the ES and the FS as two generic, order-parameter-free tools to study QPTs~\cite{albuquerque2010QuantumCriticalScaling, wang2015FidelitySusceptibilityMade, FIDELITYAPPROACHQUANTUM, you2007FidelityDynamicStructure}. Throughout, we assume an $n$ spin Hamiltonian with non-degenerate groundstate,
\begin{equation}
    \label{eq:qpt-ham}
    H(\lambda) = H_0 + \lambda H_1,
\end{equation}
that has \emph{quantum critical point} (QCP) at $ \lambda = \lambda_c,$ denoting the boundary between two quantum phases of a QPT. Henceforth, we refer to $H_0$ as the static term and $H_1$ as the driving term for convenience. Given the spectral decomposition, $H(\lambda) \ket{\psi_k(\lambda)} = E_k(\lambda) \ket{\psi_k(\lambda)},$ we define the ES~\cite{albuquerque2010QuantumCriticalScaling, chen2008IntrinsicRelationGroundstate, camposvenuti2007QuantumCriticalScaling}
\begin{equation}
    \label{eq:2nd-der-ES}
    \chi_E(\lambda) = - \frac{\partial^2 E_0(\lambda)}{\partial\lambda^2}.
\end{equation}
For second-order QPTs such as that in the TFIM~\cite{SachdevBook}, the ES is expected to have a peak or even diverge, and hence, it can be used to locate QCPs. Similarly, the second-order term in the Taylor expansion of groundstate fidelity, or FS~\cite{you2007FidelityDynamicStructure}, 
\begin{equation}
    \label{eq:taylor-FS}
    \chi_F(\lambda) = \lim_{\delta \lambda \rightarrow 0} \frac{-2 \ln | \avg{\psi_0(\lambda) | \psi_0(\lambda + \delta \lambda)} |}{\delta \lambda^2},
\end{equation}
also exhibits a peak or diverges at QCPs for a wide variety of QPTs~\cite{camposvenuti2007QuantumCriticalScaling, gu2008FidelitySusceptibilityScaling, albuquerque2010QuantumCriticalScaling}, including first-order~\cite{kasatkin2024detecting}, second-order~\cite{kasatkin2024detecting}, topological~\cite{abasto2008FidelityAnalysisTopological, yang2007GroundstateFidelityOnedimensional}, and as seen recently, even dynamical QPTs~\cite{kasatkin2024detecting}.

There are several equivalent characterizations of both quantities that are useful in different contexts. First, we discuss various rich characterizations of the FS~\cite{FIDELITYAPPROACHQUANTUM, you2007FidelityDynamicStructure, gu2008FidelitySusceptibilityScaling, camposvenuti2007QuantumCriticalScaling, zanardi2007InformationTheoreticDifferentialGeometrya}. By a direct Taylor expansion and simplification of $\avg{\psi_0(\lambda) | \psi_0(\lambda + \delta \lambda)},$  one finds the differential form,
\begin{multline}
    \chi_F(\lambda) = \avg{\partial_\lambda \psi_0(\lambda) | \partial_\lambda \psi_0(\lambda)} \\
    - \avg{\partial_\lambda  \psi_0(\lambda)| \psi_0(\lambda)} \avg{  \psi_0(\lambda)| \partial_\lambda \psi_0(\lambda)},
\end{multline}
which has a clear relation to the quantum metric tensor, i.e., the real part of the quantum geometric tensor~\cite{camposvenuti2007QuantumCriticalScaling, zanardi2007InformationTheoreticDifferentialGeometrya}. Using non-degenerate perturbation theory, we can also derive the spectral form, 
\begin{align}
    \label{eq:spectral-FS}
    \chi^{H_1}_F(\lambda) &= \sum_{n \neq 0} \frac{ | \avg{\psi_n(\lambda) | H_1 | \psi_0(\lambda)} |^2 }{[E_n(\lambda) - E_0(\lambda)]^2},
\end{align}
where the superscript now indicates which term we treat as driving the perturbation. Finally, we can derive an integral representation~\cite{you2007FidelityDynamicStructure} that is useful in deriving both scaling relations~\cite{gu2008FidelitySusceptibilityScaling, albuquerque2010QuantumCriticalScaling} and in generalizing the Gibbs state for QMC~\cite{albuquerque2010QuantumCriticalScaling, schwandt2009QuantumMonteCarlo, wang2015FidelitySusceptibilityMade}. 

To proceed~\cite{albuquerque2010QuantumCriticalScaling}, we define the imaginary time evolved operator $H_1(\tau) \equiv e^{\tau H(\lambda)} H_1 e^{-\tau H(\lambda)}$ and the connected correlation function,
\begin{multline}
    \label{eq:gs-correlator}
    G^{H_1}_0(\tau ; \lambda) \equiv \theta(\tau) ( \avg{ \psi_0(\lambda) | H_1(\tau) H_1 | \psi_0(\lambda)} \\ 
    - \avg{\psi_0(\lambda) | H_1 | \psi_0(\lambda)}^2),
\end{multline}
for $\theta(\tau)$ the Heaviside step function. Dropping $\lambda$ dependence for simplicity, inserting the resolution of the identity between $H_1$ and $e^{-\tau H}$ and simplifying, we find
\begin{equation}
    G^{H_1}_0(\tau ; \lambda) = \theta(\tau) \sum_{n \neq 0} e^{\tau (E_0 - E_n) } | \avg{\psi_0 | H_1 | \psi_n}|^2.
\end{equation}
Taking the inverse Fourier transform,
\begin{equation}
    \label{eq:fourier-correlator}
    \widetilde{G}_0^{H_1}(\omega) \equiv \int_{\mathds{R}} e^{-i \tau \omega} G^{H_1}_0(\tau) = \sum_{n \neq 0} \frac{ | \avg{\psi_0 | H_1 | \psi_n}|^2  }{ E_n - E_0 + i \omega },
\end{equation}
we see a strong resemblance between this formula and~\cref{eq:spectral-FS}. Namely, we recognize, 
\begin{equation}
    \label{eq:gs-int-FS}
    \chi^{H_1}_{F}(\lambda) = i \frac{\mathrm{d} \widetilde{G}_0^{H_1}(\omega)}{\mathrm{d} \omega} = \int_0^\infty \tau  G^{H_1}_0(\tau ; \lambda) \dtau,
\end{equation}
which is the sought after integral representation. This form relates $\chi_F$ to the dynamical response of the system to $H_1$, revealing its physical content~\cite{you2007FidelityDynamicStructure, gu2008FidelitySusceptibilityScaling, albuquerque2010QuantumCriticalScaling}. 

A similar story holds for ES~\cite{albuquerque2010QuantumCriticalScaling}. Applying non-degenerate perturbation theory, we find the second derivative of energy is given by
\begin{align}
    \label{eq:spectral-ES}
    \chi^{H_1}_E(\lambda) &= 2 \sum_{n \neq 0} \frac{ | \avg{\psi_n(\lambda) | H_1 | \psi_0(\lambda)} |^2 }{E_n(\lambda) - E_0(\lambda)},
\end{align}
which---aside from the pre-factor of $2$---differs from~\cref{eq:spectral-FS} only by the denominator. By inspection of~\cref{eq:spectral-ES} and~\cref{eq:fourier-correlator}, 
\begin{equation}
    \label{eq:gs-int-ES}
    \chi^{H_1}_E(\lambda) = 2 \widetilde{G}^{H_1}_0(\omega = 0) = 2 \int_0^\infty G^{H_1}_0(\tau ; \lambda) \dtau. 
\end{equation}
With these integral expressions now derived, we can compare the ES and FS,~\cref{eq:gs-int-ES,eq:gs-int-FS} more directly. Evidently, the important difference is that the FS involves a linear factor of $\tau$ in the integrand, which arises from the difference of the square in the denominators of~\cref{eq:spectral-ES,eq:spectral-FS}. Intuitively, one might expect that leads to the FS having a sharper peak at quantum critical points, and this holds rigorously by performing a standard scaling analysis~\cite{camposvenuti2007QuantumCriticalScaling, continentino2017quantum, gu2008FidelitySusceptibilityScaling} { and manifests in practical simulations~\cite{albuquerque2010QuantumCriticalScaling,tzeng2008fidelity}}. Considering $L^d$ particles on a $d-$dimensional lattice that undergoes a second-order QPT at critical point $\lambda_c,$ one can derive the scaling relations~\cite{albuquerque2010QuantumCriticalScaling},
\begin{align}
    \label{eq:es-scaling-relation}
    L^{-d} \chi^{H_1}_{E, 0}(\lambda) &\sim | \lambda - \lambda_c|^{-\alpha} \sim L^{(2/\nu) - (d+z)} \\
    \label{eq:fs-scaling-relation}
    L^{-d} \chi^{H_1}_{F,0}(\lambda) &\sim | \lambda - \lambda_c|^{-(2 - d \nu)} \sim L^{(2/\nu) - d}.
\end{align}
Here, the correlation length diverges as $\xi \sim | \lambda - \lambda_c|^{-\nu},$ $z$ is the dynamical critical exponent, and $\alpha$ is defined through the hyperscaling formula $2 - \alpha = \nu (d + z)$. Evidently, ES diverges whenever $\nu < 2 / (d+z)$ and the FS whenever $\nu < 2 / d$, and hence the FS is a stronger indicator of quantum criticality~\cite{albuquerque2010QuantumCriticalScaling}.  

These groundstate concepts have also been successfully extended to mixed states~\cite{zanardi2007InformationTheoreticDifferentialGeometrya,zanardi2007MixedstateFidelityQuantum,camposvenuti2007QuantumCriticalScaling}, particularly thermal states~\cite{zanardi2007BuresMetricThermal,albuquerque2010QuantumCriticalScaling,sirker2010FiniteTemperatureFidelitySusceptibility}. A thermal state of a given Hamiltonian $H$ is given by 
\begin{equation}
    \frac{1}{\mathcal{Z}} e^{-\beta H}, \ \ \ \mathcal{Z} = \Tr[e^{-\beta H}],
\end{equation}
for inverse temperature $\beta = 1 / T$. A thermal average is thus given by $\avg{O} \equiv \Tr[O e^{-\beta H}] / \mathcal{Z}$. We can now introduce a finite temperature connected correlator~\cite{albuquerque2010QuantumCriticalScaling}, 
\begin{equation}
    \label{eq:gibbs-correlator}
     G^{H_1}(\tau ; \lambda; \beta) = \theta(\tau) (\avg{H_1(\tau) H_1} - \avg{H_1}^2),
\end{equation}
with which we can derive the integral forms~\cite{albuquerque2010QuantumCriticalScaling},
\begin{align}
    \label{eq:gibbs-ES}
    \chi^{H_1}_{E}(\lambda; \beta) &= \int_0^\beta  G^{H_1}(\tau ; \lambda; \beta) \dtau \\
    \label{eq:gibbs-FS}
    \chi^{H_1}_F(\lambda; \beta) &= \int_0^{\beta/2} \tau G^{H_1}(\tau ; \lambda; \beta) \dtau.
\end{align}
These are particularly convenient for QMC~\cite{albuquerque2010QuantumCriticalScaling,schwandt2009QuantumMonteCarlo,wang2015FidelitySusceptibilityMade}. The finite temperature connected correlator is periodic about $\beta/2$, so we can equivalently write
\begin{equation}
    \chi^{H_1}_E(\lambda; \beta) = 2 \int_0^{\beta/2} G^{H_1}(\tau; \lambda; \beta) \dtau,
\end{equation}
for which it becomes apparent $ \chi^{H_1}_{E}(\lambda; \beta) \rightarrow  \chi^{H_1}_{E,0}(\lambda)$ and $\chi^{H_1}_{F}(\lambda; \beta) \rightarrow  \chi^{H_1}_{F,0}(\lambda)$ as $\beta \rightarrow \infty$. In this sense, the finite temperature susceptibilities are a strict generalization of the groundstate susceptibilities, so we will henceforth write only $\chi_E$ and $\chi_F$ when our discussion does not rely on specific parameters at play, i.e. choice of driving term, $H_1$. A longer discussion connecting~\cref{eq:gibbs-FS} to a more general notion of the Bures metric divergence at quantum critical points between mixed thermal states~\cite{zanardi2007BuresMetricThermal} is provided in Sec.~II.B of Ref.~\cite{albuquerque2010QuantumCriticalScaling}.

Given a Hamiltonian of the form~\cref{eq:qpt-ham}, we can estimate the ES and FS 
via \cref{eq:gibbs-ES,eq:gibbs-FS} provided we can estimate $\avg{H_1}, \int_0^\beta \avg{H_1(\tau) H_1}$ 
and $\int_0^{\beta/2} \tau \avg{H_1(\tau) H_1}$. 
In PMR-QMC, we can derive estimators for each of these quantities independent of the details of $H$, and hence, 
for arbitrarily complicated driving terms $H_1$. 
This generality is what allowed us to, using a single estimator code, estimate ES and FS 
for the variety of models in \cref{sec:numerics}. 
In the following, we justify this claim by presenting explicit estimators and discussing their properties and computational complexity. \\

\noindent \textbf{PMR-QMC background and notation}
 
A complete justification for our estimators requires substantial background in the permutation matrix representation (PMR) and PMR-QMC which we discuss in \cref{app:pmr-qmc-background} (see also Refs.~\cite{ezzell2025advanced,barash2024QuantumMonteCarlo}). The main derivations are also technical and lengthy and hence contained in \cref{app:best-estimator-derivations}. Here, we provide a summary mainly to define the necessary notation and basic ideas. 

Firstly, the PMR itself is a general matrix decomposition that allows us to write $H(\lambda) = D_0(\lambda) + \sum_{j=1}^M D_j(\lambda) P_j$ for $D_j(\lambda)$ matrices diagonal in the a fixed basis, $\{\ket{z}\}$ and $\{P_j\} \subset G$ for $G$ an Abelian permutation group with \emph{simple transitive group action} on $\{ \ket{z} \}$~\cite{ezzell2025advanced}.  Performing an off-diagonal series expansion of $\mathcal{Z} \equiv \Tr[e^{-\beta H(\lambda)}]$~\cite{hen2018OffdiagonalSeriesExpansion, albash2017OffdiagonalExpansionQuantum},  we can decompose the partition function into a sum of efficiently computable weights of the form~\cite{gupta2020CalculatingDividedDifferences,gupta2020PermutationMatrixRepresentation},
\begin{equation}
    \label{eq:pmr-z}
	\mathcal{Z} = \sum_{z, \Siq} w_{(z, \Siq)} \equiv \sum_{\mathcal{C}} w_{\mathcal{C}},
\end{equation}
for basis states $z$  and  products of permutations $\Siq = P_{\vec{i}_q} \cdot \ldots \cdot P_{\vec{i}_1}$ for $q \geq 0$ and each $P_{\vec{i}_k} \in \{P_j\}$, and $w_{\mathcal{C}}$.
Exploiting the fact that $G$ is Abelian
and only those terms contribute where $\Siq$ evaluates to the identity,
one can devise an automatic generation of a set of QMC update moves such that the resulting Markov chain is ergodic and satisfies detailed balance for essentially any physical Hamiltonian~\cite{barash2024QuantumMonteCarlo,akaturk2024quantum,babakhani2025quantum,babakhani2025fermionic}. This is the reason we are able to study a wide variety of models with a fixed code. 

A PMR-QMC estimator $O_{\mathcal{C}}$ is a function of $\mathcal{C}$ such that
\begin{equation}
    \label{eq:pmr-qmc-estimator}
    \avg{O} = \frac{\sum_{\mathcal{C}} w_{\mathcal{C}} O_{\mathcal{C}}}{\sum_{\mathcal{C}} w_{\mathcal{C}}},  
\end{equation}
which we denote as $O_{\mathcal{C}} \estimates \avg{O}$ to be read ``$O_{\mathcal{C}}$ estimates $\avg{O}$" for convenience.  Some of the present authors derived and explored the properties of a wide variety of estimators in PMR-QMC generally~\cite{ezzell2025advanced}. The basic idea is that since both $H$ and $O$ can always be decomposed in terms of the same group basis $G$,  then one can always derive an expression for $O_{\mathcal{C}}$.  In many cases of interest, such as if $O = H_0$ or $O = H_1$ for any bi-partition of $H$,  then this formal expression leads to a faithful estimator in practical simulation~\cite{ezzell2025advanced}.  This is why our estimators for the correlator,  ES,  and FS are independent of the choice of bi-partition.  

To discuss our estimators more in-depth,  we must introduce some standard PMR-QMC notation.  First,  we denote $\ket{z_k} \equiv S_{\vec{i}_k} \ket{z}$ as the basis state resulting from applying the first $k$ permutations from $\Siq$.  Next,  we denote the diagonal energies $E_{z_k} \equiv \avg{z_k | D_0 | z_k}.$ Writing $D_{\vec{i}_k}$ as the diagonal associated to $P_{\vec{i}_k}$,  we can define the off-diagonal hopping term,  $D_{(z,  \Siq)} \equiv \prod_k \avg{z_k | D_{\vec{i}_k} | z_k}$.  Finally,  we denote $e^{-\beta [E_{z_0}, \ldots, E_{z_q}]}$ as the divided difference of the exponential $e^{-\beta x}$ with respect to the multiset of diagonal energies given~\cite{deboor2005DividedDifferences, mccurdy1984accurate}. These thus defined,  we can write the standard PMR-QMC result
\begin{equation}
    \label{eq:pmr-qmc-weight}
	w_{(z, \Siq)} \equiv D_{(z, \Siq)} e^{-\beta [E_{z_0}, \ldots, E_{z_q}]},
\end{equation}
as the efficiently computable PMR-QMC weight. We remark that the divided difference of the exponential (DDE) is especially important in our estimator derivations, and a detailed explanation of the DDE is given in \cref{sec:div-diff}.  \\

\noindent \textbf{Pure diagonal estimators}

For simplicity, we first assume $H_1 \propto D_0$, i.e., as in the TFIM studied in \cref{sec:numerics}. Using the properties of the PMR-QMC, we can readily derive the estimators,
\begin{align}
    \label{eq:estimator-d0}
    E_z &\estimates \avg{D_0}
\end{align}
and
\begin{align}
    \label{eq:estimator-D0-corr}
    \begin{split}
        \frac{E_{z}}{e^{-\beta[E_{z_0}, \ldots, E_{z_q}]}} \sum_{j=0}^q E_{z_j} e^{-\tau [E_{z_j}, \ldots, E_{z_q}]}  \\ 
        \times \ e^{-(\beta - \tau)[E_{z_0}, \ldots, E_{z_j}]} &\estimates \avg{D_0(\tau) D_0}
    \end{split}
\end{align}
as we show in \cref{app:best-estimator-derivations}. These estimators involve only sums and products of (1) diagonal energies $E_{z_j}$ and (2) divided differences of these diagonal energies---the same quantities that appear in the PMR-QMC weight, \cref{eq:pmr-qmc-weight}. As such, they are both readily computable in a PMR-QMC simulation in $O(1)$ and $O(q^2)$ time, respectively (see Ref.~\cite{ezzell2025advanced} and \cref{app-sub:complexity-in-pmr-qmc} to understand the time complexity). Utilizing both of these estimators, we can estimate $G^{D_0}(\tau; \lambda; \beta)$ via Eq.~\eqref{eq:gibbs-correlator}.

Furthermore, the only non-trivial $\tau$ dependence in these estimators is contained in the divided difference expressions of \cref{eq:estimator-D0-corr}. Amazingly (see Ref.~\cite{zeng2025inequalities} and Lemma 1 in \cref{appsub:es-in-quadratic-time}), we can readily derive the exact integral relation
\begin{equation}
    \int_0^\beta e^{-\tau [x_{j+1}, \ldots, x_q]}  e^{-(\beta-\tau) [x_0, \ldots, x_j]} \dtau = - e^{-\beta [x_0, \ldots, x_q]}
\end{equation}
Combined with \cref{eq:estimator-D0-corr}, we have thus derived
\begin{multline}
    \label{eq:estimator-diag-es-int}
    \frac{- E_{z}}{e^{-\beta[E_{z_0}, \ldots, E_{z_q}]}} \sum_{j=0}^q E_{z_j} e^{-\beta[E_{z_0}, \ldots, E_{z_q}, E_{z_j}]} \\
    \estimates \int_0^\beta \avg{D_0(\tau) D_0} \dtau 
\end{multline}
which gives an estimator for $\chi_E^{D_0}$ (see Eq.~\eqref{eq:gibbs-ES}) when combined with \cref{eq:estimator-d0}.  As shown in \cref{app-sub:fs-in-quartic}, specifically Lemma 3, a similar but more complicated derivation yields,
   \begin{multline}
    \label{eq:estimator-diag-fs-int}
    \frac{E_z}{e^{-\beta[E_{z_0}, \ldots, E_{z_q}]}}\sum_{j=0}^q E_{z_j}  \sum_{r=0}^j e^{-\frac{\beta}{2}[E_{z_0}, \ldots, E_{z_r}]}  \\
    \times \sum_{m=j}^q e^{-\frac{\beta}{2}[E_{z_r}, \ldots, E_{z_q}, E_{z_j}, E_{z_m}]}
    \estimates \int_0^{\beta/2} \tau \avg{ D_0(\tau)D_0 \dtau }
\end{multline}
which gives an estimator for $\chi_F^{D_0}$ via Eq.~\eqref{eq:gibbs-FS}. These integrated quantities and hence the ES and FS can be evaluated in $O(q^2)$ and $O(q^4)$ time, respectively.

Several comments are in order. Firstly, our derivations for ES and FS involve first deriving an estimator for the correlator. We then extend this estimator to the ES and FS by an analytical and exact evaluation of the divided difference integrals, i.e., no numerical integration is required. Secondly, these specific estimators are for $H_1 \propto D_0$, but we can readily derive similar estimators for arbitrary $H_1$ (see ``General case estimators"). Even in the general case, the complexity to estimate the correlator, ES, and FS are $O(q^2), O(q^2)$ and $O(q^4)$, respectively. Thirdly, the complexity for the ES and FS can be improved in certain special cases as we discuss next. \\

\noindent \textbf{Improved ES and FS estimators}

In the previous section, we derived $O(q^2)$ and $O(q^4)$ estimators for the ES and FS for $H_1 \propto D_0$. These estimators can be improved to $O(1)$ and $O(q^3)$ time in this case (as well as if $H_1 \propto H - D_0)$, which we state as theorems. 

\begin{theorem}[An $O(1)$ ES estimator]
    \label{thrm:constant-ES-estimator}
    The ES for $H_1 \propto D_0$ or $H_1 \propto H - D_0$ can be estimated in $O(1)$ time via  
    \begin{multline}
        \label{eq:ES-estimator-new}
        \beta E_{z_0}^2 + q E_{z_0} + \mathbf{1}_{q>0} \beta E_{z_0} \frac{ e^{-\beta [E_{z_0}, \ldots, E_{z_{q-1}}] } }{ e^{-\beta [E_{z_0}, \ldots, E_{z_q}]} } \\ 
        \estimates \int_0^\beta \avg{ D_0(\tau) D_0 } \dtau,
    \end{multline}
   for $\mathbf{1}_{q>0}$ zero if $q \leq 0$ and one if $q > 0$. 
\end{theorem}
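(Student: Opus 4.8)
The plan is to collapse the $O(q)$ inner sum of the already-established estimator \cref{eq:estimator-diag-es-int} into the three $O(1)$ terms claimed here, using only model-independent properties of the divided difference of the exponential (DDE) together with the defining constraint that contributing configurations satisfy $\Siq = \ident$. Since \cref{eq:estimator-diag-es-int} is proven to estimate $\int_0^\beta \avg{D_0(\tau) D_0}\dtau$ in \cref{app:best-estimator-derivations}, it suffices to show that the expression in \cref{eq:ES-estimator-new} equals $-E_{z_0}$ times that sum divided by $e^{-\beta[E_{z_0},\ldots,E_{z_q}]}$ on every configuration of nonzero weight (recall $E_z = E_{z_0}$).

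First I would recognize the inner sum $\sum_{j=0}^q E_{z_j}\, e^{-\beta[E_{z_0},\ldots,E_{z_q},E_{z_j}]}$ as a single derivative of a uniformly scaled DDE. By the node-repetition rule $\partial_{x_i} e^{-\beta[x_0,\ldots,x_q]} = e^{-\beta[x_0,\ldots,x_i,x_i,\ldots,x_q]}$ together with the permutation symmetry of divided differences in their nodes, the sum equals $\frac{\mathrm{d}}{\mathrm{d}c}\,e^{-\beta[cE_{z_0},\ldots,cE_{z_q}]}\big|_{c=1}$. The next step is the homogeneity identity $e^{-\beta[cE_{z_0},\ldots,cE_{z_q}]} = c^{-q}\,e^{-c\beta[E_{z_0},\ldots,E_{z_q}]}$, which trades the uniform scaling of the nodes for a rescaling of $\beta$. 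Differentiating at $c=1$ then expresses the sum through $\partial_\beta e^{-\beta[E_{z_0},\ldots,E_{z_q}]}$ plus a $-q$ contribution from the prefactor.

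To evaluate the $\beta$-derivative in closed form, I would apply the Leibniz rule for divided differences to $x\,e^{-\beta x}$, giving
\begin{equation}
\partial_\beta e^{-\beta[x_0,\ldots,x_q]} = -x_0\,e^{-\beta[x_0,\ldots,x_q]} - e^{-\beta[x_1,\ldots,x_q]}.
\end{equation}
Substituting and multiplying by $-E_{z_0}$ collapses the sum to $\beta E_{z_0}^2 + q E_{z_0} + \beta E_{z_0}\, e^{-\beta[E_{z_1},\ldots,E_{z_q}]}/e^{-\beta[E_{z_0},\ldots,E_{z_q}]}$, which differs from \cref{eq:ES-estimator-new} only in that the last DDE drops the first node rather than the last.

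The final and key step removes this discrepancy. Because only configurations with $\Siq = \ident$ carry nonzero weight, the cyclic identity $\ket{z_q} = \Siq\ket{z} = \ket{z_0}$ forces $E_{z_q} = E_{z_0}$. Hence the multisets $\{E_{z_1},\ldots,E_{z_q}\}$ and $\{E_{z_0},\ldots,E_{z_{q-1}}\}$ coincide, and by symmetry of the DDE the two node-dropped divided differences are equal, yielding exactly \cref{eq:ES-estimator-new}; the $q=0$ case produces no third term, matching the indicator $\mathbf{1}_{q>0}$. The main obstacle is this last observation: recognizing that the $\Siq = \ident$ constraint---not any property of the specific Hamiltonian---is what allows the full $O(q)$ sum to be replaced by a single DDE ratio, so that the numerator $e^{-\beta[E_{z_0},\ldots,E_{z_{q-1}}]}$ and denominator are both already maintained by the simulation and the estimator is genuinely $O(1)$. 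Care is also needed in marshalling the three DDE identities (node repetition, homogeneity, and Leibniz) and in treating the $q=0$ boundary.
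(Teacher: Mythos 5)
Your proposal is correct and follows essentially the same route as the paper's proof: it reduces the inner sum of \cref{eq:estimator-diag-es-int} to the operator identity $\sum_{j} x_j e^{-\beta [x_0,\ldots,x_q,x_j]} = (\beta\,\partial_\beta - q)\, e^{-\beta [x_0,\ldots,x_q]}$ (\cref{lemm:weighted-repeatedarg-sum}), evaluates the $\beta$-derivative with the same Leibniz-rule identity (\cref{prop:parametric-derivative-1}), and then uses periodicity $E_{z_q} = E_{z_0}$ of valid $\Siq = \ident$ configurations to trade $e^{-\beta [E_{z_1},\ldots,E_{z_q}]}$ for the $O(1)$-accessible $e^{-\beta [E_{z_0},\ldots,E_{z_{q-1}}]}$, exactly as in \cref{appsub:es-in-constant-time}. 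The only divergence is in how the summation lemma is established: where the paper argues by induction, you derive it directly from the node-repetition rule $\partial_{x_i} e^{-\beta[x_0,\ldots,x_q]} = e^{-\beta[x_0,\ldots,x_i,x_i,\ldots,x_q]}$ combined with the rescaling relation (\cref{prop:rescaling}) via an Euler-operator/chain-rule argument---an equally valid and arguably cleaner proof of the same identity.
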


\begin{theorem}[An $O(q^3)$ estimator for FS]
    \label{thrm:cubic-FS-estimator}
    The FS for $H_1 \propto D_0$ or $H_1 \propto H - D_0$ can be computed in $O(q^3)$ time via 
    \begin{multline}
        \label{eq:FS-estimator-simp1}
        \mathbf{1}_{q=0} \frac{(\beta E_{z_0})^2}{8} + \mathbf{1}_{q>0} \frac{E_{z_0}}{e^{-\beta [E_{z_0}, \ldots, E_{z_q}]}} \sum_{r = 0}^q e^{-\frac{\beta}{2} [E_{z_0}, \ldots, E_{z_r}]}  \\
        \times \Bigg( \frac{\beta^2}{8} E_{z_r} e^{-\frac{\beta}{2} [E_{z_r}, \ldots, E_{z_q}]} +  \frac{\beta^2}{8} e^{-\frac{\beta}{2} [E_{z_{r+1}}, \ldots, E_{z_q}]} - \\
         \sum_{i=r+1}^q (i - r)  e^{-\frac{\beta}{2} [E_{z_r}, \ldots, E_{z_q}, E_{z_i}]}  \Bigg) \estimates \int_0^{\beta/2} \tau \avg{ D_0(\tau) D_0 } \dtau,
    \end{multline}
   for $\mathbf{1}_{q>0}$ zero if $q \leq 0$ and one if $q > 0$. 
\end{theorem}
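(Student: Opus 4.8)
The goal is twofold: to show that the displayed expression is a valid estimator for $\int_0^{\beta/2}\tau\avg{D_0(\tau)D_0}\dtau$, and to verify that it is evaluable in $O(q^3)$ time. The plan is to parallel the proof of \cref{thrm:constant-ES-estimator}. Rather than integrate the correlator estimator \cref{eq:estimator-D0-corr} naively against $\int_0^{\beta/2}\tau\,(\cdot)\,\dtau$—which yields the $O(q^4)$ triple sum of \cref{eq:estimator-diag-fs-int}—I would carry out the node-sums analytically using the divided-difference calculus of the exponential, reducing two of the three sums to closed form. Equivalently, one may start directly from \cref{eq:estimator-diag-fs-int} and algebraically collapse it to the stated form.

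The calculus I would lean on consists of four exact identities for divided differences of $e^{-\beta x}$: the repeated-node rule $e^{-\beta[x_0,\ldots,x_q,x_j]} = \partial_{x_j} e^{-\beta[x_0,\ldots,x_q]}$; the scale recursion $\partial_\beta e^{-\beta[x_0,\ldots,x_q]} = -x_0 e^{-\beta[x_0,\ldots,x_q]} - e^{-\beta[x_1,\ldots,x_q]}$ together with its mirror peeling the node $x_q$; the homogeneity relation $\sum_{j=0}^q x_j\,\partial_{x_j} e^{-\beta[x_0,\ldots,x_q]} = -q\,e^{-\beta[x_0,\ldots,x_q]} + \beta\,\partial_\beta e^{-\beta[x_0,\ldots,x_q]}$, obtained by differentiating $e^{-\beta[cx_0,\ldots,cx_q]} = c^{-q} e^{-\beta c[x_0,\ldots,x_q]}$ at $c=1$; and the Leibniz/semigroup splitting $e^{-(s+t)[x_0,\ldots,x_q]} = \sum_{k=0}^q e^{-s[x_0,\ldots,x_k]}\,e^{-t[x_k,\ldots,x_q]}$, alongside the convolution integral identity already used for the ES. Crucially, on the support of the PMR weights the configuration is a closed walk, $\ket{z_q} = \Siq\ket{z} = \ket{z}$, so $E_{z_q} = E_{z_0}$; this is exactly the ingredient that lets cross-terms collapse in \cref{thrm:constant-ES-estimator}, and I expect it to play the same role here.

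The new feature relative to the ES is the linear weight $\tau$. I would produce it by differentiating the $\beta/2$-split form of the semigroup identity with respect to an auxiliary scale parameter (equivalently, one integration by parts against $\dtau$): this is precisely the mechanism that converts a divided difference into the linear index weight $(i-r)$ and generates the two $\tfrac{\beta^2}{8}$ boundary contributions seen in \cref{eq:FS-estimator-simp1}. The $q=0$ configuration, for which $D_0(\tau)D_0 = D_0^2$ is $\tau$-independent, reduces to the one-line evaluation $\int_0^{\beta/2}\tau E_{z_0}^2\,\dtau = (\beta E_{z_0})^2/8$, matching the $\mathbf{1}_{q=0}$ term. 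I expect the main obstacle to be the bookkeeping of the collapse from the triple sum over $(j,r,m)$ to a single outer sum over $r$ with one inner sum over $i$: one must evaluate the $\sum_j$ and $\sum_m$ in closed form while correctly tracking which energies become confluent (repeated) nodes, and assemble the endpoint terms at $j\in\{0,q\}$ and $m\in\{j,q\}$. Verifying the confluent cases—where an algebraic identity must be replaced by its derivative limit when two diagonal energies coincide—is the delicate step I would check most carefully.

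For the complexity claim, I would precompute, via the standard divided-difference recursion table, the prefix differences $e^{-\frac{\beta}{2}[E_{z_0},\ldots,E_{z_r}]}$ and the suffix differences $e^{-\frac{\beta}{2}[E_{z_r},\ldots,E_{z_q}]}$ for all $r$ in $O(q^2)$ time and storage. The outer sum over $r$ then has $O(q)$ terms; for each $r$ the inner sum $\sum_{i=r+1}^q (i-r)\,e^{-\frac{\beta}{2}[E_{z_r},\ldots,E_{z_q},E_{z_i}]}$ has $O(q)$ terms, each an augmented single-repeated-node divided difference obtainable in $O(q)$ from the precomputed suffix table—giving $O(q^2)$ per $r$ and $O(q^3)$ overall, an improvement over the $O(q^4)$ cost of \cref{eq:estimator-diag-fs-int}.
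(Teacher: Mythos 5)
Your high-level route is the same as the paper's: begin from the quartic form \cref{eq:estimator-diag-fs-int}, reorder so that $r$ is the outermost index (cf.\ \cref{eq:FS-estimator-orig-reorder}), collapse the inner double sum analytically via divided-difference calculus, treat $q=0$ separately, and count $O(q)$ values of $r$ at $O(q^2)$ work each. Your toolkit also largely matches: your ``scale recursion'' is exactly \cref{prop:parametric-derivative-1}, and your homogeneity identity is precisely how the single-sum lemma \cref{lemm:weighted-repeatedarg-sum} (which powers the ES case, \cref{thrm:constant-ES-estimator}) can be obtained. Your complexity accounting via prefix/suffix divided-difference tables is sound and agrees with the paper's.

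The genuine gap is that the identity doing all the work in the FS case is never stated, let alone proved. The paper's proof rests on \cref{lemm:weighted-repeated-double-lessthan-sum-simplification}, namely $\sum_{i\le j}\, x_i\, e^{-\tau[x_0,\dots,x_q,x_i,x_j]} = \bigl(-\tfrac{\tau^2}{2}\partial_\tau - \sum_{i=0}^q i\,\partial_{x_i}\bigr) e^{-\tau[x_0,\dots,x_q]}$, established by induction on the divided-difference recursion and then made computable through \cref{prop:parametric-derivative-1} (yielding \cref{eq:computing-weighted-repeated-double-lessthan-sum}); this is the sole source of the $(i-r)$ weights and the two $\beta^2/8$ boundary terms in \cref{eq:FS-estimator-simp1}. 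Your proposed mechanisms do not obviously reach it: homogeneity/Euler-operator arguments are symmetric under relabeling of nodes, whereas the needed sum weights the \emph{smaller} index of an ordered pair $i\le j$ and produces the label-dependent operator $\sum_i i\,\partial_{x_i}$. Symmetrization only gives access to $\sum_{i\le j}(x_i+x_j)(\cdots)$; isolating $\sum_{i\le j}x_i(\cdots)$ requires the antisymmetric part, which in turn needs the basic node recursion $(x_i-x_j)\,e^{-\tau[\dots,x_i,x_j]} = e^{-\tau[\dots,x_i]} - e^{-\tau[\dots,x_j]}$ — a tool absent from your list and your sketch. ``Differentiating the $\beta/2$-split semigroup identity in an auxiliary scale'' is a plausible heuristic, but it is exactly the bookkeeping you defer, i.e., the theorem's actual content. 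Two lesser points: your confluent-node worry is moot, since every identity here is stated for multisets via the contour-integral definition \cref{eq:contour-int-dd} and holds verbatim with repeated nodes, no derivative limits required; and the theorem also asserts the $H_1 \propto H - D_0$ case, which you never address — the paper reduces it to the diagonal case by linearity, $\avg{\Gamma(\tau)\Gamma} = \avg{H^2} + \avg{D_0(\tau)D_0} - 2\avg{D_0 H}$, together with $O(1)$ static estimators (\cref{app-sub:off-diagonal-case-from-diagonal}).
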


Our proof, given in complete detail in \cref{app:best-estimator-derivations}, amounts to stating and proving several novel divided difference relations that can be used to simplify \cref{eq:estimator-diag-es-int} and \cref{eq:estimator-diag-fs-int}. While this naively only simplifies the $H_1 \propto D_0$ case, the $H_1 \propto H - D_0$ case follows by linearity and the derivation of a few $O(1)$ estimators for average energy $\avg{H}$ and similar quantities. 

Stated abstractly, the constraint $H_1 \propto D_0$ or $H_1 \propto H - D_0$ may seem restrictive, but recall that this applies to the TFIM and XXZ models studied in \cref{sec:numerics}. More generally, many well studied models in condensed matter physics have such simple driving terms that are either purely diagonal or purely off-diagonal. As such, we have derived a constant time ES estimator for a wide variety of models. \\

\noindent \textbf{General case estimators}

Generally, $H_1$ may contain both diagonal and off-diagonal terms, i.e., \cref{eq:prl-model} or the random ensemble studied in \cref{sec:numerics}. This is not a fundamental issue for PMR-QMC, as whenever $H$ is in PMR form with $R$ term, $H = \sum_{j} D_j P_j,$ we can always write $H_1 = \sum_{j} \Lambda_j D_j P_j$ for $\Lambda_j$ diagonal matrices by the group structure of the PMR (see \cref{app:pmr-form-of-h1}). Observables of this type are said to be in canonical PMR form, and can be estimated correctly~\cite{ezzell2025advanced}. Namely, we can readily estimate $\avg{H_1}$ and  $\avg{H_1(\tau) H_1}$ using expressions that generalize \cref{eq:estimator-d0,eq:estimator-D0-corr}.

For example, we can readily derive the estimator~\cite{ezzell2025advanced},
\begin{equation}
    \label{eq:laml-dl-pl-estimator}
    \delta_{P_l}^{(q)} \frac{\Lambda_l(z) e^{-\beta [E_{z_0}, \ldots, E_{z_{q-1}}]} }{e^{-\beta [E_{z_0}, \ldots, E_{z_q}]}} \estimates \avg{\Lambda_l D_l P_l},
\end{equation}
from which we can estimate $\avg{H_1}$ by linearity. Here, $\delta_{P_l}^{(q)}$ equals $1$ if the $q^{\text{th}}$ or final permutation in $\Siq$ is $P_l$ and is $0$ otherwise, and we have used the shorthand $\Lambda_l(z) \equiv \avg{z | \Lambda_l | z}.$ As for $\avg{H_1(\tau) H_1}$, we can also derive an estimator for $\avg{(\Lambda_k D_k P_k)(\tau) \Lambda_l D_l P_l}$ that---up to small technicalities---contains the same $\tau$ dependence as \cref{eq:estimator-D0-corr} used to estimate $\avg{D_0(\tau) D_0}.$ Hence, even for arbitrary $H_1$, we can extend our estimator for $\avg{H_1(\tau) H_1}$ to estimators for $\chi_E^{H_1}$ and $\chi_F^{H_1}$ by our novel divided difference integral relations alone. 

Specific expressions for these estimators and more discussion are provided in \cref{app:general-driving-term} for interested readers. The functional difference between these general estimators and those for $H_1 \propto D_0$ or $H_1 \propto H - D_0$ is that we estimate things term-by-term and exploit linearity. As such, the complexity estimation also depends on the number of non-trivial terms contained in $H_1$. Assuming there are $R$ non-trivial terms, the complexity of estimating the ES and FS in the most general case is then $O(R^2 q^2)$ and $O(R^2 q^4)$, respectively. In practice, the average complexity is much better, as we do not need to compute the full estimators whenever various $O(1)$ delta function conditions are not satisfied (see \cref{eq:laml-dl-pl-estimator}). Consequently, we were able to, for example, study the random ensemble with hundreds of terms successfully in \cref{sec:numerics}.  \\

\section{Comparison to prior estimators}

We compare our work to similar efforts in Refs.~\cite{schwandt2009QuantumMonteCarlo,albuquerque2010QuantumCriticalScaling,wang2015FidelitySusceptibilityMade} which heavily inspired us. The principal advantage of our formulation is its direct applicability to arbitrary Hamiltonians and driving terms without modification, enabling, for example, the study of random ensembles in \cref{fig:prl-rot-obs}. For a more direct comparison, we henceforth focus on conventional models for which established (finite temperature) QMC algorithms exist and where the driving term is purely off-diagonal, $H_1 \propto H - D_0$ (i.e.,as in as in XXZ). In this case, we have derived $O(1)$ ES and $O(q^3)$ FS estimators, respectively. 

In Ref.~\cite{wang2015FidelitySusceptibilityMade}, a general FS estimator applicable to multiple QMC frameworks is introduced, which improves upon the prior stochastic series expansion (SSE) estimator derived in Refs.~\cite{schwandt2009QuantumMonteCarlo,albuquerque2010QuantumCriticalScaling}. Given the close structural relationship between SSE and PMR-QMC~\cite{albash2017OffdiagonalExpansionQuantum,hen2018OffdiagonalSeriesExpansion,gupta2020PermutationMatrixRepresentation}, we compare our estimator to the SSE estimator derived in Ref.~\cite{wang2015FidelitySusceptibilityMade}. For an SSE operator string truncation length $M$, the estimator (see Fig.~3 of \cite{wang2015FidelitySusceptibilityMade}) requires $O(M)$ computational effort in the worst case.
Since this is as fast as the $O(M)$ estimator for ES derived in Ref.~\cite{albuquerque2010QuantumCriticalScaling}, one need not consider measuring ES in SSE. Although $M$ is not directly comparable to $q$ in PMR-QMC—each PMR-QMC weight represents an infinite resummation of SSE terms through the divided-difference treatment of diagonal contributions~\cite{albash2017OffdiagonalExpansionQuantum,gupta2020PermutationMatrixRepresentation,hen2018OffdiagonalSeriesExpansion}—one may expect the $O(M)$ estimator to be asymptotically faster than our $O(q^3)$ FS estimator.

The practical advantage of an $O(M)$ SSE estimator over our $O(q^3)$ PMR-QMC estimator is not necessarily decisive, however. First, the total simulation time in QMC is not always dominated by estimator evaluation. As illustrated in \cref{fig:maintext-timing-2}, simulation time is often governed by the Monte Carlo updates themselves rather than by measurement routines, and similar conditions were used to obtain the ES and FS data in \cref{fig:prx-replicate-fig5}. In such cases, differences in estimator complexity are largely inconsequential. Second, PMR-QMC has been observed to equilibrate substantially faster than SSE for disordered or glassy systems~\cite{albash2017OffdiagonalExpansionQuantum,gupta2020PermutationMatrixRepresentation}, again diminishing the relevance of raw estimator complexity. 

We therefore expect a meaningful advantage of the $O(M)$ SSE estimator only in regimes where the total computational cost of PMR-QMC is indeed dominated by our $O(q^3)$ estimator (we explore this regime of PMR-QMC empirically in \cref{app:add-empirical-resources}). Yet even in these cases, one can alternatively compute the ES, for which we have derived an $O(1)$ estimator. Unless one is studying a system where the FS exhibits a pronounced peak in the absence of a corresponding ES feature~\cite{tzeng2008fidelity}, it is unclear which approach offers superior overall efficiency. Accordingly, we conclude that the primary advantage of the formulation in Ref.~\cite{wang2015FidelitySusceptibilityMade} lies in its generality across diverse QMC methodologies.

\vspace{-0.23cm}
\section*{Code and Data Availability}
Simulation code, data, and analysis scripts are open source~\cite{ezzell2025code} and available either through Zenodo,
\href{https://zenodo.org/records/17786518}{zenodo.org/records/17786518}, or GitHub, \href{https://github.com/naezzell/advmeaPMRQMC/tree/v3.0.0}{github.com/naezzell/advmeaPMRQMC/tree/v3.0.0}. 

\section*{Acknowledgements} 
The authors acknowledge the Center for Advanced Research Computing (CARC) at the
University of Southern California for providing the computing resources
used in this work. This material is based upon work supported by the Defense Advanced Research Projects Agency (DARPA) under Contract No. HR001122C0063. All material, except scientific articles or papers published in scientific journals, must, in addition to any notices or disclaimers by the Contractor, also contain the following disclaimer: Any opinions, findings and conclusions or recommendations expressed in this material are those of the author(s) and do not necessarily reflect the views of the Defense Advanced Research Projects Agency (DARPA). N.E. was partially supported by the U.S. Department of Energy
(DOE) Computational Science Graduate Fellowship under
Award No. DE-SC0020347 and the ARO MURI grant W911NF-22-S-000 during parts of this work. 

\vspace{-0.25cm}
\section*{Author Contributions}
I.H. and N.E. conceived the research problem.
N.E., L.B., and I.H. jointly derived the analytical results for the estimators.
N.E. and L.B. developed the software, performed the numerical experiments, and analyzed the data.
All authors wrote and revised the manuscript and approved the final version.

\section*{Competing Interests}
The authors declare no competing interests.

\bibliography{refs}

\clearpage
\onecolumngrid
\appendix
\setcounter{secnumdepth}{2}

\section{Review of technical PMR-QMC background}
\label{app:pmr-qmc-background}

\subsection{Permutation matrix representation}
\label{app:pmr}

The permutation matrix representation (PMR)~\cite{gupta2020elucidating} is a general matrix decomposition of square matrices~\cite{ezzell2025advanced} with several applications in condensed matter physics and quantum information~\cite{gupta2020PermutationMatrixRepresentation, gupta2020elucidating, hen2021determining, akaturk2024quantum, barash2024QuantumMonteCarlo, chen2021quantum, kalev2021integral, kalev2021quantum, kalev2024feynman,babakhani2025quantum,babakhani2025fermionic}. Functionally, we say a square matrix $H$ is in PMR form provided it is written $H = D_0 + \sum_j D_j P_j$ for each $D_j$ a diagonal matrix and $P_j$ permutations with no fixed points~\cite{gupta2020elucidating}. More rigorously~\cite{ezzell2025advanced}, this is guaranteed provided $\{P_j\} \subset G$ for $G$ a permutation group with \emph{simply transitive group action} on computational basis states, $\{\ket{z}\}$. The most important consequence of this form is that products of permutations $P_i P_j$ have no fixed points unless $P_i P_j = \mathds{1}$. A through definition of the PMR with several examples is contained in the work by some of the present authors~\cite{ezzell2025advanced}. 

\subsection{Off-diagonal series expansion}
\label{app:pmr-ode}
The off-diagonal series expansion (OSE), first introduced in Refs.~\cite{hen2018OffdiagonalSeriesExpansion,albash2017OffdiagonalExpansionQuantum}, is the starting point of the PMR-QMC~\cite{gupta2020elucidating}. Alongside basic group theory, it is also the tool by which we can readily derive general estimators for arbitrary observables in PMR-QMC~\cite{ezzell2025advanced}. Let $f$ by any analytic function and $H = D_0 + \sum_j D_j P_j$ be a Hamiltonian in PMR form. The OSE, then, is a convenient expansion of $\Tr[f(H)]$ into a ``classical term" and an infinite series of ``quantum, non-commuting" corrections, which will become clear shortly. 

A derivation of the OSE of $\Tr[f(H)]$ begins by a direct Taylor series expansion of $f(H)$ about 0. Doing so yields, 
\begin{align}
    \Tr[f(H)] &= \sum_z \sum_{n=0}^\infty \frac{f^{(n)}(0)}{n!} \avg{z | ( D_0 + \sum_j D_j P_j )^n | z} \\
    &= \sum_z \sum_{n=0}^\infty \sum_{\{C_{\vec{i}_n}\}} \frac{f^{(n)}(0)}{n!} \avg{ z | C_{\vec{i}_n} | z},
\end{align}
where in the second line we sum over all operator sequences consisting of $n$ products of $D_0$ and $D_j P_j$ terms, which we denote $\{C_{\vec{i}_n}\}$. The multi-index $\vec{i}_n \equiv (i_1, \ldots, i_n)$ denotes the ordered sequence, i.e. $\vec{i}_3 = (3, 0, 1)$ indicates the sequence $C_{\vec{i}_3} = (D_1 P_1) D_0 (D_3 P_3)$, read from right to left.  More generally, each $i_k \in \{0, \ldots, M\}$ denotes a single term from the PMR form $H = \sum_{j=0}^M D_j P_j$.

For convenience, we can separate the contributions from diagonal operators $D_j$ from off-diagonal permutations, $P_j$ which yields~\cite{albash2017OffdiagonalExpansionQuantum, hen2018OffdiagonalSeriesExpansion} the following complicated expression,
\begin{equation}
    \label{eq:initial-tr-expansion}
    \Tr[f(H)] = \sum_z \sum_{q=0}^{\infty} \sum_{\Siq} D_{(z, \Siq)} \avg{ z | \Siq | z} \left( \sum_{n=q}^{\infty} \frac{f^{(n)}(0)}{n!} \sum_{\sum_i k_i = n - q} E^{k_0}_{z_0} \cdot \ldots \cdot E^{k_q}_{z_q} \right),
\end{equation}
which is justified in prior works~\cite{albash2017OffdiagonalExpansionQuantum, hen2018OffdiagonalSeriesExpansion}. We now briefly summarize the notation, which will be used throughout. Firstly, $\Siq = P_{\vec{i}_q} \cdot \ldots \cdot P_{\vec{i}_1}$ denotes a product over $q$ permutations, each taken from $\{P_j\}_{j=1}^M$, i.e., the multiset indices are now $\vec{i}_j \in \{1, \ldots, M\}$. Next, we denote $\ket{z_0} \equiv \ket{z}$ and $\ket{z_k} \equiv P_{\vec{i}_k} \cdot \ldots \cdot P_{\vec{i}_1} \ket{z}$. This allows us to define the ``diagonal energies" as $E_{z_k} \equiv \avg{z_k | H | z_k} = \avg{z_k | D_0 | z_k}$ (recall $\ket{z_k}$ is a basis for $D_0$ not of $H$) and the off-diagonal ``hopping strengths" $D_{(z, \Siq)} \equiv \prod_{k=1}^q \avg{z_k | D_{\vec{i}_k} | z_k}$. The sum over $\sum_{\sum k_i}$ is again the set of weak partitions of $n - q$ into $q+1$ integers.

This complicated expression can be greatly simplified in two ways. Firstly, by the no-fixed points property of the PMR~\cite{ezzell2025advanced}, $\avg{z | \Siq | z} = 1$ if and only if $\Siq = \mathds{1}$ and is zero otherwise. Secondly, as first noticed in earlier works~\cite{albash2017OffdiagonalExpansionQuantum, hen2018OffdiagonalSeriesExpansion}, the diagonal term in parenthesis can be identified with the so-called divided difference of $f$~\cite{deboor2005DividedDifferences,mccurdy1984accurate}. Together, we can write 
\begin{equation}
    \label{eq:off-diagonal-series-expansion}
    \Tr[f(H)] = \sum_z \sum_{q=0}^{\infty} \sum_{\Siq = \mathds{1}} D_{(z, \Siq)}  f[E_{z_0}, \ldots, E_{z_q}],
\end{equation}
for $f[E_{z_0}, \ldots, E_{z_q}]$ is the divided difference of $f$ with respect to the multiset of diagonal energies $E_{z_0}, \ldots, E_{z_q}$. The divided difference plays a pivotal role in our derivation, so we review it in detail in the next section. For now, we simply remark that this form is precisely what we refer to as the off-diagonal series expansion of $\Tr[f(H)]$. As the name suggests, it is a perturbative series in $q$, the size of the off-diagonal ``quantum dimension." Put differently, when $q = 0$, this is the expansion of $\Tr[f(D_0)]$, and the remaining terms correct for the off-diagonal, non-commuting contribution, as claimed in the beginning. 

\subsection{Divided differences}
\label{sec:div-diff}

We briefly review the technical details of the divided difference,  inspired by the treatment in Refs.~\cite{albash2017OffdiagonalExpansionQuantum, hen2018OffdiagonalSeriesExpansion, ezzell2025advanced}.  The divided difference of any holomorphic function $f(x)$ can be defined over the multiset $[x_0, \ldots, x_q]$ using a contour integral~\cite{mccurdy1984accurate, deboor2005DividedDifferences},
\begin{equation}
    \label{eq:contour-int-dd}
    f[x_0, \ldots, x_q] \equiv \frac{1}{2\pi i} \oint\limits_{\Gamma} \frac{f(x)}{\prod_{i=0}^q(x - x_i)} \mathrm{d}x,
\end{equation}
for $\Gamma$ a positively oriented contour enclosing all the $x_i$'s. Several elementary properties utilized in PMR-QMC~\cite{albash2017OffdiagonalExpansionQuantum, gupta2020PermutationMatrixRepresentation} follow directly from this integral representation, or by invoking Cauchy's residue theorem. For example, $f[x_0, \ldots, x_q]$ is invariant to permutations of arguments, the definition reduces to Taylor expansion weights when arguments are repeated,
\begin{equation}
    f[x_0, \ldots, x_q] = f^{(q)}(x) / q!, \ \ \ x_0 = x_1 \ldots = x_q = x,
\end{equation}
and whenever each $x_i$ is distinct, we find
\begin{equation}
    f[x_0, \ldots, x_q] = \sum_{i=0}^q \frac{f(x_i)}{ \prod_{k \neq i} (x_i - x_k)},
\end{equation}
the starting definition in Refs.~\cite{albash2017OffdiagonalExpansionQuantum, hen2018OffdiagonalSeriesExpansion, gupta2020PermutationMatrixRepresentation}. Each of these divided difference definitions can be shown to satisfy the Leibniz rule~\cite{deboor2005DividedDifferences},
\begin{equation}
    \label{eq:leibniz-rule}
    (f \cdot g)[x_0, \ldots, x_q] = \sum_{j=0}^q f[x_0, \ldots, x_j] g[x_j, \ldots, x_q] = \sum_{j=0}^q g[x_0, \ldots, x_j] f[x_j, \ldots, x_q]
\end{equation}
which is particularly important in our derivations. 

Yet another useful way to view the divided difference for our work is to derive its power series expansion,
\begin{align}
    \label{eq:dd-as-series-expansion}
    f[x_0, \ldots, x_q] = \sum_{m=0}^\infty \frac{f^{(q+m)}(0)}{(q+m)!} \sum_{\sum k_j = m} \prod_{j=0}^q x_j^{k_j},
\end{align}
where the notation $\sum_{\sum k_j = m}$ is a shorthand introduced in Refs.~\cite{hen2018OffdiagonalSeriesExpansion, albash2017OffdiagonalExpansionQuantum} which represents a sum over all \emph{weak integer partitions} of $m$ into $q+1$ parts. More explicitly, it is an enumeration over all vectors $\vec{k}$ in the set $\{\vec{k} = (k_0, \ldots, k_q) : k_i \in \mathds{N}_0, \sum_{j=0}^q k_j = m\}$, where $\mathds{N}_0$ is the natural numbers including zero. To derive the series expansion, we first Taylor expand $f(x)$ inside the contour integral, 
\begin{align}
    f[x_0, \ldots, x_q] &= \sum_{n=0}^\infty \frac{f^{(n)}(0)}{n!} \left( \frac{1}{2 \pi i } \oint_{\Gamma} \frac{x^n}{\prod_{i=0}^q (x - x_i)} \right) 
    = \sum_{n=0}^\infty \frac{f^{(n)}(0)}{n!}  [x_0, \ldots, x_q]^n ,
\end{align}
where we have introduced the shorthand $[x_0, \ldots, x_q]^k \equiv p_k[x_0, \ldots, x_q]$ for $p_k(x) = x^k$ as introduced in Refs.~\cite{hen2018OffdiagonalSeriesExpansion, albash2017OffdiagonalExpansionQuantum}. The divided difference of a polynomial has a closed form expression most easily written with the change of variables $n \rightarrow q + m$,
\begin{equation}
    \label{eq:poly-div-diff}
    [x_0, \ldots, x_q]^{q+m} = 
    \begin{cases}
    0 & m < 0 \\
    1 & m = 0 \\
    \sum_{\sum_{k_j} = m} \prod_{j=0}^q x_j^{k_j} & m > 0,
    \end{cases}
\end{equation}
as noted in the same notation in Refs.~\cite{hen2018OffdiagonalSeriesExpansion, albash2017OffdiagonalExpansionQuantum} and derived with a different notation in Ref.~\cite{deboor2005DividedDifferences}. Together, these two observations yield~\cref{eq:dd-as-series-expansion}.

As with previous PMR works~\cite{albash2017OffdiagonalExpansionQuantum, gupta2020PermutationMatrixRepresentation}, we are also especially interested in the divided difference of the exponential (DDE) where we utilize the shorthand notation $e^{t [x_0, \ldots, x_q]} \equiv f[x_0, \ldots, x_q]$ for $f(x) = e^{t x}$. Replacing $f^{(q+m)}(0)$ with $t^{q+m}$ gives the power series expansion of the DDE, as derived in the appendices of Refs.~\cite{albash2017OffdiagonalExpansionQuantum, hen2018OffdiagonalSeriesExpansion}. Replacing the variable $x \rightarrow \alpha x$ in~\cref{eq:contour-int-dd}, we find the rescaling relation,
\begin{equation}
    \label{eq:rescaling-relation}
    \alpha^q e^{t [\alpha x_0, \ldots, \alpha x_q]} = e^{\alpha t [x_0, \ldots, x_q]},
\end{equation}
which is useful in numerical schemes~\cite{gupta2020CalculatingDividedDifferences, barash2022calculating} and in computing the Laplace transform. In particular, combining~\cref{eq:rescaling-relation} with Eq. (11) of Ref.~\cite{kunz1965inverse} with $P_m(x) = 1$, we find
\begin{equation}
    \label{eq:laplace-of-dd}
    \lap{e^{\alpha t [x_0, \ldots, x_q]}} = \frac{\alpha^q}{ \prod_{j=0}^q (s - \alpha x_j)},
\end{equation}
where $\mathcal{L}$ denotes the Laplace transform from $t \rightarrow s$. One can alternatively derive~\cref{eq:laplace-of-dd} by directly performing the integration to the contour integral definition in~\cref{eq:contour-int-dd}, e.g. by Taylor expanding $e^{\alpha t x}$ and re-summing term-by-term, legitimate by the uniform convergence of the exponential.

\subsection{Computational complexity of estimators in PMR-QMC}
\label{app-sub:complexity-in-pmr-qmc}

Throughout this work, we refer to complexity of evaluating PMR-QMC estimators. For interested readers, more details and examples of tracking complexity in PMR-QMC estimation are contained in Sec.~VIII of Ref.~\cite{ezzell2025advanced}. Here, we give a brief summary of how time is accounted. 

When evaluating an estimator in PMR-QMC, we have $O(1)$ access to each $E_z$ and to each 
$\{\exp{-\beta [E_{z_0}]}$, $\exp{-\beta [E_{z_0}, E_{z_1}]}$, $\ldots$, $\exp{-\beta [E_{z_0}, \ldots, E_{z_{q-1}}]}$, $\exp{-\beta [E_z, \ldots, E_{z_q}]}\}$~\cite{gupta2020PermutationMatrixRepresentation, barash2024QuantumMonteCarlo}, but evaluating the divided difference with the addition or removal of an input has an $O(q)$ cost~\cite{gupta2020CalculatingDividedDifferences}.
For example, $E_z$ is an O$(1)$ estimator of $\avg{D_0}$ by Eq.~\eqref{eq:estimator-d0} and Eq.~\eqref{eq:estimator-D0-corr} is an $O(q^2)$ estimator of $\avg{D_0(\tau) D_0}$. Generally, we  expect $\avg{q} \propto \beta$ for many systems~\cite{albash2017OffdiagonalExpansionQuantum}, but we also expect the total simulation time to be determined by the complexity of QMC updates rather than that of the estimators for a wide range of temperatures. 
{
\section{Code usage and empirical resource scaling}
\label{app:code-and-empirical-resource-scaling}

The open-source \verb^C++^ implementation accompanying this work~\cite{ezzell2025code} extends previously developed PMR-QMC frameworks capable of simulating arbitrary spin-$1/2$ Hamiltonians~\cite{barash2024QuantumMonteCarlo, barash2024PmrQmcCode}. The software has been designed to facilitate reproducibility and ease of use. In this section, we summarize the basic usage workflow and provide representative empirical benchmarks of computational resource requirements.

\subsection{Code usage}
\label{app:code-usage}

Comprehensive usage instructions and example input files are provided in the accompanying \verb^README^ file~\cite{ezzell2025code}. Here, we provide a specific simple example to illustrate the ease with which one may use our code to estimate ES and FS for a user-defined Hamiltonian.

First, we define the Hamiltonian as a human-readable Pauli string in a file called \verb^H.txt^. For concreteness,
\begin{verbatim}
-1.00 1 Z 2 Z 3 Z
-0.68 1 X
-0.73 1 Z 2 X
0.82 1 Y 3 Y
\end{verbatim}
defines \verb^H.txt^ as a sum of random Pauli strings. 
To specify simulation parameters and what we intended to measure, we edit a file called \verb^parameters.hpp^. For example,
\begin{verbatim}
#define beta 2.0 
#define steps 100000 
\end{verbatim}
sets $\beta = 2.0$ and the number of MC steps to $10^5$. We can also specify which measurements to make, i.e., 
\begin{verbatim}
#define MEASURE_HDIAG
#define MEASURE_HDIAG_CORR
#define MEASURE_HDIAG_EINT
#define MEASURE_HDIAG_FINT
\end{verbatim}
informs our code to measure $\avg{D_0}$, $\avg{D_0(\tau) D_0}$ (for $\tau$ specified in the parameter file), $\int_0^\beta \avg{D_0(\tau) D_0} \dtau$ and $\int_0^{\beta/2} \tau \avg{D_0(\tau) D_0} \dtau$. Note that if our code detects that one intends to estimate both \verb^HDIAG^ and \verb^HDIAG_EINT^, it automatically compute the ES via Eq.~\eqref{eq:gibbs-ES} via jackknife binning analysis~\cite{berg2004introduction}. Similarly, the FS is estimated via Eq.~\cref{eq:gibbs-FS} provided  \verb^HDIAG^ and \verb^HDIAG_FINT^ are measured. 

We then compile and execute our \verb^C++^ code, specifying whether we intend to perform parallelization with MPI or not, and if so, how many threads (approximately the number of CPU cores) to use. Compiling and executing the above example on our laptop using six threads, $\nthreads = 6$, yields a simulation summary text file whose final lines include,
\begin{verbatim}

Number of MPI processes: 6

Collecting statistics and finalizing the calculation

Total number of MC updates = 600000
Total mean(q) = 3.365
Total max(q) = 14
Total mean(sgn(W)) = 0.674666667
Total std.dev.(sgn(W)) = 0.00939374389
Total of observable #1: H_{diag}
Total mean(O) = -0.800449775
Total std.dev.(O) = 0.0133078895
Total of observable #2: measure_Hdiag_corr
Total mean(O) = 0.727002872
Total std.dev.(O) = 0.0109234895
Total of observable #3: measure_Hdiag_Eint
Total mean(O) = 1.58395933
Total std.dev.(O) = 0.015689912
Total of observable #4: measure_Hdiag_Fint
Total mean(O) = 0.377212308
Total std.dev.(O) = 0.0048761732
Total of derived observable: diagonal energy susceptibility
Total mean(O) = 0.302519643
Total std.dev.(O) = 0.0330613162
Total of derived observable: diagonal fidelity susceptibility
Total mean(O) = 0.0568523869
Total std.dev.(O) = 0.00782769997
Total elapsed cpu time = 0.894016 seconds

Wall-clock time = 0.174136 seconds
\end{verbatim}

which gives the estimated value of each of the above observables, estimated over six MPI threads. 

\subsection{Simulation parameter inputs and output statistics}
Out software supports changing the following basic simulation parameters,
\begin{enumerate}
    \item $\beta$ -- the inverse temperature
    \item $\Tsteps$ -- number of QMC moves used to thermalize/equilibrate (i.e., before any measurements are taken)
    \item $\steps$ -- number of QMC moves during which measurements are taken
    \item $\stepsPer$ -- number of QMC moves between measurements (i.e., controls the sparsity of measurements)
    \item $\nbins$ -- number of bins for the error estimation via binning analysis (see Appendix B of Ref.~\cite{barash2024QuantumMonteCarlo})
    \item $\nthreads$ -- number of MPI threads (approximately the number of CPU cores) used in lazy parallel accumulation, 
\end{enumerate}
all of which affect the expected total wall-clock time required for desired observable estimates to converge to thermal expectation values. Of particular note, $\stepsPer$ can control whether the QMC simulation time is dominated by QMC update costs or QMC measurements costs, as we will return to in the next section. In the above example, we chose $\beta = 2$, $\Tsteps = 0$, $\steps = 10^5$, $\stepsPer = 100$, $\nbins = 100$, and $\nthreads = 6$. 

We also briefly mention some of the helpful outputs of our algorithm that are not just observable estimates. For all simulations, the statistics,
\begin{itemize}
    \item $\qavg$ -- the empirical average value of $q$ throughout  simulation
    \item $\qmax$ -- the maximum instantaneous value of $q$ throughout simulation
    \item $\avgsign$ -- the average sign of the PMR-QMC weight throughout simulation
    \item $\sigsign$ -- the standard deviation of $\text{sgn}$ across the simulation 
    \item $\sigma_O$ -- one standard deviation error estimate for measured observable $O$, obtained by binning analysis~\cite{barash2024QuantumMonteCarlo}
\end{itemize}
are returned automatically. For the above example, we find $\qavg = 3.365$, $\qmax = 14$, $\avgsign = 0.675$, and $\sigsign = 0.009$. Provided at least 5 MPI threads are used, our simulation also automatically performs autocorrelation/thermalization testing. The testing principle is quite simple: each parallel simulation runs with a different random seed and is hence uncorrelated from one another after a sufficient number of  burn-in steps, $\Tsteps$. One can thus compare the statistical error obtained from any single run to that of the error estimated across $\nthreads$ runs to check for autocorrelation. 

Namely, if the two error estimates are comparable in value, this suggests that measurement blocks are effectively uncorrelated and thus that autocorrelation time is shorter than the block length. If the two estimates differ, albeit not substantially, this typically suggests that the autocorrelation time exceeds the block length, yet remains much shorter than the timescale of the entire simulation. Yet when the error from a single run significantly underestimates the error across runs (e.g., by a factor of 2 or more), this typically indicates that the autocorrelation time is too large, and longer simulations are required to ensure reliable estimates. In the above example, the thermalization test for $\avg{D_0}$ reads
\begin{verbatim}
Observable #1: H_{diag}, mean of std.dev.(O) = 0.0353, std.dev. of mean(O) = 0.0409: test passed
\end{verbatim}
which shows that the test passes. In fact, thermalization tests for all observables passed.

\subsection{Additional empirical resource estimates}
\label{app:add-empirical-resources}

We provide additional empirical resource estimates and information, supplementing our brief summary in \cref{sec:numerics}. In \cref{tab:app_simulation_parameters}, we summarize in greater detail (compared to \cref{tab:simulation_parameters}) the simulation  input parameters, resources, and output statistics relevant to the simulation used to generate data for the main figures in our paper. As stated in the main text, 200MB of memory was more than enough for our simulations, so we focus on simulation time as the important resource estimate. 

\begin{table*}[htp!]
\centering
\begin{tabular}{lcccccc}
\toprule
Figure & $\Tsteps$ & $\steps$ & $\stepsPer$ & $\nthreads$  & $\walltime$ & $\qavg$ \\
\midrule
1 & $2\times10^7$ & $10^6$ & $10$ & 5 & 10\,s -- 1\,min & $0.10{-}2.78$  \\
2 & $10^6{-}10^7$ & $10^7{-}5\times10^8$ & $10^3$ & 100 & 15\,s -- 14.4\,h & $0.33{-}119.87$ \\
3 & $10^6{-}10^7$ & $3\times10^7{-}2\times10^9$ & $5\times10^3{-}4\times10^4$ & 200--400 & 3.6\,h -- 8.8\,days & $23.58{-}328.83$ \\
4 & $10^5$ & $10^6$ & $10$ & 5 & 4\,s -- 8.2\,min & $0.34{-}3.31$ \\
\bottomrule
\end{tabular}
\caption{ Simulation parameters, wall-clock times, and average value of $q$ corresponding to Figs.~1, 2, 3, and 4. Ranges reflect the minimum and maximum value for any listed parameter across entire dataset. All simulations required less than 200\,MB of memory per core. Except for the mild sign problem encountered in the model of Fig.~4 (with worse average sign for any parameter studied given by $\avgsign = 0.85(4)$), all other models were sign-problem free under PMR-QMC.}
\label{tab:app_simulation_parameters}
\end{table*}

This table should be interpreted as a coarse estimate of simulation time (and not absolute or optimal times) needed to study the critical properties of these models for several reasons. First, we did not attempt to optimize these parameters for speed, but instead simply choose values large enough for correctness. In addition, we measured all observables of interest at the same time, so times are reflective of the slowest thermalization observable. Third, we did not keep hardware constant, as we ran simulations on a high performance computing (HPC) cluster with a queuing system. 

Nevertheless, certain informative relative scalings can be observed from our simulations. Visually, this is perhaps most appealing seen in \cref{fig:app-prl-timing-study}. Here, is is seen that criticality indicators ES and FS follow a similar trend to total simulation time and $\qavg$ when studying Eq.~\eqref{eq:prl-model} with a fixed set of simulation parameters and CPU hardware. We can produce an analogous plot for the ensemble of random rotations of this model, \cref{fig:app-prl-rot-timing-study}. Like the simple two qubit case, simulation time and $\qavg$ follow a dependence on $\lambda$ analogous to ES and FS. Yet, in addition, this ensemble can have a mild sign problem, which manifests particularly at $\lambda = -1$ (where the worse average sign is $0.85(4)$). This sign problem asymmetry at the two peaks results in a corresponding asymmetry in the simulation time. In other words, total simulation time---even for fixed simulation parameters and $\qavg$---can be adversely affected by the sign problem, as one would expect. 

\begin{figure*}[htp!]
    \centering
    \includegraphics[width=0.33\linewidth]{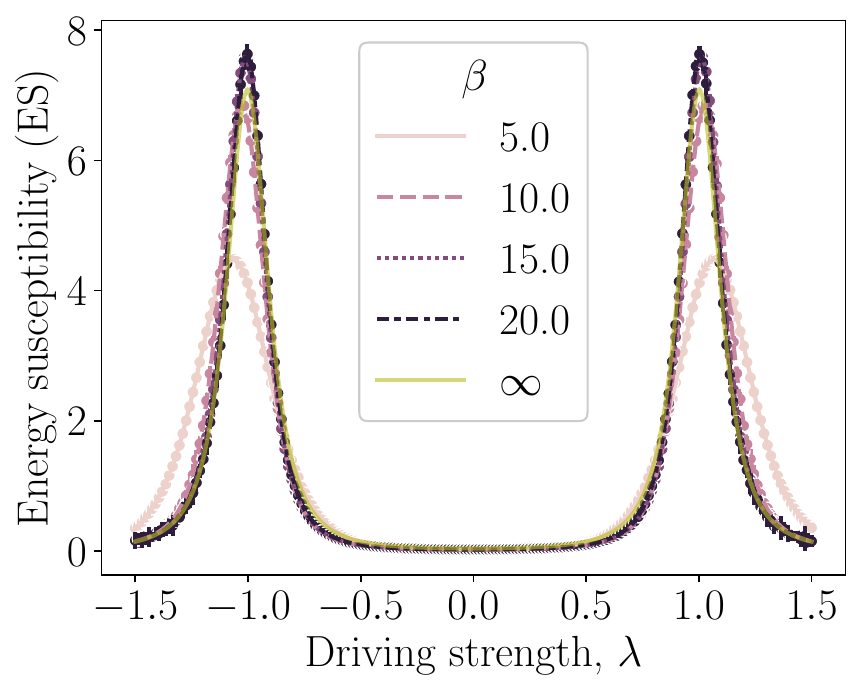}
    \includegraphics[width=0.33\linewidth]{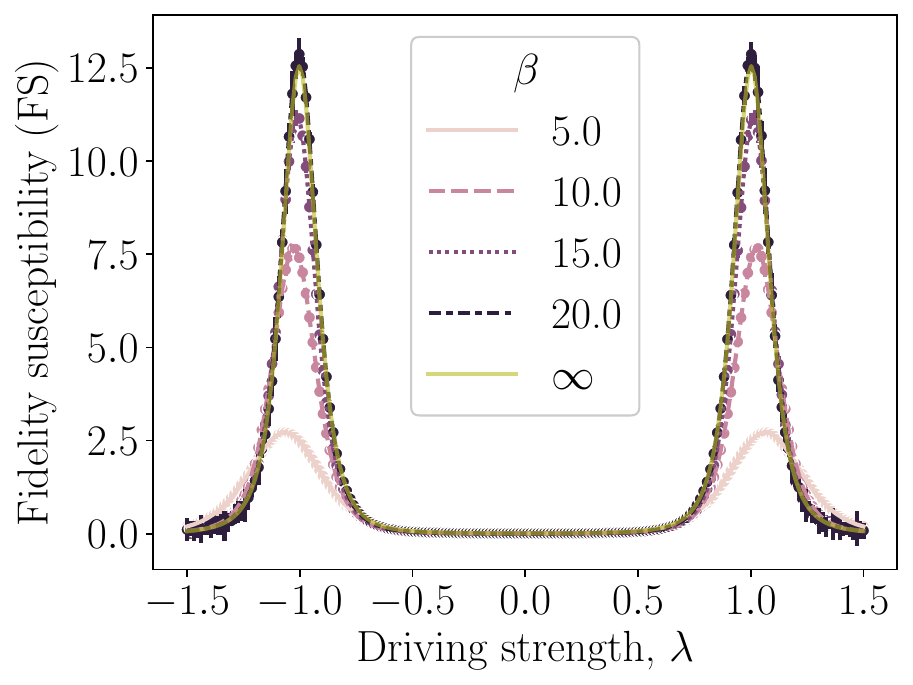}
    \includegraphics[width=0.33\linewidth]{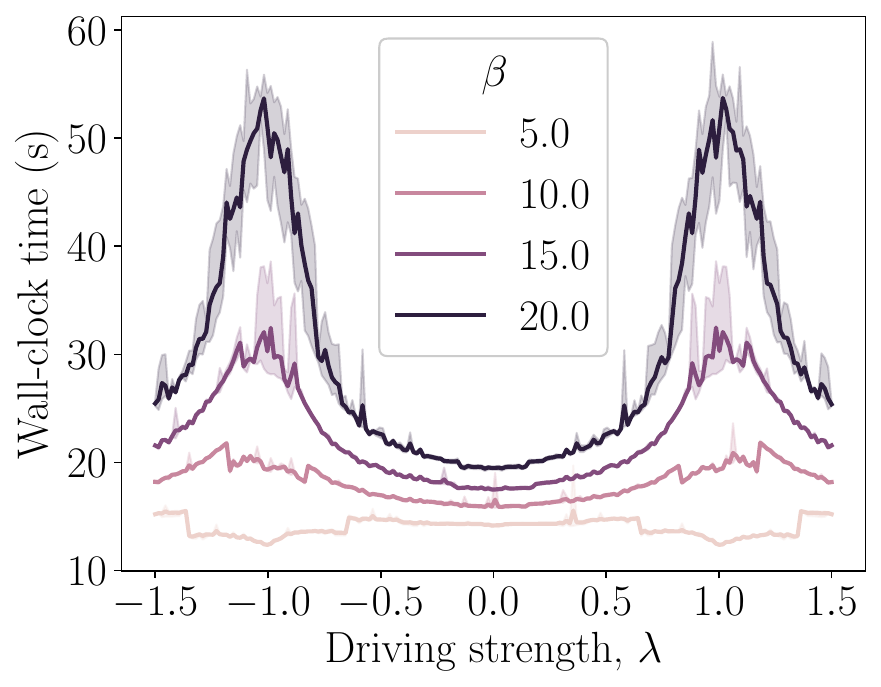}
    \includegraphics[width=0.33\linewidth]{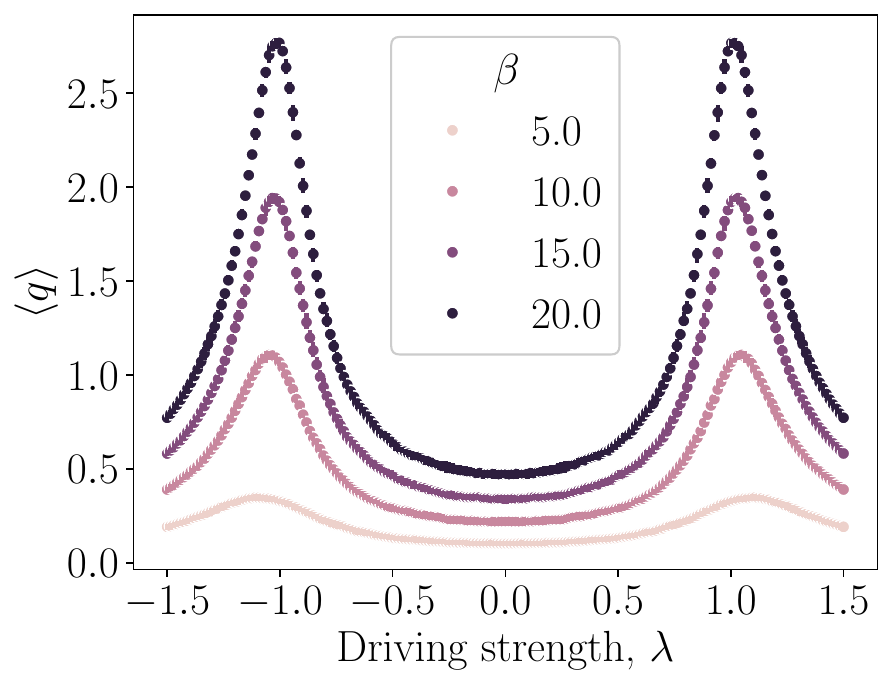}
    \caption{We again show the data plotted in Figs.~1 and 5 with the addition of an ES and $\qavg$ plot in studying Eq.~\eqref{eq:prl-model} Simulations were performed on 2.3 GHz Intel i9 CPU with input parameters as given in \cref{tab:app_simulation_parameters}.}
    \label{fig:app-prl-timing-study}
\end{figure*}

\begin{figure*}[htp!]
    \centering
    \includegraphics[width=0.32\linewidth]{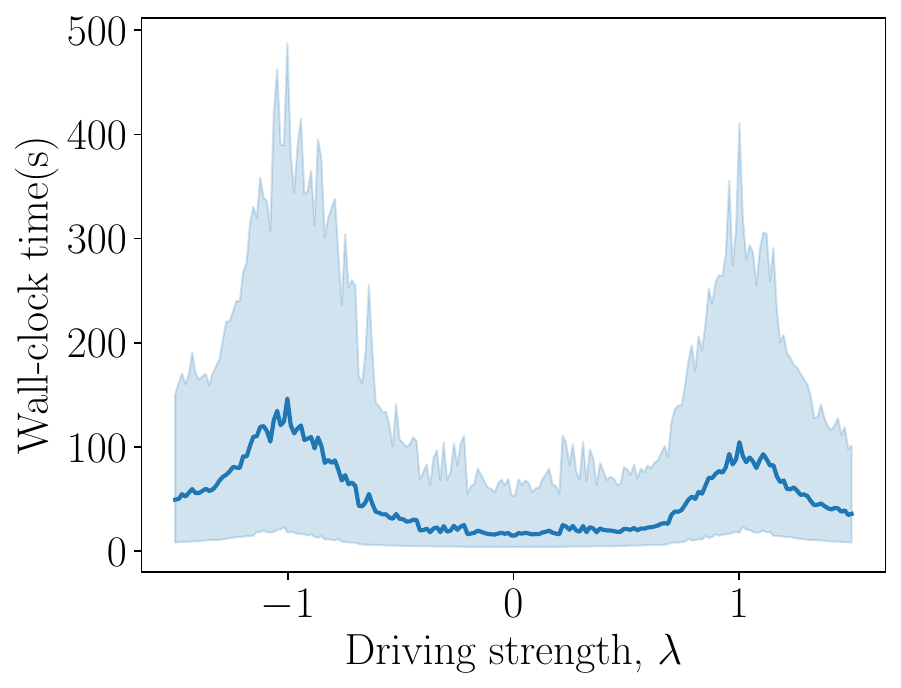}
    \includegraphics[width=0.32\linewidth]{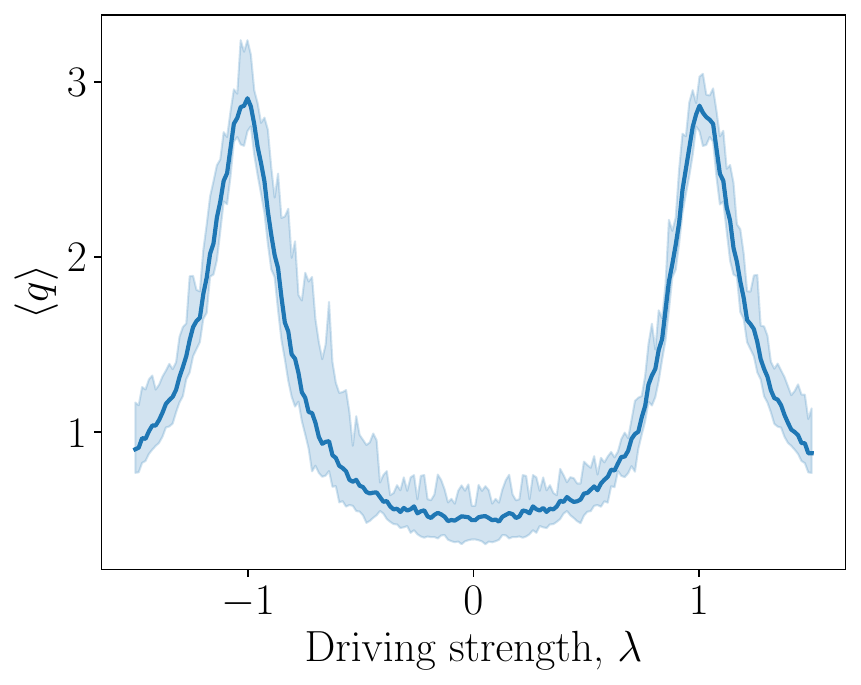}
    \includegraphics[width=0.32\linewidth]{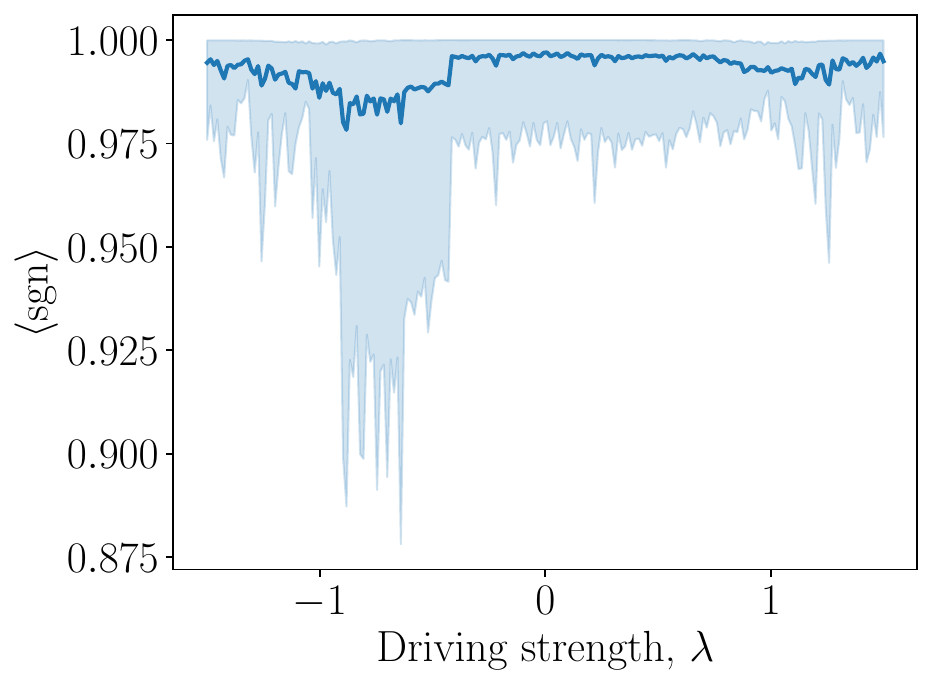}
    \caption{A timing study of the randomly rotated, post-selected ensemble studied in Fig.~4 and described in \cref{app:random-U}.}
    \label{fig:app-prl-rot-timing-study}
\end{figure*}

\begin{figure*}[htp!]
    \centering
    \includegraphics[width=0.32\linewidth]{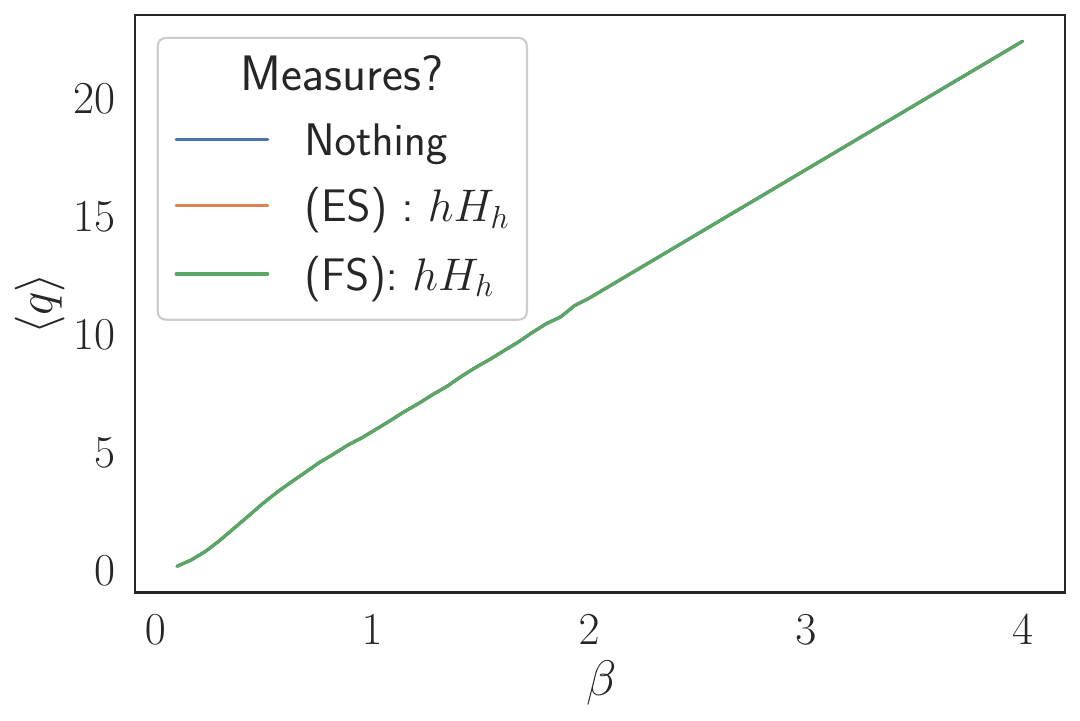}
    \includegraphics[width=0.32\linewidth]{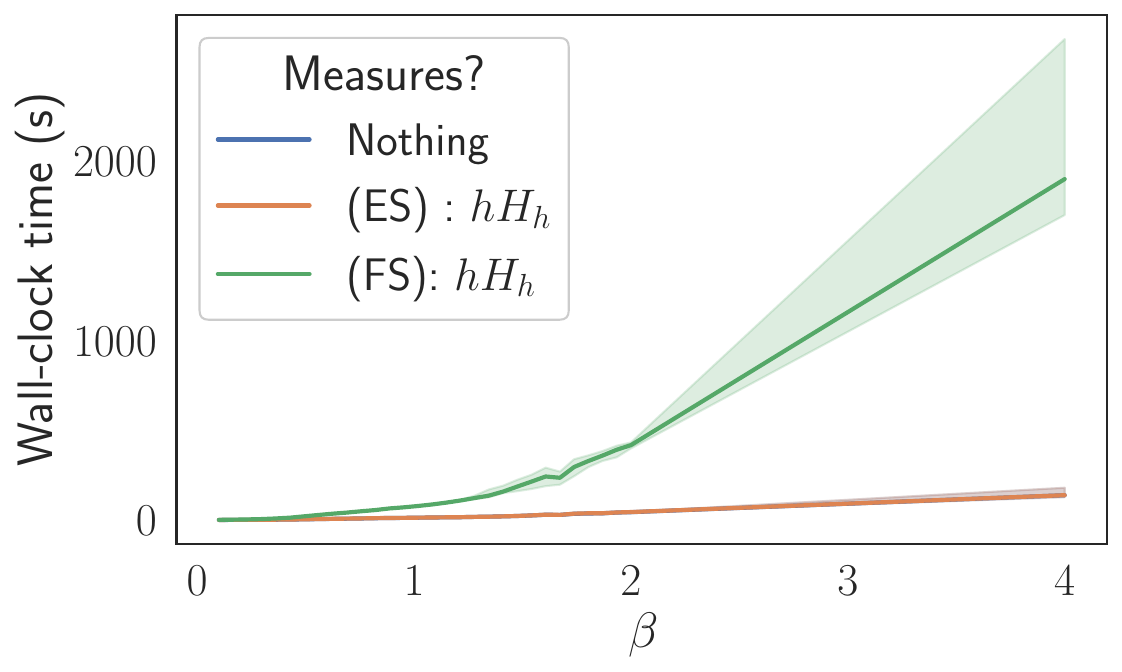}
    \includegraphics[width=0.32\linewidth]{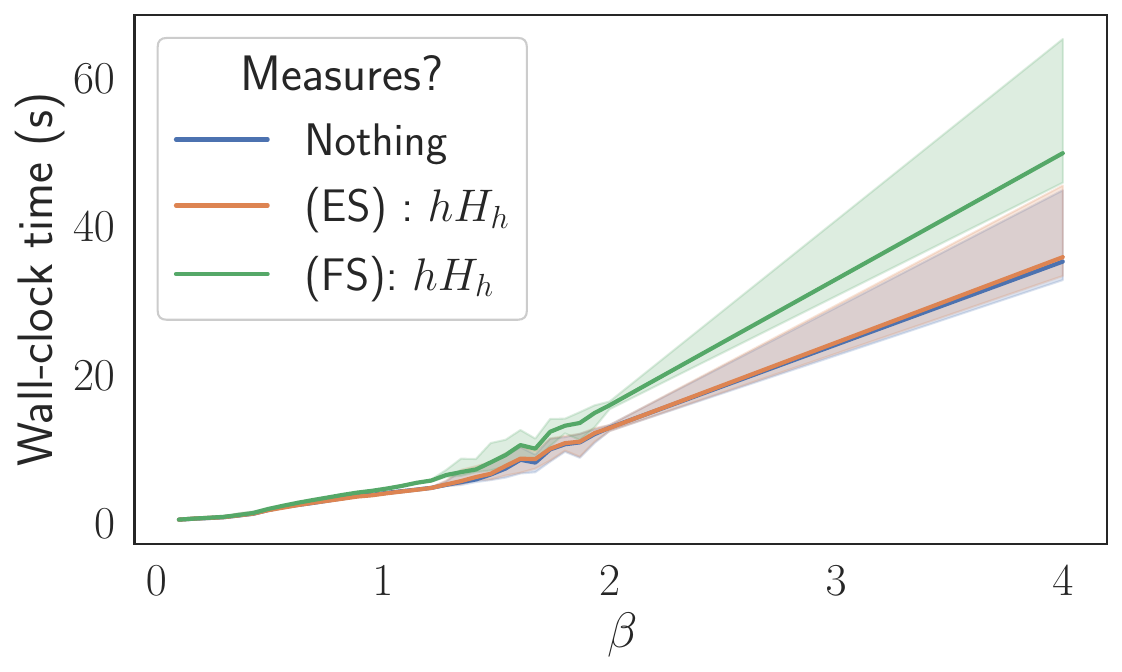}
    \caption{We re-simulate the $4 \times 4$ TFIM model studied in Fig.~2. Simulations were performed on $\nthreads = 6$ threads in parallel on 2.3 GHz Intel i9 CPUs with $\Tsteps = 0$, $\steps = 10^6$ and $\nbins = 100$. RNG seeds were fixed for each thread across different $\beta$. The three curves correspond to three measurement strategies: measure nothing, measure only (diagonal) ES, or measure only (diagonal) FS. (Left) We show $\qavg$ as a function of $\beta$. (b) We set $\stepsPer = 1$ so that we sample estimators densely at each QMC step/update. (c) We set $\stepsPer = 100$ so that we sample estimators sparsely every 100 QMC steps.}
    \label{fig:re-investigating-tfim-timing}
\end{figure*}

\begin{figure}
    \centering
    \includegraphics[width=0.5\linewidth]{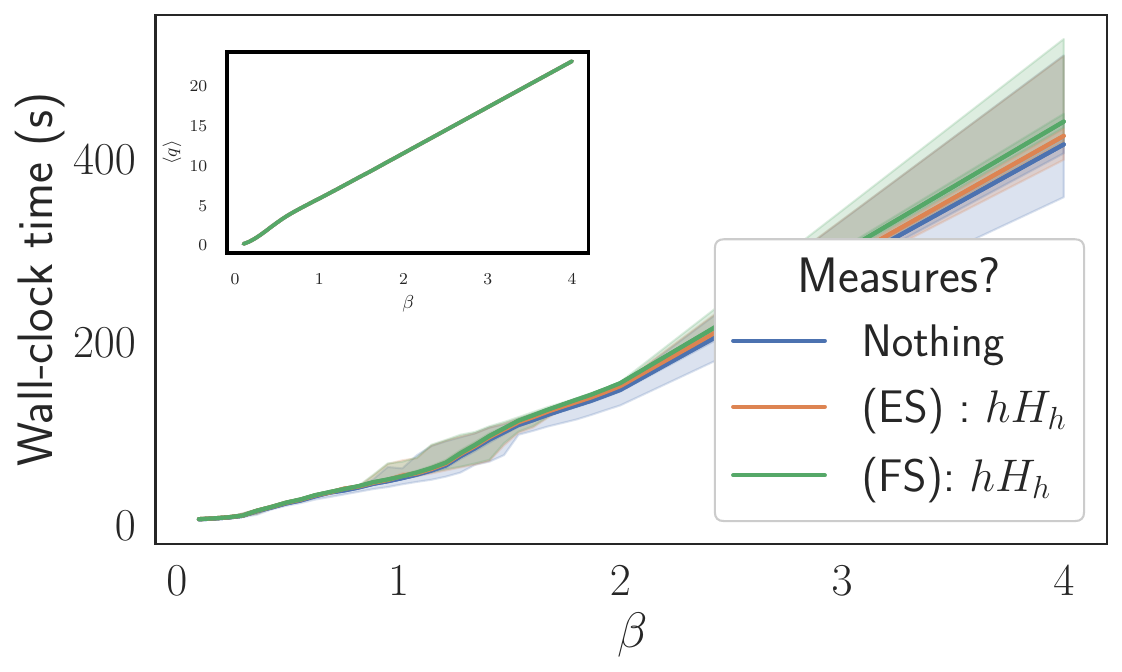}
    \caption{We repeat the numerical experiment on \cref{fig:re-investigating-tfim-timing} but with parameters that more closely match those actually used to generate Fig.~2. Namely, we choose $\Tsteps = 10^6$, $\steps = 10^7$, $\stepsPer = 10^3$, $\nbins = 100$, and $\nthreads = 50$. Using this many cores, we again use the USC HPC cluster but this time with a fixed 2.6\,GHz Intel Xeon E5-2640 CPU for all simulations. Wall-clock time scales as a modest power law in beta, $\sim \beta^{1.23}$ for all three curves, consistent with Fig. 3 of Ref.~\cite{barash2024QuantumMonteCarlo}.}
    \label{fig:closer-to-real-tfim-fig3-timing}
\end{figure}

More generally, the instantaneous complexity of QMC updates and measurements in our algorithm depends on the dynamic quantum dimension parameter $q$, and we know $\qavg$ itself scales with the difficulty of simulation. For example, when studying random quantum spin glasses with $N$ spins and a transverse field term of strength $\Gamma$, Ref.~\cite{albash2017OffdiagonalExpansionQuantum} observed an empirical scaling of $\qavg \sim \Gamma^2 \beta N.$ In \cref{fig:re-investigating-tfim-timing}, we re-simulate the $4 \times 4$ TFIM studied in Fig.~2, with general simulation parameters as listed in the caption. This plot empirically illustrates two things. First, in the left plot, we indeed find $\qavg \sim \beta$ as observed previously~\cite{albash2017OffdiagonalExpansionQuantum}. Second, the sparsity of measurements (controlled by $\stepsPer$) has a large effect on simulation time when estimating expensive observables such as FS. Furthermore, we verify empirically the claim that the ES has $O(1)$ cost, as simulation time when measuring ES or doing no measurement matches each other well regardless of $\stepsPer.$ As summarized in \cref{tab:app_simulation_parameters}, the actual simulation parameters used to generate Fig.~2 were in the sparse measurement regime of $\stepsPer = 10^3$. We verify this empirically in \cref{fig:closer-to-real-tfim-fig3-timing} by repeating the TFIM calculation with comparable parameters. Specifically, we find that simulation time is roughly independent of the measurement strategy, even when measuring the expensive FS quantity. 
}

\section{Parity restricted PMR-QMC for the TFIM}
\label{app:tfim-parity-details}
In our transverse-field Ising model (TFIM) numerical experiments of \cref{sec:numerics}, i.e. \cref{fig:prb-replicate-fig3}, we computed the ES and FS restricted to the positive parity subspace of the TFIM. We briefly discuss the details of our methodology. First, we study a rotated TFIM for which the transverse field term is $\sum_i Z_i$, Eq.~\eqref{eq:fidsus-tfim}. In doing so, the parity operator,
\begin{equation}
    \mathcal{P} = \prod_{i=1}^N Z_i
\end{equation}
commutes with the TFIM, $[H, \mathcal{P}] = 0$. Hence, the parity $P = \pm 1$ is a good quantum number, and it is well known that the groundstate of the TFIM lies in the $P = +1$ parity sector by the Perron-Frobenius theorem~\cite{albuquerque2010QuantumCriticalScaling}. 

For our rotated TFIM choice, $\mathcal{P}$ is purely diagonal. For both SSE-QMC~\cite{albuquerque2010QuantumCriticalScaling} and PMR-QMC (see Eq.~\cref{eq:pmr-z}), computational basis elements $\ket{z}$ are part of the QMC configuration. At any given time in simulation, then, one can query the eigenvalue of $\mathcal{P}  \ket{z}$, which is the instantaneous parity $P_{\mathcal{C}}$. Note that, in fact, $\ket{z}$ is a bitstring, so the parity can readily be obtained by determining if there are an odd number of $1$'s in $\ket{z}$ or not. In the SSE-QMC approach~\cite{albuquerque2010QuantumCriticalScaling}, the authors simply keep track of $P_{\mathcal{C}}$ during simulation and post-select on measurement outcomes. 

In contrast, we perform PMR-QMC sampling restricted to the positive parity subspace the entire simulation. This is possible because the set of PMR-QMC updates is standardized in terms of the permutation group that comprises $H$. A careful look at the updates for spin-1/2 systems~\cite{barash2024QuantumMonteCarlo} reveals that there are only two updates that can change a configurations parity: a classical update or a block swap. In our TFIM implementation~\cite{ezzell2025code}, we simply choose an initial $\ket{z}$ with positive parity and only allow classical updates or block swaps that preserve parity. Both can be altered to preserve parity without post-selection, which is therefore more efficient than the SSE approach. 

In a standard classical update, a random bit is flipped. In our code, we simply flip two random bits instead, which preserves parity. In a standard block swap, the classical state is also updated. Specifically, the permutation string $\Siq$ is split into two $\Siq = S_2 S_1$ for $S_1 = P_{i_k} \cdots P_{i_1}$ and $S_2 = P_{i_q} \cdots P_{i_{k+1}}$ for $k \in \{1, \ldots, q-1\}$  chosen randomly. The classical state at position $k$ is then $\ket{z'} = S_1 \ket{z}$ with energy $E_{z'}$ The updated block-swap configuration is then $\mathcal{C'} = \{\ket{z'}, S_1 S_2\}$, which swaps $S_1$ an $S_2$ while updating $\ket{z} \rightarrow \ket{z'}$. The parity $\ket{z'}$ can change based on $S_1$.

For the TFIM, however, each $P_{i_l}$ is just a 2-body $X$ string, i.e., $X_1 X_2$,  $\Siq = \mathds{1}$ if and only if each permutation appears an even number of times. Together, this means $S_1$ flips the parity of $\ket{z'}$ if and only if $k$ is odd. Hence, parity can be preserved by only sampling $k \in \{1, \ldots, q-1\}$ that is also even. This can easily be accomplished, and is what we do in our code~\cite{ezzell2025code}. Together, these small changes ensure we stay in the $P = +1$ parity sector during the entire PMR-QMC simulation.

\section{Additional details on generating special unitaries}
\label{app:random-U}

Consider $U = \sum_{k=1}^{L} c_k Q_k$ for $\cvec \in \mathds{R}^L$ and each $Q_k$ a Pauli matrix over $n$ spins (we use $Q$ to distinguish from the permutation $P_j$ in the PMR form of $H$). By construction, we can efficiently form the matrix $U H(\lambda) U^{\dagger}$ for $H(\lambda) = Z_1 Z_2 + 0.1 (X_1 + X_2) + \lambda(Z_1 + Z_2),$ our prototypical example Hamiltonian from Ref.~\cite{zhang2008detection}, whenever $L$ is polynomial in $n$. This is because for each term in $H$, we need only perform $L^2$ Pauli conjugations, i.e, compute terms like $Q_k X_1 Q_l$, which are themselves computable in $O(n)$ time using a look-up table or a symplectic inner product technique as in Clifford simulations.

In addition, $U$ can easily be shown to be unitary whenever $||\cvec||_2 = 1$ and $\{Q_k\}_{k=1}^L$ all anti-commute~(see Eq.~(6) in \cite{izmaylov2019unitary}). The largest maximal set of anti-commuting Paulis is $2n + 1$, and there is an $O(n)$ algorithm to generate a canonical set of this type~\cite{sarkar2019sets}. Let $\mathcal{Q}$ denote this canonical set of $201$ Paulis for a 100--spin system, which, in particular, includes $X^{\otimes 100}$. We can now describe our algorithm to generate 10 random unitaries such that $H_{U_i} \equiv U_i H(\lambda) U_i^{\dagger}$ satisfies the four properties described in the main text.

To begin with, we generate 50 random unitaries by (a) selecting $L \in [1, \ldots, 201]$ uniformly at random, (b) construct a random list of anti-commuting Paulis of the form $\{X^{\otimes 100}\} \cup \{Q_i\}_{i=2}^{L}$ for each $Q_i$ drawn uniformly without replacement from $\mathcal{Q}$, (c) generating $c_i \sim  \mathcal{N}(0, 1)$ for $i = 1, \ldots, L$ independently and then renormalizing. In step (b), we always include $X^{\otimes 100}$ to ensure each $H_{U_i}(\lambda) \equiv U_i H(\lambda) U^{\dagger}_i$ has non-trivial support on all 100 spins. By construction, $H_{U_i}$ has hundreds of Paulis terms, and yet, it is easy to generate and store. What remains to show is that we can find a subset of 10 without a severe sign problem.

\begin{figure}[htp!]
    \centering
    \includegraphics[width=1.0\linewidth]{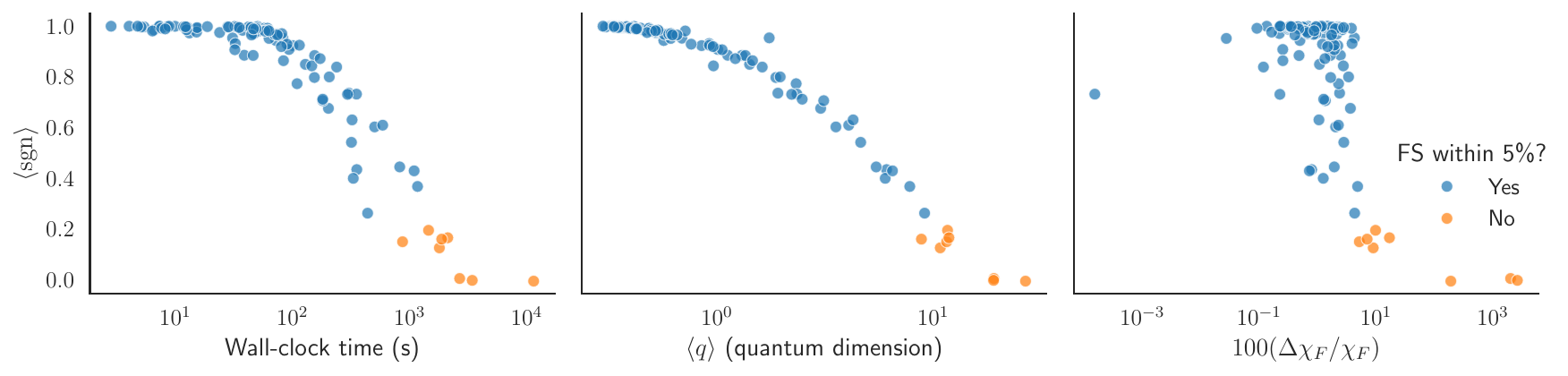}
    \caption{Here, we plot summary statistics of all 50 random models described generally above. Each plot shows how the average sign (for fixed QMC parameters) is correlated with important simulation properties like (left) simulation time (b) average value of $q$, and (c) percent error in estimating the FS. In Fig.~4, we only simulate those 10 random models with the highest average sign in this plot.}
    \label{fig:post-selection}
\end{figure}

To our knowledge, there is no simple calculation to check for the sign problem for these models a priori. So in practice, we simply ran a PMR-QMC simulation for each model for $\beta = 5$ and $\lambda = 1,$ which approximately the critical point for this model. The PMR-QMC code in Ref.~\cite{barash2024PmrQmcCode, barash2024QuantumMonteCarlo} automatically tracks the average and variance of the sign of the QMC weight, so we then simply post-selected those models with the highest average sign. Empirically, we find the worst average sign in this set of 10 is $0.9983 \pm  0.0006$ (this is $2\sigma$), so this subset is essentially sign problem free at $\lambda = 1.0$ { (but this does not hold for all $\lambda$ as seen before in \cref{fig:app-prl-rot-timing-study}). This post-selection step is contained in a Jupyter notebook of our code~\cite{ezzell2025code}, namely in \verb^before_fig2_post_selection.ipynb^. We provide a graphical summary of the practical meaning of this post selection process in \cref{fig:post-selection}. In short, the worse the sign problem is, the longer the simulation takes to run and the less accurate it is for a fixed set of QMC simulation parameters. By post-selecting on those models with the 10 highest average signs, we thus generate an ensemble with random, high weight Pauli terms that is nevertheless efficiently simulable in PMR-QMC.  
} 

\section{Derivations of PMR-QMC estimators for pure diagonal or pure off-diagonal driving}
\label{app:best-estimator-derivations}

We derive the ES and FS estimators for the $H_1 \propto D_0$ case stated in \cref{sec:methods}.  Most notably, we prove \cref{thrm:constant-ES-estimator,thrm:cubic-FS-estimator}.   We then substantiate the claim that these same estimators also straightforwardly carry over to the $H_1 \propto H - D_0$ case.

\subsection{Derivation strategy}

Recall from the main text (or see Ref.~\cite{ezzell2025advanced}) that a PMR-QMC estimator  $O_{\mathcal{C}}$ is a function of $\mathcal{C}$ such that
\begin{equation}
    \label{app-eq:pmr-qmc-estimator}
    \avg{O} = \frac{\sum_{\mathcal{C}} w_{\mathcal{C}} O_{\mathcal{C}}}{\sum_{\mathcal{C}} w_{\mathcal{C}}},  
\end{equation}
for $w_{\mathcal{C}} = D_{(z, \Siq)} e^{-\beta [E_{z_0}, \ldots, E_{z_q}]}$ the generalized PMR-QMC Boltzmann weight. For convenience, we write $O_{\mathcal{C}} \estimates \avg{O}$ to be read ``$O_{\mathcal{C}}$ estimates $\avg{O}$." Our first goal is to derive an estimator for the ES when $H_1 \propto D_0$. 

In most practical applications, $H_1 \propto D_0$ follows from the relation $\lambda H_1 = D_0$, i.e., the driving term multiplied by $\lambda$ is the diagonal portion of $H$. As such, $\avg{H_1} = \avg{D_0} / \lambda$, and more generally, it is enough to simply estimate $\avg{D_0}$ and related quantities. To this end, we observe
\begin{equation}
    \label{app-eq:d0-es}
    \chi_E^{D_0}(\lambda; \beta) = \int_0^\beta (\avg{D_0(\tau) D_0} - \avg{D_0}^2) \dtau.
\end{equation}
gives the ES (see Eq.~\cref{eq:gibbs-ES}) up to a multiplicative correction, i.e., $\chi_E^{H_1} = \chi_E^{D_0} / \lambda^2$ (or more generally just $\chi_E^{H_1} \propto \chi_E^{D_0})$. Thus, we focus on attention on estimating $\chi_E^{D_0}$. 

Our strategy is to first derive estimators for $\avg{D_0}$ and $\avg{D_0(\tau) D_0}$ in \cref{appsub:d0-correlator}. As we shall see, $E_z \estimates \avg{D_0}$ does not depend on $\tau$. Then, in \cref{appsub:es-in-quadratic-time}, we show $\int_0^\beta \avg{D_0(\tau) D_0}$ can be evaluated exactly and analytically into a valid estimator via novel divided difference relations. Together, these results give an $O(q^2)$ estimator for ES. In \cref{appsub:es-in-quadratic-time}, we further simplify our estimator into a constant time $O(1)$ ES estimator using additional novel divided difference relations. This completes most of the work of proving Theorem 1.

By a similar chain of reasoning, 
\begin{equation}
    \label{app-eq:d0-fs}
    \chi_F^{D_0}(\lambda; \beta) = \int_0^{\beta/2} \tau (\avg{D_0(\tau) D_0} - \avg{D_0}^2) \dtau,
\end{equation}
gives the FS (see Eq.~\eqref{eq:gibbs-FS}) up to a multiplicative correction, and our strategy remains similar. Namely, we show in \cref{app-sub:fs-in-quartic} that $\int_0^{\beta/2} \tau \avg{D_0(\tau) D_0} \dtau$ can also be evaluated exactly and analytically via novel divided difference relations. This results in an $O(q^4)$ estimator for FS. By yet additional novel divided difference relations, we reduce this to $O(q^3)$ operations in \cref{app-sub:fs-in-cubic}. This completes most of the work of proving Theorem 2. 

Finally, we show how all the estimators we derived for the $H \propto D_0$ case carry over straightforwardly to the $H_1 \propto H - D_0$ case and with the same complexities in \cref{app-sub:off-diagonal-case-from-diagonal} thus completing the justification for Theorems 1 and 2. 

\subsection{\texorpdfstring{Estimators for $\avg{D_0}$ and $\avg{D_0(\tau)\, D_0}$}{Estimators for <D0> and <D0(tau) D0>}}
\label{appsub:d0-correlator}

Since $\avg{z | D_0 e^{-\beta H} | z} = E_z \avg{z | e^{-\beta H} | z}$, then by a direct off-diagonal series expansion of $\Tr[D_0 e^{-\beta H}]$ (see also Sec.~VIII.A of Ref.~\cite{ezzell2025advanced}), we find
\begin{equation}
    \label{app-eq:d0-estimator}
    E_z \estimates \avg{D_0},
\end{equation}
is an $O(1)$ estimator (see \cref{app-sub:complexity-in-pmr-qmc}). 

Next, we derive an estimator for $\avg{D_0(\tau) D_0}$. One approach is by direct off-diagonal series expansion and simplification~\cite{ezzell2025advanced}.  Here, we argue that direct Leibniz rule logic (\cref{eq:leibniz-rule}),  we can correctly conclude,  
\begin{equation}
	\avg{z | D_0 e^{- \tau H}  D_0 e^{-(\beta - \tau) H} | z} = E_z \sum_{j=0}^q E_{z_j} e^{-\tau [E_{z_j}, \ldots, E_{z_q}]} e^{-(\beta - \tau)[E_{z_0}, \ldots, E_{z_j}]}.
\end{equation}
The basic idea is simple: without the middle $D_0$ term,  we can safely apply the Leibniz rule to the off-diagonal expansion since $e^{-\tau H}$ and $e^{-(\beta - \tau) H}$ commute.  Accounting for the non-commuting $D_0$,  however,  is simple---we just pick up a factor of $E_{z_j}$ from the  $D_0 \ket{z_j}$  as shown.  Hence,  we find 
\begin{equation}
    \label{app-eq:d0-correlator-estimator}
    (D_0(\tau)D_0)_{\mathcal{C}} \equiv \\ \frac{E_{z}}{e^{-\beta[E_{z_0}, \ldots, E_{z_q}]}} \sum_{j=0}^q E_{z_j} e^{-\tau [E_{z_j}, \ldots, E_{z_q}]} e^{-(\beta - \tau)[E_{z_0}, \ldots, E_{z_j}]} \estimates \avg{D_0(\tau) D_0}.
\end{equation}
as claimed in Eq.~\eqref{eq:estimator-D0-corr}. 

\subsection{Energy susceptibility in quadratic time}
\label{appsub:es-in-quadratic-time}

We shall show
\begin{equation}
    \label{app-eq:ES-estimator-orig}
    \frac{- E_{z_0}}{e^{-\beta[E_{z_0}, \ldots, E_{z_q}]}} \sum_{j=0}^q E_{z_j} e^{-\beta[E_{z_0}, \ldots, E_{z_q}, E_{z_j}]} \estimates \int_0^\beta \avg{ D_0(\tau) D_0 } \dtau.
\end{equation}
Combined with \cref{app-eq:d0-estimator,app-eq:d0-es}, this gives an estimator for $\chi_E^{D_0}$ that requires $O(q^2)$ operations (see \cref{app-sub:complexity-in-pmr-qmc}). By inspection of \cref{app-eq:d0-correlator-estimator}, showing \cref{app-eq:ES-estimator-orig} amounts to proving the following integral relation. 

\begin{lemma}[Convolution lemma~\cite{zeng2025inequalities}]
    \label{app-lem:convolution}
    \begin{equation}
    \label{app-eq:convolution-theorem}
    \int_0^\beta e^{-\tau [x_{j+1}, \ldots, x_q]}  e^{-(\beta-\tau) [x_0, \ldots, x_j]} \dtau = - e^{-\beta [x_0, \ldots, x_q]}.
\end{equation}
\end{lemma}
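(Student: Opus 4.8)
The plan is to recognize the integral on the left-hand side as a Laplace convolution in the variable $\beta$ and then exploit the closed-form Laplace transform of the divided difference of the exponential established in \cref{eq:laplace-of-dd}. Concretely, I would set $f(\tau) \equiv e^{-\tau [x_{j+1}, \ldots, x_q]}$ and $g(\tau) \equiv e^{-\tau [x_0, \ldots, x_j]}$, so that the left-hand side is exactly the convolution $(f * g)(\beta) = \int_0^\beta f(\tau)\, g(\beta - \tau)\, \dtau$. Since both factors are entire functions of $\beta$ of exponential type, their Laplace transforms exist and the convolution theorem $\lap{f * g} = \lap{f}\,\lap{g}$ applies.

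First I would compute each transform from \cref{eq:laplace-of-dd} with $\alpha = -1$. The multiset $\{x_{j+1}, \ldots, x_q\}$ has $q - j$ elements and $\{x_0, \ldots, x_j\}$ has $j+1$ elements, so
\begin{equation}
\lap{f} = \frac{(-1)^{q-j-1}}{\prod_{k=j+1}^q (s + x_k)}, \qquad \lap{g} = \frac{(-1)^{j}}{\prod_{k=0}^j (s + x_k)}.
\end{equation}
Multiplying, the two denominators merge into a single product over the full multiset $\{x_0, \ldots, x_q\}$, while the sign prefactors combine to $(-1)^{q-j-1}(-1)^j = (-1)^{q-1}$, independent of the split point $j$:
\begin{equation}
\lap{f * g} = \frac{(-1)^{q-1}}{\prod_{k=0}^q (s + x_k)}.
\end{equation}

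Next I would compute the transform of the proposed right-hand side directly. Applying \cref{eq:laplace-of-dd} once more with $\alpha = -1$ to the full multiset gives
\begin{equation}
\lap{-e^{-\beta [x_0, \ldots, x_q]}} = -\frac{(-1)^{q}}{\prod_{k=0}^q (s + x_k)} = \frac{(-1)^{q-1}}{\prod_{k=0}^q (s + x_k)},
\end{equation}
which is identical to $\lap{f * g}$. Because both sides of \cref{app-eq:convolution-theorem} are continuous functions of $\beta$ of exponential order, uniqueness of the Laplace transform forces them to be equal, proving the lemma.

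The hard part will be the sign bookkeeping: one must track the parities $(-1)^{q-j-1}$ and $(-1)^{j}$ carefully and verify that their product is the $j$-independent value $(-1)^{q-1}$, which is exactly what is needed to match the single transform of the right-hand side. Everything else is a direct substitution into \cref{eq:laplace-of-dd}, which already accommodates repeated arguments through higher-order poles, so no separate treatment of coincident $x_i$ is needed; a partial-fraction proof valid for distinct arguments followed by a continuity argument would also work but is less economical.
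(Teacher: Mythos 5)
Your proposal is correct and takes essentially the same route as the paper's own proof: both recognize the left-hand side as the Laplace convolution of $f(\tau)=e^{-\tau[x_{j+1},\ldots,x_q]}$ and $g(\tau)=e^{-\tau[x_0,\ldots,x_j]}$, apply \cref{eq:laplace-of-dd} to obtain $\lap{f}\lap{g}=(-1)^{q-1}/\prod_{l=0}^q(s+x_l)$, identify this with $\lap{-e^{-t[x_0,\ldots,x_q]}}$, and invert. The sign bookkeeping you single out, $(-1)^{q-j-1}(-1)^j=(-1)^{q-1}$, is exactly the computation appearing in the paper's proof, so there is no substantive difference between the two arguments.
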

\begin{proof}
    For convenience, we define the functions
\begin{align}
    f(t) &= e^{-t [x_{j+1}, \ldots, x_q]} \\
    g(t) &= e^{-t [x_0, \ldots, x_j]}
\end{align}
The convolution of these functions,
\begin{align}
    (f * g)(t) &= \int_0^t f(\tau) g(t - \tau) \dtau 
\end{align}
is by construction the integral we want to evaluate for $t = \beta$. Let $\lap{f(t)}$ denote the Laplace transform of $f(t)$ from $t \rightarrow s$. By the convolution property of the Laplace transform and~\cref{eq:laplace-of-dd}, we find
\begin{align}
    \lap{(f * g)(t)} &= \lap{f(t)} \lap{g(t)} \\
    &= \left( \frac{ (-1)^{q-j-1} }{ \prod_{l=j+1}^q (s + x_l) } \right) \left( \frac{ (-1)^{j} }{ \prod_{m=0}^j (s + x_m) }\right) = \frac{ (-1)^{q-1} }{ \prod_{l=0}^q (s + x_l) } = \lap{ - e^{-t [x_0, \ldots, x_q]} }.
\end{align}
Taking the inverse Laplace transform of the first and final expression proves the claimed integral relation.
\end{proof} 

Inspection of \cref{app-eq:d0-correlator-estimator} and \cref{app-eq:convolution-theorem} shows the validity of \cref{app-eq:ES-estimator-orig} immediately. As an aside, we remark that a direct proof by series expanding both DDE via~\cref{eq:dd-as-series-expansion}, integrating term-by-term, regrouping, and simplifying is also possible.

\subsection{Energy susceptibility in constant time}
\label{appsub:es-in-constant-time}

The ES estimator in \cref{app-eq:ES-estimator-orig} requires $O(q^2)$ operations to evaluate, but it can be simplified into an $O(1)$ estimator and validate Theorem 1 in a few steps. We begin with a simple proposition. 

\begin{proposition}[Re-scaling relation~\cite{zeng2025inequalities}]
    \label{prop:rescaling}
    \begin{equation}
        e^{-\tau[x_0, \ldots, x_q]} = (-\tau)^q e^{[-\tau x_0, \ldots, -\tau x_q]}.
    \end{equation}
\end{proposition}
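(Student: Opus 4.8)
The plan is to recognize \cref{prop:rescaling} as an immediate specialization of the rescaling relation \cref{eq:rescaling-relation}, which I may assume. That relation reads $\alpha^q e^{t[\alpha x_0, \ldots, \alpha x_q]} = e^{\alpha t[x_0, \ldots, x_q]}$ and holds for arbitrary scalars $\alpha$ and $t$, so the whole proof reduces to choosing these two free parameters correctly and then unwinding the DDE shorthand. Concretely, I would set $\alpha = -\tau$ and $t = 1$. With these choices the left-hand side of \cref{eq:rescaling-relation} becomes $(-\tau)^q e^{[-\tau x_0, \ldots, -\tau x_q]}$, in which the implicit exponential parameter is $1$ acting on the scaled multiset $\{-\tau x_i\}$, while the right-hand side becomes $e^{(-\tau)(1)[x_0, \ldots, x_q]} = e^{-\tau[x_0, \ldots, x_q]}$. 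Equating the two reproduces the claimed identity verbatim.

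The only delicate point --- and the closest thing to an obstacle --- is the bookkeeping of the DDE notation, since the symbol $e^{(\cdot)[\cdots]}$ overloads the scalar prefactor of the bracket with the scaling of its arguments. On the left of the proposition the parameter $-\tau$ multiplies the bracket while the arguments $x_i$ are left unscaled; on the right the bracket carries an \emph{implicit} parameter $1$ but its arguments are each scaled by $-\tau$. Tracking which role $\alpha$ versus $t$ plays is essentially the entire content, and once that identification is made the verification is routine.

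As an independent check I would, if desired, rederive the identity directly from the contour-integral definition \cref{eq:contour-int-dd} applied to $f(x)=e^{-\tau x}$, performing the change of variables $y = -\tau x$ inside the integral: each of the $q+1$ denominator factors contributes $(-\tau)^{-1}$ (giving $(-\tau)^{q+1}$ after inversion), the measure $\dx$ contributes one factor of $(-\tau)^{-1}$, and the complex rescaling preserves orientation so no stray sign appears; the net prefactor is $(-\tau)^{q+1}/(-\tau) = (-\tau)^q$, while $e^{-\tau x}=e^{y}$ turns the integrand into the divided difference of $e^{y}$ over $\{-\tau x_i\}$. Equivalently, one could expand both sides via the power series \cref{eq:dd-as-series-expansion}, using $f^{(q+m)}(0)=(-\tau)^{q+m}$ on the left and pulling $(-\tau)^{\sum_j k_j}=(-\tau)^m$ out of the monomials on the right, and match coefficients term by term. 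Neither route is needed given \cref{eq:rescaling-relation}, but either confirms the result independently.
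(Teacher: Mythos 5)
Your proposal is correct, and your primary route differs from the paper's in presentation, though not in substance. The paper's own proof is a single line: perform the change of variables $y = -\tau z$ in the contour-integral definition \cref{eq:contour-int-dd} — which is precisely your ``independent check,'' down to the orientation remark and the $(-\tau)^{q+1}/(-\tau) = (-\tau)^q$ bookkeeping. Your main argument instead observes that the proposition is literally the special case $\alpha = -\tau$, $t = 1$ of the already-stated rescaling relation \cref{eq:rescaling-relation}, which is a cleaner and more economical reduction; it buys you a one-line proof at the cost of leaning on that earlier identity, whereas the paper's direct substitution is self-contained (and indeed \cref{eq:rescaling-relation} was itself obtained by the substitution $x \to \alpha x$ in \cref{eq:contour-int-dd}, so the two routes collapse to the same computation one level down). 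Your caution about the overloaded DDE shorthand — that on the right-hand side the bracket carries an implicit exponential parameter $1$ while its arguments are scaled by $-\tau$ — is exactly the right point to flag, since misreading $e^{[-\tau x_0,\ldots,-\tau x_q]}$ as having parameter $-\tau$ is the one way this verification goes wrong. The only pedantic gap, shared with the paper, is that the substitution $y = -\tau z$ is degenerate at $\tau = 0$; since both sides are entire in $\tau$, the identity extends there by continuity (or by direct evaluation), so nothing is lost.
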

\begin{proof}
    By the change of variables $y = -\tau z$ in Eq.~\eqref{eq:contour-int-dd}, we find the claimed result. 
\end{proof}

We can now simplify the sum in \cref{app-eq:ES-estimator-orig} into a compact derivative form. 

\begin{lemma}[A weighted, repeated argument sum simplification~\cite{zeng2025inequalities}]
\label{lemm:weighted-repeatedarg-sum}
    \beq
        \label{eq:weighted-repeatedarg-sum}
        \sum_{j=0}^q x_j e^{-\tau [x_0,\dots,x_q,x_j]} = \left(\tau\frac{\partial}{\partial\tau}-q\right)e^{-\tau [x_0,\dots,x_q]}.
    \eeq
\end{lemma}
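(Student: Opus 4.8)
The plan is to prove the identity by showing that its two sides, viewed as functions of $\tau$, have the same Laplace transform from $\tau \to s$, exactly as in the proof of \cref{app-lem:convolution}. Writing $G(s) \equiv \lap{e^{-\tau[x_0,\ldots,x_q]}}$, \cref{eq:laplace-of-dd} with $\alpha = -1$ gives $G(s) = (-1)^q / \prod_{j=0}^q (s + x_j)$, so the whole computation reduces to rational-function algebra.

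For the right-hand side, I would first translate the operator $\tau\partial_\tau - q$ into $s$-space. Using $\lap{g'(\tau)} = s G(s) - g(0)$ together with the multiplication rule $\lap{\tau h(\tau)} = -\tfrac{\mathrm{d}}{\mathrm{d}s}\lap{h(\tau)}$, the boundary constant $g(0)$ drops upon differentiation and one finds $\lap{(\tau\partial_\tau - q) g} = -(q+1) G - s G'$. Logarithmic differentiation of $G$ gives $G' = -G \sum_{j=0}^q (s+x_j)^{-1}$; substituting this in and using $s/(s+x_j) = 1 - x_j/(s+x_j)$ makes the constant pieces cancel, leaving $\lap{(\tau\partial_\tau - q) g} = -G \sum_{j=0}^q x_j/(s+x_j)$.

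For the left-hand side, each summand is a DDE over the $(q+2)$-element multiset $[x_0,\ldots,x_q,x_j]$, so \cref{eq:laplace-of-dd} with $\alpha=-1$ and $q \to q+1$ yields $\lap{e^{-\tau[x_0,\ldots,x_q,x_j]}} = (-1)^{q+1}/\bigl[(s+x_j)\prod_{l=0}^q(s+x_l)\bigr]$, the extra factor $(s+x_j)$ arising from the repeated argument. Multiplying by $x_j$ and summing over $j$ reproduces precisely $-G \sum_{j=0}^q x_j/(s+x_j)$. Since the two transforms agree, inverting the Laplace transform proves the lemma.

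The main obstacle I expect is the careful bookkeeping of $\tau\partial_\tau$ under the transform — tracking the boundary term $g(0)$ and performing the product/logarithmic-derivative steps without sign errors — and, on the left, correctly treating the repeated-argument divided difference $[x_0,\ldots,x_q,x_j]$ as having $q+2$ entries, so that the exponent of $\alpha$ in \cref{eq:laplace-of-dd} is $q+1$ and the denominator gains the factor $(s+x_j)$. As a cross-check and alternative route, I would note that expanding both sides via \cref{eq:dd-as-series-expansion} reduces the claim, coefficient by coefficient in powers of $\tau$, to the classical Newton identity $n\, h_n = \sum_{t=1}^n p_t\, h_{n-t}$ relating the complete homogeneous symmetric polynomials $h_n$ and the power sums $p_t$ in $x_0,\ldots,x_q$; the Laplace-transform argument is cleaner because it avoids invoking symmetric-function identities directly.
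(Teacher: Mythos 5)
Your proof is correct, but it takes a genuinely different route from the paper's. The paper proves \cref{lemm:weighted-repeatedarg-sum} by induction (really a direct computation): it uses the rescaling relation of \cref{prop:rescaling} to rewrite the right-hand side as $-(-\tau)^{q+2}\,\partial_\tau\, e^{[-\tau x_0,\ldots,-\tau x_q]}$ and then differentiates under the contour-integral representation \cref{eq:contour-int-dd} to generate the weighted sum of repeated-argument divided differences. You instead work entirely in Laplace space via \cref{eq:laplace-of-dd}, and your intermediate formulas all check out: $\lap{(\tau\partial_\tau - q)g} = -(q+1)G - sG'$ (the boundary term $g(0)$ indeed dies under $\mathrm{d}/\mathrm{d}s$), the logarithmic derivative $G' = -G\sum_j (s+x_j)^{-1}$, the cancellation leaving $-G\sum_j x_j/(s+x_j)$, and the matching left-hand transform $\sum_j x_j (-1)^{q+1}\big/\bigl[(s+x_j)\prod_{l=0}^q (s+x_l)\bigr]$, where you correctly count the repeated-argument multiset as $q+2$ entries so the exponent of $\alpha$ is $q+1$. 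This is in the spirit of the paper's own proofs of \cref{app-lem:convolution} and \cref{prop:repatedarg-sum-simp}, so your argument arguably unifies the toolkit: it trades the paper's differentiation under the contour integral and induction bookkeeping for partial-fraction algebra, at the modest cost of invoking injectivity of the Laplace transform (unproblematic here, since both sides are finite combinations of functions of exponential type, and \cref{eq:laplace-of-dd} extends to repeated arguments by the contour definition). Your series-expansion cross-check is also sound: expanding via \cref{eq:dd-as-series-expansion}, the coefficient comparison at order $\tau^{q+m}$ is exactly Newton's identity $m\,h_m = \sum_{t=1}^m p_t\, h_{m-t}$ for the complete homogeneous symmetric polynomials in $x_0,\ldots,x_q$, using $h_{m-1}(x_0,\ldots,x_q,x_j) = \sum_{t\ge 0} x_j^{t}\, h_{m-1-t}(x_0,\ldots,x_q)$ to absorb the extra argument.
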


\begin{proof}
We prove this by induction.

\emph{Base case.}
When $q=0$, we have 
$x_0 e^{-\tau [x_0,x_0]} = x_0 \partial e^{-\tau x_0} / \partial x_0 = -\tau x_0 e^{-\tau x_0}$, so Eq.~(\ref{eq:weighted-repeatedarg-sum}) holds.

\emph{Induction step.} By Prop.~\ref{prop:rescaling}, we write the right-hand side as,
\begin{align}
    (\tau \partial_\tau - (q + 1)) (-\tau)^{q+1} e^{[-\tau x_0, \ldots, -\tau x_{q+1}]} &= -(-\tau)^{q+2} \pdv{\tau} e^{[-\tau x_0, \ldots, -\tau x_{q+1}]} \\
    &= -(-\tau)^{q+2} \frac{1}{2\pi i} \oint_{\Gamma} \pdv{\tau} \frac{e^z}{\prod_{i=0}^{q+1} (z + \tau x_i)} \dz = \sum_{j=0}^{q+1} x_j e^{-\tau [x_0, \ldots, x_{q+1}, x_j]}
\end{align}

Therefore, Eq.~(\ref{eq:weighted-repeatedarg-sum}) is true for all $q$.  
\end{proof}

To turn this simplification into a useful result for PMR-QMC estimation, we employ the following proposition. 

\begin{proposition}[Parametric derivative]~\cite{zeng2025inequalities}
    \label{prop:parametric-derivative-1}
    \begin{equation}
        \pdv{\tau} e^{-\tau [x_0, \ldots, x_q]} = - x_0 e^{-\tau [x_0, \ldots, x_q]} - e^{-\tau [x_1, \ldots, x_q]}.
    \end{equation}
\end{proposition}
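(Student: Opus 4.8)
The plan is to differentiate the contour-integral representation of the DDE directly and then reorganize the result with the Leibniz rule. First I would write \cref{eq:contour-int-dd} for the choice $f(x) = e^{-\tau x}$, so that $e^{-\tau[x_0,\ldots,x_q]} = \frac{1}{2\pi i}\oint_\Gamma \frac{e^{-\tau x}}{\prod_{i=0}^q (x - x_i)}\,\dx$. Since the integrand is jointly analytic in $x$ and $\tau$ and $\Gamma$ is a fixed compact contour enclosing the $x_i$, I can differentiate under the integral sign, which simply brings down a factor of $-x$ inside the integrand. The result is exactly $-1$ times the divided difference of the product function $g(x) \equiv x\,e^{-\tau x}$ evaluated on the multiset $[x_0,\ldots,x_q]$.

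Next I would apply the Leibniz rule, \cref{eq:leibniz-rule}, to $g = p\cdot h$ with $p(x) = x$ and $h(x) = e^{-\tau x}$, which gives $g[x_0,\ldots,x_q] = \sum_{j=0}^q p[x_0,\ldots,x_j]\,h[x_j,\ldots,x_q]$. The key observation is that $p(x)=x$ is a degree-one polynomial, so by \cref{eq:poly-div-diff} its divided differences of order two or higher vanish: concretely $p[x_0] = x_0$, $p[x_0,x_1] = 1$, and $p[x_0,\ldots,x_j] = 0$ for all $j \ge 2$. Only the $j=0$ and $j=1$ terms therefore survive, yielding $g[x_0,\ldots,x_q] = x_0\,e^{-\tau[x_0,\ldots,x_q]} + e^{-\tau[x_1,\ldots,x_q]}$. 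Multiplying by $-1$ reproduces the claimed identity.

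The computation is short, and the only point demanding care is the truncation of the Leibniz sum. The main (and rather minor) obstacle is to argue cleanly that the divided differences of the linear factor $p(x)=x$ collapse to just the two leading terms; this follows immediately from \cref{eq:poly-div-diff}, but it is precisely what forces the right-hand side to contain one ``full'' term $x_0 e^{-\tau[x_0,\ldots,x_q]}$ together with one ``reduced'' term $e^{-\tau[x_1,\ldots,x_q]}$ in which the first argument has been dropped. Differentiation under the integral sign is entirely routine given the compactness of $\Gamma$. As an alternative, I could instead expand both sides via the power series \cref{eq:dd-as-series-expansion}, differentiate the $\tau$-dependence term by term, and match coefficients; however, the contour-integral-plus-Leibniz route is cleaner and avoids the weak-partition bookkeeping.
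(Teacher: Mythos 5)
Your proof is correct and takes essentially the same route as the paper's: differentiate under the integral sign in the contour representation \cref{eq:contour-int-dd} to recognize the result as the divided difference of $g(x) = -x\,e^{-\tau x}$, then expand via the Leibniz rule \cref{eq:leibniz-rule}. Your explicit truncation of the Leibniz sum---using \cref{eq:poly-div-diff} to show that divided differences of the linear factor vanish beyond order one---merely spells out the step the paper leaves implicit when it says the Leibniz rule ``gives the desired result.''
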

\begin{proof}
    First, define $g(x) = -x e^{-\tau x}$ for convenience. By Eq.~\eqref{eq:contour-int-dd}, 
    \begin{align}
        \pdv{\tau} e^{-\tau [x_0, \ldots, x_q]} \equiv \pdv{\tau} \frac{1}{2\pi i} \oint_{\Gamma} \frac{e^{-\tau z}}{\prod_{i=0}^q (z - x_i)} \dz =  \frac{1}{2\pi i} \oint_{\Gamma} \frac{-z e^{-\tau z}}{\prod_{i=0}^q (z - x_i)} \dz \equiv g[x_0, \ldots, x_q].
    \end{align}
    Employing the Leibniz rule, \cref{eq:leibniz-rule}, we get the desired result. 
\end{proof}

\begin{corollary}[Computing the weighted, repeated argument sum~\cite{zeng2025inequalities}]
\beq
    \label{eq:computing-weighted-repeatedarg-sum}
    \sum_{j=0}^q x_j e^{-\tau [x_0,\dots,x_q,x_j]} = 
    	\left\{\begin{array}{ll}
            (-x_0\tau - q)e^{-\tau[x_0,\dots,x_q]}-\tau e^{-\tau[x_1,\dots,x_q]}, & \text{\rm for}\quad q > 0,\\
            -\tau x_0 e^{-\tau x_0}, & \text{\rm for}\quad q = 0.
            \end{array}\right.
\eeq
\end{corollary}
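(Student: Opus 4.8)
The plan is to obtain the stated closed form by feeding the parametric-derivative identity of \cref{prop:parametric-derivative-1} into the compact derivative expression already established in \cref{lemm:weighted-repeatedarg-sum}. For $q>0$, \cref{lemm:weighted-repeatedarg-sum} tells us that the left-hand side equals $(\tau\partial_\tau - q)\,e^{-\tau[x_0,\ldots,x_q]}$, so the only remaining task is to evaluate $\partial_\tau e^{-\tau[x_0,\ldots,x_q]}$ explicitly. This is a two-ingredient assembly rather than a fresh derivation, which is why I expect no serious difficulty in the generic case.

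Concretely, I would first apply \cref{prop:parametric-derivative-1} to rewrite the derivative as $-x_0 e^{-\tau[x_0,\ldots,x_q]} - e^{-\tau[x_1,\ldots,x_q]}$. Multiplying through by $\tau$, subtracting $q\,e^{-\tau[x_0,\ldots,x_q]}$, and then collecting the two terms proportional to $e^{-\tau[x_0,\ldots,x_q]}$ yields $(-x_0\tau - q)\,e^{-\tau[x_0,\ldots,x_q]} - \tau\,e^{-\tau[x_1,\ldots,x_q]}$, which is exactly the $q>0$ branch of the claimed formula. Each manipulation here is purely algebraic once the two prior results are in hand, so this step should be routine.

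The one point requiring care — and the only genuine obstacle — is the $q=0$ case. There, the expression $\partial_\tau e^{-\tau[x_0,\ldots,x_q]} = -x_0 e^{-\tau[x_0,\ldots,x_q]} - e^{-\tau[x_1,\ldots,x_q]}$ would force us to interpret the divided difference $e^{-\tau[x_1,\ldots,x_q]}$ over an empty multiset of arguments, which is not defined and makes the $q>0$ formula inapplicable. To sidestep this, I would handle $q=0$ separately and simply reproduce the base-case computation already carried out inside the proof of \cref{lemm:weighted-repeatedarg-sum}: the single-term sum is $x_0\, e^{-\tau[x_0,x_0]}$, and because a divided difference with a repeated argument reduces to an ordinary derivative, $e^{-\tau[x_0,x_0]} = \partial_{x_0} e^{-\tau x_0} = -\tau e^{-\tau x_0}$, giving $-\tau x_0 e^{-\tau x_0}$. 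This matches the $q=0$ branch and, more importantly, explains precisely why the statement must be split into two cases in the first place.
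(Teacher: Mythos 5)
Your proposal is correct and takes essentially the same route as the paper, whose proof is exactly the substitution of \cref{prop:parametric-derivative-1} into \cref{lemm:weighted-repeatedarg-sum}. Your explicit separate treatment of $q=0$ (where the term $e^{-\tau[x_1,\ldots,x_q]}$ would involve an empty argument list) is a sound elaboration of what the paper leaves implicit and correctly explains the case split in the statement.
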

\begin{proof}
    This follows from applying Prop.~\ref{prop:parametric-derivative-1} to Lem.~\ref{lemm:weighted-repeatedarg-sum}. 
\end{proof}

We now prove the main part of Theorem 1---that we have an $O(1)$ estimator for ES---since 
\beq
        \beta E_{z_0}^2 + q E_{z_0} + \mathbf{1}_{q>0} \beta E_{z_0} \frac{ e^{-\beta [E_{z_0}, \ldots, E_{z_{q-1}}] } }{ e^{-\beta [E_{z_0}, \ldots, E_{z_q}]} } \estimates \int_0^\beta \avg{ D_0(\tau) D_0 } \dtau.
    \eeq
\begin{proof}
    The original ES estimator in Eq.~\eqref{app-eq:ES-estimator-orig} can be simplified using  Eq.~\eqref{eq:computing-weighted-repeatedarg-sum}. The first two terms follow by simple algebraic manipulation. The final term follows by algebra and periodicity, i.e., $E_{z_0} = E_{z_q}$ for a valid set of PMR-QMC diagonal energies. After a PMR-QMC update, we have $O(1)$ access to $\beta$, each $E_{z_j},$ and each successive divided difference $e^{-\beta E_{z_0}}, e^{-\beta [E_{z_0}, E_{z_1}]}, \ldots, e^{-\beta [E_{z_0}, \ldots, E_{z_q}]}$, which includes $e^{-\beta [E_{z_0}, \ldots, E_{z_{q-1}}]}$.
\end{proof}

\subsection{Fidelity susceptibility in quartic time}
\label{app-sub:fs-in-quartic}

We shall now show
\beq
    \label{app-eq:FS-estimator-orig}
      \frac{E_{z_0}}{e^{-\beta[E_{z_0}, \ldots, E_{z_q}]}}\sum_{j=0}^q E_{z_j} 
    \sum_{r=0}^j e^{-\frac{\beta}{2}[E_{z_0}, \ldots, E_{z_r}]} \sum_{m=j}^q e^{-\frac{\beta}{2}[E_{z_r}, \ldots, E_{z_q}, E_{z_j}, E_{z_m}]} \estimates \int_0^{\beta/2} \tau \avg{ D_0(\tau) D_0 } \dtau , 
\eeq
which gives rise to an $O(q^4)$ estimator (see \cref{app-sub:complexity-in-pmr-qmc}) for $\chi_F^{D_0}$ in \cref{app-eq:d0-fs} when combined with \cref{app-eq:d0-estimator}. As with the ES, this essentially reduces to proving divided difference relations since we need only integrate the expression in \cref{app-eq:d0-correlator-estimator}. To this end, we first state and prove a helpful proposition.

\begin{proposition}[Repeated argument sum simplification~\cite{zeng2025inequalities}]
    \label{prop:repatedarg-sum-simp}
    \begin{equation}
        \sum_{j=0}^q e^{-\tau [x_0, \ldots, x_q, x_j]} = -\tau e^{-\tau [x_0, \ldots, x_q]}
    \end{equation}
\end{proposition}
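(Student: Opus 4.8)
The plan is to prove this directly from the contour-integral representation of the divided difference, \cref{eq:contour-int-dd}, which handles the repeated argument automatically through a double pole and requires no separate treatment of coincident points. Taking $f(x) = e^{-\tau x}$, each summand on the left is
\begin{equation}
    e^{-\tau[x_0,\ldots,x_q,x_j]} = \frac{1}{2\pi i}\oint_\Gamma \frac{e^{-\tau x}}{(x-x_j)\prod_{i=0}^q(x-x_i)}\dx,
\end{equation}
so appending the extra argument $x_j$ simply inserts a factor $1/(x-x_j)$ into the integrand already used for $e^{-\tau[x_0,\ldots,x_q]}$. Summing over $j=0,\ldots,q$ therefore pulls out the factor $\sum_{j=0}^q 1/(x-x_j)$, which I will recognize as the logarithmic derivative $\tfrac{\mathrm{d}}{\mathrm{d}x}\ln\prod_{i=0}^q(x-x_i)$ of the denominator polynomial.

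The decisive step is then to rewrite this as an exact derivative. Writing $P(x) \equiv \prod_{i=0}^q(x-x_i)$, the identity $P'(x)/P(x)^2 = -\tfrac{\mathrm{d}}{\mathrm{d}x}\big(1/P(x)\big)$ gives
\begin{equation}
    \frac{1}{P(x)}\sum_{j=0}^q\frac{1}{x-x_j} = -\frac{\mathrm{d}}{\mathrm{d}x}\frac{1}{P(x)},
\end{equation}
so the summed contour integral becomes $-\tfrac{1}{2\pi i}\oint_\Gamma e^{-\tau x}\,\tfrac{\mathrm{d}}{\mathrm{d}x}\big(1/P(x)\big)\dx$. I will then integrate by parts around the closed contour $\Gamma$: the boundary term vanishes identically because $\Gamma$ is closed, and the derivative transfers onto the exponential, producing $\tfrac{\mathrm{d}}{\mathrm{d}x}e^{-\tau x} = -\tau e^{-\tau x}$. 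Collecting the constant $-\tau$ outside the integral leaves exactly $-\tau$ times the contour integral defining $e^{-\tau[x_0,\ldots,x_q]}$, which is the claimed identity.

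The only subtlety—the ``hard part'' in spirit—is the bookkeeping around the repeated argument: one must trust that the single contour integral with the higher-order pole at $x_j$ genuinely equals the divided difference over the multiset with $x_j$ repeated, and that integration by parts is legitimate with the double pole present. Both are standard consequences of the holomorphy of $e^{-\tau x}$ and the closedness of $\Gamma$, so no genuine analytic difficulty arises. As a cross-check, I note that an alternative proof by induction on $q$—mirroring the structure of \cref{lemm:weighted-repeatedarg-sum} and using the rescaling relation of \cref{prop:rescaling}—is also available, but the contour-integral route is shorter and makes the appearance of the $-\tau$ prefactor transparent.
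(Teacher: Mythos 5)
Your proof is correct, but it takes a genuinely different route from the paper. The paper proves \cref{prop:repatedarg-sum-simp} entirely in Laplace space: writing $f(\tau) = e^{-\tau[x_0,\ldots,x_q]}$, it invokes \cref{eq:laplace-of-dd} and the derivative property $\lap{-\tau f(\tau)} = \partial_s \lap{f(\tau)}$, observes that differentiating $(-1)^q/\prod_{i=0}^q(s+x_i)$ in $s$ produces exactly the sum of transforms of the repeated-argument terms $e^{-\tau[x_0,\ldots,x_q,x_j]}$, and concludes by inverting the transform. You instead work directly in the contour representation \cref{eq:contour-int-dd}: summing over $j$ assembles the logarithmic derivative $\sum_j (x-x_j)^{-1} = P'(x)/P(x)$ with $P(x) = \prod_{i=0}^q(x-x_i)$, you rewrite $P'(x)/P(x)^2 = -\tfrac{\mathrm{d}}{\mathrm{d}x}(1/P(x))$, and integrate by parts around the closed contour to move the derivative onto $e^{-\tau x}$, extracting the $-\tau$. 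Each step checks out: the higher-order pole in \cref{eq:contour-int-dd} is, by the paper's own definition, the divided difference over the multiset with $x_j$ repeated, and the integration by parts is legitimate because $e^{-\tau x}/P(x)$ is holomorphic in a neighborhood of $\Gamma$ (the poles are enclosed, not on the contour), so the boundary term of the single-valued product vanishes. What each approach buys: the paper's proof is essentially two lines once \cref{eq:laplace-of-dd} is in hand and deliberately reuses the same Laplace-transform toolkit as the convolution lemma, \cref{app-lem:convolution}, keeping the appendix's machinery uniform; yours is more self-contained—it avoids appealing to injectivity of the Laplace transform, makes the origin of the $-\tau$ prefactor transparent as $\tfrac{\mathrm{d}}{\mathrm{d}x}e^{-\tau x}$, and remains valid verbatim for complex $\tau$. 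Your closing cross-check (an induction mirroring \cref{lemm:weighted-repeatedarg-sum} via \cref{prop:rescaling}) is also viable, but the contour argument as you give it is complete on its own.
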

\begin{proof}
    Denote $f(\tau) \equiv e^{-\tau [x_0, \ldots, x_q]}$ and recall  $\lap{e^{-\tau [x_0, \ldots, x_q]}} = (-1)^q / (\prod_{i=0}^q (s + x_i)).$ By the derivative property of the Laplace transform,
        \begin{equation}
            \lap{-\tau f(\tau)} = \pdv{s} \lap{f(\tau)} =  \sum_{j=0}^q \frac{(-1)^{q+1}}{(s + x_j) \prod_{i=0}^q (s + x_i)} = \lap{\sum_{j=0}^q e^{-\tau [x_0, \ldots, x_q, x_j]}},
        \end{equation}
    and the result follows by taking the inverse Laplace transform of the left-most and right-most expressions. 
\end{proof}
Using this proposition and results we have shown previously, we can now prove the following integral relation. 
\begin{lemma}[Fidelity susceptibility integral]
    \label{lemm:fidsus-integral}
    \begin{equation}
    \label{app-eq:fs-integral}
    I \equiv \int_0^{\beta/2} \tau e^{-\tau [x_{j+1}, \ldots, x_q]}  e^{-(\beta-\tau) [x_0, \ldots, x_j]} \dtau = \sum_{r=0}^j e^{-\frac{\beta}{2}[x_0, \ldots, x_r]} \sum_{m = j + 1}^q e^{-\frac{\beta}{2} [x_r, \ldots, x_q, x_m]},
\end{equation}
\end{lemma}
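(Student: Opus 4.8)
The plan is to reduce the half-range, $\tau$-weighted integral $I$ to a collection of genuine full-range convolutions, to which the already-proven convolution lemma (\cref{app-lem:convolution}) applies. The obstacle that must be cleared first is that, unlike the energy-susceptibility integral, the two ``time'' arguments $\tau$ and $\beta-\tau$ here do \emph{not} sum to the upper limit $\beta/2$, so the integrand is not a convolution at total time $\beta/2$ and \cref{app-lem:convolution} cannot be applied directly.

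First I would split the exponential as $e^{-(\beta-\tau)x} = e^{-\frac{\beta}{2}x}\,e^{-(\frac{\beta}{2}-\tau)x}$ and apply the Leibniz rule \cref{eq:leibniz-rule} to the divided difference over the multiset $[x_0,\ldots,x_j]$, obtaining
\[
 e^{-(\beta-\tau)[x_0,\ldots,x_j]} = \sum_{r=0}^j e^{-\frac{\beta}{2}[x_0,\ldots,x_r]}\, e^{-(\frac{\beta}{2}-\tau)[x_r,\ldots,x_j]}.
\]
Substituting this into $I$ pulls each constant factor $e^{-\frac{\beta}{2}[x_0,\ldots,x_r]}$ outside the integral and leaves the inner integrals
\[
 J_r \equiv \int_0^{\beta/2} \tau\, e^{-\tau[x_{j+1},\ldots,x_q]}\, e^{-(\frac{\beta}{2}-\tau)[x_r,\ldots,x_j]}\,\dtau ,
\]
whose time arguments now genuinely sum to $\beta/2$. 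This Leibniz split is the key step: it converts the awkward half-range integral into honest convolutions at total time $\beta/2$, at the cost of the outer sum over $r$ that already matches the $e^{-\frac{\beta}{2}[x_0,\ldots,x_r]}$ factors in the target expression.

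Next I would remove the $\tau$ weight with the repeated-argument identity of \cref{prop:repatedarg-sum-simp}, written here as $\tau\, e^{-\tau[x_{j+1},\ldots,x_q]} = -\sum_{m=j+1}^q e^{-\tau[x_{j+1},\ldots,x_q,x_m]}$, so that each $J_r$ becomes a sum of unweighted convolutions. I would evaluate each by repeating the Laplace-transform argument of \cref{app-lem:convolution}: using \cref{eq:laplace-of-dd} for the sets $[x_{j+1},\ldots,x_q,x_m]$ and $[x_r,\ldots,x_j]$, the transforms multiply to $(-1)^{q-r}/\big[(s+x_m)\prod_{l=r}^q(s+x_l)\big]$, which I recognize as $\lap{-e^{-t[x_r,\ldots,x_q,x_m]}}$. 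Inverting gives $\int_0^{\beta/2} e^{-\tau[x_{j+1},\ldots,x_q,x_m]} e^{-(\frac{\beta}{2}-\tau)[x_r,\ldots,x_j]}\,\dtau = -e^{-\frac{\beta}{2}[x_r,\ldots,x_q,x_m]}$, hence $J_r = \sum_{m=j+1}^q e^{-\frac{\beta}{2}[x_r,\ldots,x_q,x_m]}$, and reassembling $I=\sum_{r=0}^j e^{-\frac{\beta}{2}[x_0,\ldots,x_r]}J_r$ is exactly the claimed identity.

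The main obstacle is not the mechanics but confirming that \cref{app-lem:convolution} legitimately applies in the slightly generalized form required here: at total time $\beta/2$ rather than $\beta$, and with the extra repeated argument $x_m$ appended to the ``upper'' block. Since that lemma's proof consists only of multiplying and inverting Laplace transforms, it carries over verbatim once the index bookkeeping ($r\le j < m \le q$, so $x_m$ sits in the upper block and the two argument multisets glue to $[x_r,\ldots,x_q]$ with $x_m$ repeated) and the accumulated sign $(-1)^{q-r}$ are checked; the signs are where I would be most careful.
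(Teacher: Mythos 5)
Your proposal is correct and takes essentially the same route as the paper's proof: a Leibniz-rule split of $e^{-(\beta-\tau)[x_0,\ldots,x_j]}$ at $\beta/2$, then \cref{prop:repatedarg-sum-simp} to trade the $\tau$ weight for an appended repeated argument $x_m$, then \cref{app-lem:convolution} rescaled to total time $\beta/2$, with the two minus signs cancelling. Your explicit Laplace-transform sign check $(-1)^{q-r}$ is right, and your writing of the $\tau$-independent prefactor as $e^{-\frac{\beta}{2}[x_0,\ldots,x_r]}$ is in fact the correct form that the paper's displayed intermediate lines (which carry a stray $e^{-(\beta/2-\tau)[x_0,\ldots,x_r]}$) intend.
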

\begin{proof}
    This relation follows from the Leibniz rule (\cref{eq:leibniz-rule}), ~\cref{prop:repatedarg-sum-simp}, and ~\cref{app-eq:convolution-theorem} applied in order,
\begin{align}
    I &= \sum_{r=0}^j e^{-(\beta/2 - \tau) [x_0, \ldots, x_r]} \int_0^{\beta/2} \tau e^{-\tau [x_{j+1}, \ldots, x_q]}  e^{-(\beta/2 - \tau) [x_r, \ldots, x_j]}\dtau \\    
    &= \sum_{r=0}^j \sum_{m=j+1}^q e^{-(\beta/2 - \tau) [x_0, \ldots, x_r]} \int_0^{\beta/2} e^{-\tau [x_{j+1}, \ldots, x_q, x_m]} e^{-(\beta/2 - \tau) [x_r, \ldots, x_j]} \dtau \\
     &= \sum_{r=0}^j  e^{-(\beta/2 - \tau) [x_0, \ldots, x_r]} \sum_{m=j+1}^q e^{-\frac{\beta}{2} [x_r, \ldots, x_q, x_m] },
\end{align}
where in the final line we use $\beta \rightarrow \beta /2$ when employing~\cref{app-eq:convolution-theorem}.
\end{proof}

This lemma is already sufficient to demonstrate \cref{app-eq:FS-estimator-orig}, but to make things clearer, we state the following corollary first. 
\begin{corollary}
    \cref{lemm:fidsus-integral} is general for any multiset $[x_0, \ldots, x_q]$ partitioned at arbitrary index $j$. As a specific special case of interest now, we consider $[x_{j+1}, \ldots, x_q]$ replaced with $[x_j, \ldots, x_q]$, 
\begin{equation}
    \label{app-eq:alt-fidsus-int}
    I' \equiv \int_0^{\beta/2} \tau e^{-\tau [x_{j}, \ldots, x_q]}  e^{-(\beta-\tau) [x_0, \ldots, x_j]} \dtau = \sum_{r=0}^j e^{-\frac{\beta}{2}[x_0, \ldots, x_r]} \sum_{m = j}^q e^{-\frac{\beta}{2} [x_r, \ldots, x_q, x_j, x_m]}.
\end{equation}
\end{corollary}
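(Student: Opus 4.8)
The plan is to obtain the Corollary as a direct instance of \cref{lemm:fidsus-integral} applied to an \emph{augmented} multiset, exploiting the fact that the lemma holds for any multiset partitioned at any index. The only structural difference between $I'$ and the integral $I$ of \cref{lemm:fidsus-integral} is that the $e^{-\tau[\cdots]}$ factor now carries the extra argument $x_j$, so its second block overlaps the first block at the partition boundary. To remove this overlap at the level of labels, I would introduce a new multiset $(y_0, \dots, y_{q+1})$ of size $q+2$ by inserting a second copy of $x_j$ immediately after position $j$: set $y_k = x_k$ for $0 \le k \le j$, set $y_{j+1} = x_j$, and set $y_k = x_{k-1}$ for $j+2 \le k \le q+1$. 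Partitioning this multiset at index $j$ gives first block $[y_0,\dots,y_j] = [x_0,\dots,x_j]$ and second block $[y_{j+1},\dots,y_{q+1}] = [x_j, x_{j+1}, \dots, x_q]$, so the integrand of $I'$ becomes \emph{identically} the integrand of \cref{lemm:fidsus-integral} for the data $(y;\, q+1;\, j)$.

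Applying \cref{lemm:fidsus-integral} verbatim then yields
\begin{equation*}
    I' = \sum_{r=0}^{j} e^{-\frac{\beta}{2}[y_0,\dots,y_r]} \sum_{m=j+1}^{q+1} e^{-\frac{\beta}{2}[y_r,\dots,y_{q+1},y_m]},
\end{equation*}
and the remaining work is purely translational. For $r \le j$ the first block is untouched, so $[y_0,\dots,y_r] = [x_0,\dots,x_r]$ and the outer range $0 \le r \le j$ is unchanged. For the inner factor, permutation invariance of the divided difference (\cref{sec:div-diff}) gives $[y_r,\dots,y_{q+1}] = [x_r,\dots,x_q,x_j]$, since this multiset is just $\{x_r,\dots,x_q\}$ together with the duplicated $x_j$. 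Finally $y_m = x_{m-1}$ for $m \ge j+1$, so reindexing the inner sum by $m \mapsto m-1$ sends $\{j+1,\dots,q+1\}$ to $\{j,\dots,q\}$ and produces $\sum_{m=j}^{q} e^{-\frac{\beta}{2}[x_r,\dots,x_q,x_j,x_m]}$, which is exactly the claimed right-hand side.

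The step to watch is the index bookkeeping at the end: one must check that inserting the duplicate strictly after position $j$ leaves the first block (and hence the outer sum) invariant while shifting precisely the second-block labels, and that the permutation-invariance identification $[y_r,\dots,y_{q+1}] = [x_r,\dots,x_q,x_j]$ is applied consistently before the reindex. As an independent check, I would instead rerun the three-step proof of \cref{lemm:fidsus-integral} directly on $I'$, namely Leibniz-splitting the $e^{-(\beta-\tau)[x_0,\dots,x_j]}$ factor via \cref{eq:leibniz-rule}, applying \cref{prop:repatedarg-sum-simp} to the $\tau$-block $[x_j,\dots,x_q]$, and closing with the convolution relation \cref{app-eq:convolution-theorem} at $\beta/2$; this reproduces the same expression and confirms the augmentation argument.
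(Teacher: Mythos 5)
Your proposal is correct, and it takes a genuinely different route from the paper. The paper proves the corollary by re-running the three-step derivation of \cref{lemm:fidsus-integral} directly on $I'$: Leibniz-splitting $e^{-(\beta-\tau)[x_0,\ldots,x_j]}$ at $\beta/2$ via \cref{eq:leibniz-rule}, applying \cref{prop:repatedarg-sum-simp} to the block $[x_j,\ldots,x_q]$ (so the repeated-argument sum now starts at $m=j$), and closing with the convolution relation \cref{app-eq:convolution-theorem}, noting that the extra $x_j$ arises because the two concatenated blocks $[x_j,\ldots,x_q,x_m]$ and $[x_r,\ldots,x_j]$ both contain $x_j$ explicitly---so your ``independent check'' is in fact the paper's actual proof. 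Your main argument instead realizes the corollary as a verbatim instance of the lemma on the augmented multiset $(y_0,\ldots,y_{q+1})$ with a duplicated $x_j$, which is legitimate: the lemma holds for arbitrary multisets (as the corollary's first sentence stresses), since divided differences are defined by the contour integral \cref{eq:contour-int-dd} and the Laplace-transform machinery behind \cref{app-lem:convolution} tolerates coincident arguments (higher-order poles), so nothing in the lemma's proof requires distinctness. Your index bookkeeping also checks out---$y_m = x_{m-1}$ holds uniformly for $m \ge j+1$ (including $m = j+1$, since $y_{j+1} = x_j$), so the shift $m \mapsto m-1$ yields exactly $\sum_{m=j}^q e^{-\frac{\beta}{2}[x_r,\ldots,x_q,x_j,x_m]}$. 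As for what each route buys: the paper's re-derivation keeps the integral mechanism visible at the cost of repeating the computation, while your reduction needs no new computation and makes the provenance of the duplicated $x_j$ transparent; it also sidesteps a typo in the paper's displayed proof, whose final line writes the inner lower limit as $m=j+1$ rather than $m=j$ (the $m=j$ limit of the statement is the correct one, as the $q=j=0$ case $I' = \beta^2 e^{-\beta x_0}/8$ confirms, where the corollary's inner sum $e^{-\frac{\beta}{2}[x_0,x_0,x_0]} = \frac{\beta^2}{8}e^{-\frac{\beta}{2}x_0}$ reproduces it while an empty sum would not).
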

\begin{proof}
    This result follows from \cref{lemm:fidsus-integral} with the main change being that the $m$ index now starts at $j$ instead of $j+1$. To explain the notation change, it is useful to simply repeat the above derivation, which proceeds as,
\begin{align}
    I' &= \sum_{r=0}^j e^{-(\beta/2 - \tau) [x_0, \ldots, x_r]} \int_0^{\beta/2} \tau e^{-\tau [x_{j}, \ldots, x_q]}  e^{-(\beta/2 - \tau) [x_r, \ldots, x_j]}\dtau \\    
    &= \sum_{r=0}^j \sum_{m=j}^q e^{-(\beta/2 - \tau) [x_0, \ldots, x_r]} \int_0^{\beta/2} e^{-\tau [x_{j}, \ldots, x_q, x_m]} e^{-(\beta/2 - \tau) [x_r, \ldots, x_j]} \dtau \\
     &= \sum_{r=0}^j  e^{-(\beta/2 - \tau) [x_0, \ldots, x_r]} \sum_{m=j+1}^q e^{-\frac{\beta}{2} [x_r, \ldots, x_q, x_j, x_m] },
\end{align}
where the last multiset now contains an explicit extra $x_j$ from the concatenation of $[x_j, \ldots, x_q, x_m]$ and $[x_r, \ldots, x_j]$ which both contain an explicit $x_j$. 
\end{proof}
Inspection of \cref{app-eq:d0-correlator-estimator,app-eq:alt-fidsus-int} shows the validity of \cref{app-eq:FS-estimator-orig} immediately. 

\subsection{Fidelity susceptibility in cubic time}
\label{app-sub:fs-in-cubic}

The FS estimator in \cref{app-eq:FS-estimator-orig} can be evaluated in $O(q^4)$ operations, but we can improve it to $O(q^3)$ and validate Theorem 2 in a few steps. We begin with a some helpful propositions and lemmas, where the first step is to re-order the sum. 

\begin{proposition}[Re-ordering sums in FS estimator]
    \beq
    \label{eq:FS-estimator-orig-reorder}
    \frac{E_{z_0}}{e^{-\beta[E_{z_0}, \ldots, E_{z_q}]}}
    \sum_{r=0}^q e^{-\frac{\beta}{2}[E_{z_0}, \ldots, E_{z_r}]}
    \sum_{\substack{j,m=r \\ j \leq m}}^q
    E_{z_j} e^{-\frac{\beta}{2}[E_{z_r}, \ldots, E_{z_q}, E_{z_j}, E_{z_m}]} \estimates \int_0^{\beta/2} \tau \avg{ D_0(\tau) D_0 } \dtau 
    \eeq
\end{proposition}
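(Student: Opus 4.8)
The plan is to recognize this as a purely combinatorial re-indexing of a finite triple sum, carrying no analytic content whatsoever. Comparing the estimator of \cref{app-eq:FS-estimator-orig} with the claimed re-ordered form, the summand is literally identical in both expressions; only the order and ranges of the summation indices differ. So I would establish equality by showing that both sides sum the single term $E_{z_j}\, e^{-\frac{\beta}{2}[E_{z_0},\ldots,E_{z_r}]}\, e^{-\frac{\beta}{2}[E_{z_r},\ldots,E_{z_q},E_{z_j},E_{z_m}]}$ over the same set of triples $(r,j,m)$.

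First I would read off the index constraints in \cref{app-eq:FS-estimator-orig}. There the outer sum runs over $j \in \{0,\ldots,q\}$, the middle sum imposes $r \in \{0,\ldots,j\}$, i.e. $r \le j$, and the inner sum imposes $m \in \{j,\ldots,q\}$, i.e. $j \le m$. Combining these, the triple ranges over exactly $\{(r,j,m) : 0 \le r \le j \le m \le q\}$.

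Next I would observe that the factor $e^{-\frac{\beta}{2}[E_{z_0},\ldots,E_{z_r}]}$ depends only on $r$, so once $r$ is promoted to the outermost index it factors out of the remaining sum over $j$ and $m$. Exchanging the order of summation — Fubini for finite sums, which is unconditional — and collecting the $r$-dependent factor produces the right-hand side, in which $r$ runs over $\{0,\ldots,q\}$ and, for each fixed $r$, the pair $(j,m)$ runs over $\{(j,m) : r \le j \le m \le q\}$, precisely the condition written as $j,m = r,\ldots,q$ with $j \le m$. I would then confirm that this re-ordered index set $\{(r,j,m) : 0 \le r \le q,\ r \le j \le m \le q\}$ coincides with the original region $0 \le r \le j \le m \le q$, so that no term is added, dropped, or double-counted.

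The step requiring the most care — and the only place an error could hide — is the bookkeeping at the boundaries of the index region: verifying that the constraint $j \le m$ (inherited in the original from $r \le j$ together with $m \ge j$) is reproduced exactly, and that the diagonal cases $r = j$ and $j = m$ are each counted once rather than twice or zero times. Since both descriptions collapse to the single nested chain $0 \le r \le j \le m \le q$ with an unchanged summand, equality of the two estimators follows immediately, and the proposition is proved.
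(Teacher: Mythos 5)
Your proof is correct and takes essentially the same route as the paper, which states this proposition as a pure re-ordering of the finite triple sum in \cref{app-eq:FS-estimator-orig} without further argument: the summand is unchanged, and both index descriptions coincide with the chain $0 \le r \le j \le m \le q$, exactly as you verify. Your explicit bookkeeping at the boundary cases $r=j$ and $j=m$ simply spells out what the paper leaves implicit.
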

With the re-ordering, we now observe that the innermost sum can be simplified. 
\begin{lemma}[A weighted, repeated argument, double, less-than sum simplification~\cite{zeng2025inequalities}]
    \label{lemm:weighted-repeated-double-lessthan-sum-simplification}
    \beq
        \label{eq:weighted-repeated-double-lessthan-sum-simplification}
        \sum_{\substack{i,j=0 \\ i \leq j}}^q x_i e^{-\tau [x_0,\dots,x_q,x_i,x_j]} = 
        \left(-\frac{\tau^2}{2}\frac{\partial}{\partial\tau}-\sum_{i=0}^q i\cdot \frac{\partial}{\partial x_i}\right) e^{-\tau [x_0,\dots,x_q]}.
    \eeq
\end{lemma}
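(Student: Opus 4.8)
The plan is to reduce the constrained double sum to quantities already evaluated in \cref{prop:repatedarg-sum-simp} and \cref{lemm:weighted-repeatedarg-sum}, exploiting the symmetry of the divided difference in its two appended arguments. Writing $B \equiv [x_0,\dots,x_q]$ and $f_{ij}\equiv e^{-\tau[B,x_i,x_j]}$ (so that $f_{ij}=f_{ji}$), I would first split the target sum into a symmetric and an antisymmetric piece,
\begin{equation}
    \sum_{\substack{i,j=0\\ i\le j}}^q x_i f_{ij} = \frac12\big( U + V \big) + \frac12 \sum_{i<j}(x_i - x_j) f_{ij},
\end{equation}
where $U \equiv \sum_{i,j=0}^q x_i f_{ij}$ is the unconstrained double sum and $V \equiv \sum_{i=0}^q x_i f_{ii}$ is its diagonal. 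This identity follows by combining $\sum_{i\le j}+\sum_{i\ge j} = U+V$ with $\sum_{i\le j}-\sum_{i\ge j}=\sum_{i<j}(x_i-x_j)f_{ij}$, the latter obtained after relabeling $i\leftrightarrow j$ and using $f_{ij}=f_{ji}$.

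The antisymmetric piece is where the index-weighted derivative $\sum_i i\,\partial_{x_i}$ emerges, and handling it is the crux of the argument. The key tool I would use is the two-point divided-difference recurrence, obtained from \cref{eq:contour-int-dd} together with the partial-fraction identity $\tfrac{1}{(z-x_i)(z-x_j)} = \tfrac{1}{x_i-x_j}\big(\tfrac{1}{z-x_i}-\tfrac{1}{z-x_j}\big)$, namely $(x_i-x_j)\,e^{-\tau[B,x_i,x_j]} = e^{-\tau[B,x_i]} - e^{-\tau[B,x_j]}$. Substituting this collapses the antisymmetric sum into single-append terms; counting multiplicities ($j$ ranges over $i+1,\dots,q$ in the first sum and $i$ over $0,\dots,j-1$ in the second) gives
\begin{equation}
    \sum_{i<j}(x_i-x_j)f_{ij} = \sum_{i=0}^q (q-2i)\, e^{-\tau[B,x_i]}.
\end{equation}
I would then rewrite the $q\sum_i e^{-\tau[B,x_i]}$ part via \cref{prop:repatedarg-sum-simp} and the $\sum_i i\,e^{-\tau[B,x_i]}$ part via the elementary identity $\partial_{x_i} e^{-\tau[B]} = e^{-\tau[B,x_i]}$ (immediate from \cref{eq:contour-int-dd}), yielding $-q\tau\,e^{-\tau[B]} - 2\sum_{i=0}^q i\,\partial_{x_i} e^{-\tau[B]}$.

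For the symmetric piece, I would compute $U$ by summing over $j$ first using \cref{prop:repatedarg-sum-simp} applied to the enlarged base $[B,x_i]$; since $j$ omits the appended copy of $x_i$, this produces $-\tau\,e^{-\tau[B,x_i]} - e^{-\tau[B,x_i,x_i]}$. Summing against $x_i$ and invoking \cref{lemm:weighted-repeatedarg-sum} gives $U = -\tau(\tau\partial_\tau - q)e^{-\tau[B]} - V$, so that $V$ cancels and $U+V = -\tau^2\partial_\tau e^{-\tau[B]} + q\tau\,e^{-\tau[B]}$ with no need to evaluate $V$ on its own. Finally, I would add the two pieces: the $q\tau\,e^{-\tau[B]}$ contributions cancel, leaving exactly $-\tfrac{\tau^2}{2}\partial_\tau e^{-\tau[B]} - \sum_i i\,\partial_{x_i} e^{-\tau[B]}$, i.e. the claimed operator acting on $e^{-\tau[x_0,\dots,x_q]}$. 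The main obstacle is the antisymmetric piece: recognizing that the two-point recurrence linearizes $(x_i-x_j)f_{ij}$ and that the resulting weights $q-2i$ reorganize into the $\sum_i i\,\partial_{x_i}$ operator. Once that is in hand, the remaining steps are routine bookkeeping with the established single-sum results.
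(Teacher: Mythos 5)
Your proof is correct, and it takes a genuinely different route from the paper's. The paper proves \cref{lemm:weighted-repeated-double-lessthan-sum-simplification} by induction on $q$, with an inductive step that unwinds the recursive definition of divided differences in a fairly opaque multi-line manipulation. You instead give a direct, non-inductive derivation: splitting $\sum_{i\le j}$ into symmetric and antisymmetric halves, collapsing the antisymmetric half via the two-point recurrence $(x_i-x_j)\,e^{-\tau[x_0,\dots,x_q,x_i,x_j]} = e^{-\tau[x_0,\dots,x_q,x_i]} - e^{-\tau[x_0,\dots,x_q,x_j]}$ (an immediate partial-fraction consequence of \cref{eq:contour-int-dd}, valid also at $x_i=x_j$ where both sides vanish), and evaluating the symmetric half by reusing \cref{prop:repatedarg-sum-simp} on the enlarged base $[x_0,\dots,x_q,x_i]$ together with \cref{lemm:weighted-repeatedarg-sum}. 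I checked the bookkeeping: the decomposition identity $2S = U+V+\sum_{i<j}(x_i-x_j)f_{ij}$ holds by the symmetry $f_{ij}=f_{ji}$; the multiplicity count giving weights $q-2i$ is right; the extra appended copy of $x_i$ in \cref{prop:repatedarg-sum-simp} is correctly subtracted so that $V$ cancels in $U+V = -\tau^2\partial_\tau e^{-\tau[x_0,\dots,x_q]} + q\tau\, e^{-\tau[x_0,\dots,x_q]}$; and the $q\tau$ terms cancel in the final assembly. The identity $\partial_{x_i} e^{-\tau[x_0,\dots,x_q]} = e^{-\tau[x_0,\dots,x_q,x_i]}$ you invoke is indeed immediate from \cref{eq:contour-int-dd} and is in any case used implicitly by the paper when it converts the lemma into the computable form of its corollary. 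What your approach buys is structural insight: it isolates exactly where the labeling-dependent operator $\sum_i i\,\partial_{x_i}$ comes from (the ordering constraint $i\le j$ lives entirely in the antisymmetric half), whereas the paper's induction verifies the formula without explaining it; the paper's proof, conversely, is self-contained at the level of the recursive definition and needs no symmetrization trick.
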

\begin{proof}
We prove this by induction.

\emph{Base case}.
When $q=0$, we have 
$$
x_0 e^{-\tau [x_0,x_0,x_0]} = x_0 \cdot \frac{1}{2}\frac{\partial^2}{\partial x_0^2} e^{-\tau x_0} 
= \frac{\tau^2 x_0}{2} e^{-\tau x_0} = -\frac{\tau^2}{2}\frac{\partial}{\partial\tau} e^{-\tau x_0},
$$ so Eq.~(\ref{eq:weighted-repeated-double-lessthan-sum-simplification}) holds.

\emph{Induction step.} It follows from the recursive definition of the divided differences that
\begin{multline}
\sum_{\substack{i,j=0 \\ i \leq j}}^{q+1} x_i e^{-\tau [x_0,\dots,x_{q+1},x_i,x_j]} =
\frac{1}{x_0-x_1} \sum_{\substack{i,j=0 \\ i \leq j}}^{q+1} x_j \left( e^{-\tau [x_0,x_2,\dots,x_{q+1},x_i,x_j]} - e^{-\tau[x_1,x_2,\dots,x_{q+1},x_i,x_j]}\right) = \\ = 
\frac{1}{x_0-x_1} \left(
\left(
\left(-\frac{\tau^2}{2}\frac{\partial}{\partial\tau}-\sum_{i=1}^q i\cdot \frac{\partial}{\partial x_{i+1}}\right)e^{-\tau [x_0,x_2,\dots,x_{q+1}]} + 
x_0 e^{-\tau [x_0,x_2,\dots,x_{q+1},x_0,x_1]} + 
x_1 \sum_{j=1}^{q+1} e^{-\tau [x_0,\dots,x_{q+1},x_j]}
\right) -
\right. \\ \left.
-\left(
\left(-\frac{\tau^2}{2}\frac{\partial}{\partial\tau}-\sum_{i=1}^q i\cdot \frac{\partial}{\partial x_{i+1}}\right)e^{-\tau [x_1,x_2,\dots,x_{q+1}]} + x_0 \sum_{j=0}^{q+1} e^{-\tau [x_0,x_1,\dots, x_{q+1},x_j]}
\right)
\right) 
= \\ =
\left(-\frac{\tau^2}{2}\frac{\partial}{\partial\tau}-\sum_{i=0}^q i\cdot \frac{\partial}{\partial x_{i+1}}\right)e^{-\tau [x_0,\dots,x_{q+1}]}
-\sum_{j=1}^{q+1} e^{-\tau [x_0,\dots,x_{q+1},x_j]}
= \\ =
\left(-\frac{\tau^2}{2}\frac{\partial}{\partial\tau} -\sum_{i=0}^{q+1} i\cdot \frac{\partial}{\partial x_i}\right) e^{-\tau [x_0,\dots,x_{q+1}]}.
\end{multline}

Therefore, Eq.~(\ref{eq:weighted-repeated-double-lessthan-sum-simplification}) is true for all $q$.  
\end{proof}
As with the ES simplification, this more compact expression involving a derivative can be converted into a valid PMR-QMC estimator using \cref{prop:parametric-derivative-1}. 
\begin{corollary}[Computing the complicated weighted, repeated arugment, double, less-than sum~\cite{zeng2025inequalities}]
    \beq
        \label{eq:computing-weighted-repeated-double-lessthan-sum}
        \sum_{\substack{i,j=0 \\ i \leq j}}^q x_i e^{-\tau [x_0,\dots,x_q,x_i,x_j]} = 
        	\left\{\begin{array}{ll}
        	\left(\tau^2 x_0/2\right) e^{-\tau [x_0,\dots,x_q]} + (\tau^2/2) e^{-\tau [x_1,\dots,x_q]}
        	- \sum_{i=1}^q i \cdot e^{-\tau [x_0,\dots,x_q,x_i]}, & \text{\rm for}\quad q > 0,\\
                (\tau^2 x_0/2) e^{-\tau x_0}, & \text{\rm for}\quad q = 0.
                \end{array}\right.
    \eeq
\end{corollary}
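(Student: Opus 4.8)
The plan is to obtain the corollary directly from \cref{lemm:weighted-repeated-double-lessthan-sum-simplification}, which already rewrites the double sum as the action of the differential operator $-(\tau^2/2)\,\partial_\tau - \sum_{i=0}^q i\,\partial_{x_i}$ on $e^{-\tau [x_0,\dots,x_q]}$. It therefore suffices to evaluate each piece of this operator explicitly and to recognize the output as a combination of divided differences to which a PMR-QMC simulation has ready access. The overall structure is thus purely a matter of differentiating and matching terms, with one genuine ingredient to supply.

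First I would dispatch the $\partial_\tau$ term using \cref{prop:parametric-derivative-1}. Substituting $\partial_\tau e^{-\tau[x_0,\dots,x_q]} = -x_0 e^{-\tau[x_0,\dots,x_q]} - e^{-\tau[x_1,\dots,x_q]}$ and multiplying by $-\tau^2/2$ produces exactly the first two summands $(\tau^2 x_0/2)\, e^{-\tau[x_0,\dots,x_q]} + (\tau^2/2)\, e^{-\tau[x_1,\dots,x_q]}$ appearing in the $q>0$ branch. This step is immediate bookkeeping.

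The remaining ingredient is the derivative of a divided difference with respect to one of its \emph{arguments}, which is not among the propositions proved so far. I would establish it from the contour-integral definition \cref{eq:contour-int-dd}: differentiating under the integral sign with respect to $x_i$ simply multiplies the integrand by an extra factor $(z-x_i)^{-1}$, and the resulting contour integral is by definition the divided difference over the multiset in which $x_i$ is repeated. For $f(x)=e^{-\tau x}$ this gives $\partial_{x_i} e^{-\tau[x_0,\dots,x_q]} = e^{-\tau[x_0,\dots,x_q,x_i]}$, so the operator $-\sum_{i=0}^q i\,\partial_{x_i}$ contributes $-\sum_{i=0}^q i\, e^{-\tau[x_0,\dots,x_q,x_i]}$. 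Because of the prefactor $i$, the $i=0$ term vanishes, leaving $-\sum_{i=1}^q i\, e^{-\tau[x_0,\dots,x_q,x_i]}$, which is precisely the last summand. Finally I would verify the $q=0$ case separately: there the argument-derivative sum is empty and the $\partial_\tau$ term alone yields $(\tau^2 x_0/2)\, e^{-\tau x_0}$, matching the base case already checked inside the lemma.

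The only non-clerical step is the argument-derivative identity, so the main (mild) obstacle is simply to argue cleanly that differentiation produces a \emph{double} pole at $x_i$ and that the resulting integral is exactly the divided difference over the $(q+2)$-element multiset with $x_i$ repeated. Handled this way, no separate confluent-limit argument is needed, the extra argument lands automatically in the canonical ordered form $[x_0,\dots,x_q,x_i]$ used by the estimator, and the corollary follows with no further simplification.
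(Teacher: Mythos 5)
Your proposal is correct and follows essentially the same route as the paper, which likewise obtains the corollary by applying Prop.~\ref{prop:parametric-derivative-1} to the operator identity of Lemma~\ref{lemm:weighted-repeated-double-lessthan-sum-simplification}. The only difference is that you explicitly justify the argument-derivative identity $\partial_{x_i} e^{-\tau[x_0,\dots,x_q]} = e^{-\tau[x_0,\dots,x_q,x_i]}$ via the contour integral \eqref{eq:contour-int-dd}, a standard divided-difference fact the paper uses implicitly, and your contour argument for it (the double pole at $x_i$) is sound.
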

\begin{proof}
    This simplification follows from applying Prop.~\ref{prop:parametric-derivative-1} to Eq.~\eqref{eq:weighted-repeated-double-lessthan-sum-simplification}. 
\end{proof}

We now prove the main part of Theorem 2---that we have an $O(q^3)$ estimator for FS---since
 \begin{multline}
        \mathbf{1}_{q=0} \frac{(\beta E_{z_0})^2}{8} + \mathbf{1}_{q>0} \frac{E_{z_0}}{e^{-\beta [E_{z_0}, \ldots, E_{z_q}]}} \sum_{r = 0}^q e^{-\frac{\beta}{2} [E_{z_0}, \ldots, E_{z_r}]} \Bigg( \frac{\beta^2}{8} E_{z_r} e^{-\frac{\beta}{2} [E_{z_r}, \ldots, E_{z_q}]} +  \frac{\beta^2}{8} e^{-\frac{\beta}{2} [E_{z_{r+1}}, \ldots, E_{z_q}]} - \\
         \sum_{i=r+1}^q (i - r)  e^{-\frac{\beta}{2} [E_{z_r}, \ldots, E_{z_q}, E_{z_i}]}  \Bigg) \estimates \int_0^{\beta/2} \tau \avg{ D_0(\tau) D_0 } \dtau 
    \end{multline}
\begin{proof}
    The estimator follows from applying Eq.~\eqref{eq:computing-weighted-repeated-double-lessthan-sum} to the estimator form given in Eq.~\eqref{eq:FS-estimator-orig-reorder}, taking special care to correctly translate the sum from $i, j = 0$ to $q$ on arguments $[x_0, \ldots, x_q]$ to the sum of $j, m = r$ to $q$ on arguments $[E_{z_r}, \ldots, E_{z_q}]$. The estimate $O(q^3)$ comes from the observation that, for each $r$, we must do $O(q^2)$ work.
\end{proof}

\subsection{Off-diagonal case follows from diagonal case}
\label{app-sub:off-diagonal-case-from-diagonal}

We now consider the case $H_1 \propto H - D_0$ and show it reduces to using the $H_1 \propto D_0$ estimators. Let
\begin{equation}
    \Gamma \equiv H - D_0,
\end{equation}
and define the associated ES and FS
\begin{align}
    \chi_E^{\Gamma}(\lambda; \beta) &= \int_0^\beta ( \avg{\Gamma(\tau) \Gamma} - \avg{\Gamma}^2) \dtau \\
    \chi_F^{\Gamma}(\lambda; \beta) &= \int_0^{\beta/2} \tau ( \avg{\Gamma(\tau) \Gamma} - \avg{\Gamma}^2) \dtau.
\end{align}
As with the diagonal case, we find $\chi_E^{H_1} \propto \chi_E^{\Gamma}$ and $\chi_F^{H_1} \propto \chi_F^{\Gamma}$. Since $H = D_0 + \sum_{j>0} D_j P_j$, then it follows that $\Gamma = \sum_{j > 0} D_j P_j$. One approach to estimate the ES and FS in this case is to derive estimators for $\avg{D_j P_j}$ and $\avg{(D_l P_l)(\tau) D_j P_j}$ and use linearity. While this does work, if $\Gamma$ contains $R$ terms, then it incurs an overhead of doing $R$ and $R^2$ estimations. This overhead can be avoided by using simple estimators for $H$ and $D_0$ quantities and linearity.

Firstly, $\avg{\Gamma} = \avg{H - D_0} = \avg{H} - \avg{D_0}$ by linearity. From prior work (see Sec.~VI.B in Ref.~\cite{ezzell2025advanced}), we know
\begin{equation}
    \label{app-eq:H-estimator}
    E_{z_q} + \mathbf{1}_{q\geq 1} \frac{e^{-\beta [E_{z_0}, \ldots, E_{z_{q-1}}}]}{e^{-\beta [E_{z_0}, \ldots, E_{z_q}]}} = \begin{cases}
        E_{z_q} & q = 0 \\
        E_{z_q} + \frac{e^{-\beta [E_{z_0}, \ldots, E_{z_{q-1}}}]}{e^{-\beta [E_{z_0}, \ldots, E_{z_q}]}} & q > 0,
    \end{cases} \estimates \avg{H}
\end{equation}
where $\mathbf{1}_{q \geq 1}$ is the indicator function that is 0 when $q < 1$ and 1 when $q \geq 1$. Since $E_z \estimates \avg{D_0}$, then by linearity we find,
\begin{equation}
    \mathbf{1}_{q\geq 1} \frac{e^{-\beta [E_{z_0}, \ldots, E_{z_{q-1}}}]}{e^{-\beta [E_{z_0}, \ldots, E_{z_q}]}} \estimates \avg{\Gamma}.
\end{equation}
This estimator can be used to estimate $\avg{\Gamma}^2.$

Next, we observe by linearity and cyclicity of the trace,
\begin{align}
    \avg{\Gamma(\tau) \Gamma} &= \avg{H^2} + \avg{D_0(\tau) D_0} - 2 \avg{D_0 H},
\end{align}
which has the full non-trivial $\tau$ dependence in the familiar diagonal correlator. As such, we can estimate $\avg{\Gamma(\tau) \Gamma}$ and ES and FS integrals thereof provided we can derive estimators for $\avg{H^2}$ and $\avg{D_0 H}$. By off-diagonal expansion manipulations, we know (e.g., Sec.~VI.B of Ref.~\cite{ezzell2025advanced}),
\begin{align}
    \label{eq:H2-estimator}
    E_{z_q}^2 + \frac{\mathbf{1}_{q \geq 1}(E_{z_q} + E_{z_{q-1}}) e^{-\beta [E_{z_0}, \ldots, E_{z_{q-1}}]} + \mathbf{1}_{q \geq 2}e^{-\beta [E_{z_{0}}, \ldots, E_{z_{q-2}}]} }{ e^{-\beta [E_{z_0}, \ldots, E_{z_q}]} } \estimates \avg{H^2}
\end{align}
and (see Sec.~VI.C in Ref.~\cite{ezzell2025advanced}), 
\begin{align}
    E_{z_0} \left(E_{z_0} + \mathbf{1}_{q\geq 1} \frac{e^{-\beta [E_{z_1}, \ldots, E_{z_q}}]}{e^{-\beta [E_{z_0}, \ldots, E_{z_q}]}} \right) \estimates \avg{D_0 H}.
\end{align}

Each of the static estimators, i.e., for $\avg{D_0}, \avg{H}, \avg{H^2}, and \avg{D_0 H}$, can be computed in constant time $O(1)$. Integrals of these $\tau$-independent quantities are also trivial. Hence, the entire complexity in evaluating $\avg{\Gamma(\tau) \Gamma}$ is $O(q^2)$ inherited from the complexity of computing $\avg{D_0(\tau) D_0}$. Similarly, the complexity of evaluating ES and FS is thus $O(1)$ and $O(q^3)$ via the estimators we derived in \cref{app:best-estimator-derivations}. Namely, we have now finished our proof of Theorems 1 and 2 by showing that the non-trivial parts of the ES and FS estimators for both the $H_1 \propto D_0$ and $H_1 \propto H - D_0$ are the same and hence have the same complexity. 

\section{Most general driving term estimators}
\label{app:general-driving-term}
In the most general case, $H_1$ contains both diagonal and off-diagonal terms and does not satisfy $H_1 \propto D_0$ or $H_1 \propto H - D_0$. Nevertheless, we can characterize the general form of $H_1$ for arbitrary Hamiltonians and use that to derive estimators for ES and FS in this arbitrary case. 

\subsection{\texorpdfstring{The PMR form of $H_1$}{The PMR form of H1}}
\label{app:pmr-form-of-h1}
In the study of quantum phase transitions, we assume $H(\lambda) = H_0 + \lambda H_1$ for some $[H_0, H_1] \neq 0$. Regardless of the specifics, this Hamiltonian can always be in PMR form,  $H(\lambda) = \sum_j D_j(\lambda) P_j$, where we choose $P_0 = \mathbb{1}$ by convention. Here, we show that one can always write 
\begin{equation}
    \label{eq:h1-special-form}
    H_1 = \sum_j \Lambda_j (\lambda) D_j (\lambda) P_j
\end{equation}
where $\Lambda_j (\lambda) $ are diagonal matrices and the $D_j (\lambda) P_j$ terms are directly taken from the PMR form of $H$. The consequence of this form is that there are no fundamental obstacles to estimating $\avg{H_1}, \avg{H_1(\tau) H_1},$ and integrals thereof using PMR-QMC, and in fact, estimators have a relatively simple form. A full understanding of why this is true is subtle and entrenched in PMR-QMC details, and hence, out of the scope of this work. For additional context, we point to Section~V.B in Ref.~\cite{barash2024QuantumMonteCarlo} and Ref.~\cite{ezzell2025advanced}, which discuss measurements of arbitrary operators $O$ and what properties $O$ can have in relation to $H$ that prevent accurate estimation by a single PMR-QMC simulation. To reiterate our main point, proving Eq.~\eqref{eq:h1-special-form} is one way to show that $H_1$ does not have any of these subtle problems that prevent estimation.    

To begin with, we build up the PMR form of $H(\lambda)$ by first writing PMR forms for $H_0$ and $H_1$ separately. For simplicity, denote $P \in H_0$ to mean there exists a term $DP$ in the PMR expansion of $H_0$ where $D \neq 0$. With this notation, we can write,
\begin{equation}
    H_0 = \sum_{P \in H_0} A_P P,  \  
    H_1 = \sum_{P \in H_1} B_P P  \implies H(\lambda) = \sum_{P \in H_0,  P \in H_1} (A_P + \lambda B_P) P + \sum_{P \in H_0,  P \notin H_1}  A_P P +  \sum_{P \notin H_0,  P \in H_1} \lambda B_P P.
\end{equation}
When compared with $H(\lambda) = \sum_j D_j(\lambda) P_j$, we can readily identify $\{P_j\}_j = \{ P \in H_0\} \cup \{P \in H_1 \}$ and $D_j$ as the diagonal in front of $P_j$, i.e., either $A_j, \lambda B_j, $or $(A_j + \lambda B_j)$. Before showing Eq.~\eqref{eq:h1-special-form} generally, we remark that in the extremely common cases that $H_1$ is either purely diagonal or purely off-diagonal, things simplify and Eq.~\eqref{eq:h1-special-form} is obvious. In the pure diagonal case, $H_1 = \frac{\mathbb{1}}{\lambda} D_0$, and in the pure off-diagonal case, $H_1 = \sum_{j>0} \frac{\mathbb{1}}{\lambda} D_j$.

Now for the general case, consider $P \in H_0,P \notin H_1$ where we have $D_j = A_j$ and choose $\Lambda_j = 0$. For $P \notin H_0, P \in H_1$, we have $D_j = B_j$ and choose $\Lambda_j = \mathbb{1}$.  Finally, for $P \in H_0, P \in H_1$, we have $D_j = (A_j + \lambda B_j)$ and choose $\Lambda_j = (A_j + \lambda B_j)^{+}B_j$, where $X^+$ denotes the pseudo-inverse of $X$, which prevents diving by zero. Before justifying the pseudo-inverse---which is more subtle than any discussion so far---we remark that this case is moot if $H_1$ is purely diagonal or purely off-diagonal, since such partitions have no $P$ that is contained in both $H_0$ and in $H_1$. To see why the pseudo-inverse works, suppose $\avg{z | A_j + \lambda B_j |z} = 0$ which implies either $\avg{z | A_j | z} = \avg{z | B_j | z} = 0$ or $A_j = - \lambda B_j$. In the first case, $\avg{ z | \Lambda_j B_j | z} = 0$ for any choice of $\Lambda_j$, so the pseudo-inverse works. The second-case is not relevant since it requires fine-tuning $\lambda$, and such exceptional points can be avoided in practice while still generating $\chi_F^\beta(\lambda)$ curves.

\subsection{\texorpdfstring{Estimating $\avg{H_1}$ and $\avg{H_1(\tau) H_1}$}{Estimating <H1> and <H1(tau)H1>}}

In the prior section, we showed that we can write $H_1 = \sum_j \Lambda_j D_j P_j$ for $\Lambda_j$'s diagonal matrices whenever $H = \sum_j D_j P_j$ and $P_0 = \mathds{1}$. By linearity, it is sufficient to derive estimators for each $\avg{\Lambda_j D_j P_j}$ term, for which it is useful to separate diagonal term, $\Lambda_0$ from the general off-diagonal terms $\Lambda_l D_l P_l$ for $l > 0$. In both cases, some of the present authors have derived the estimators (see Table I in \cite{ezzell2025advanced}),
\begin{align}   
    \label{app-eq:lam0-estimator}
    \Lambda_0(z) &\estimates \avg{\Lambda_0} \\ 
    \label{app-eq:laml-dl-pl-estimator}
    \delta_{P_l}^{(q)} \frac{\Lambda_l(z) e^{-\beta [E_{z_0}, \ldots, E_{z_{q-1}}]} }{e^{-\beta [E_{z_0}, \ldots, E_{z_q}]}} &\estimates \avg{\Lambda_l D_l P_l},
\end{align}
and hence, we can evaluate $\avg{H_1}$ in $O(R)$ calls to these $O(1)$ estimators. Here, $\Lambda_l(z) \equiv \avg{z | \Lambda_l | z}$ and $\delta_{P_l}^{(q)} = 1$ if the $q^{\text{th}}$ permutation in $\Siq$ is $P_l$ and is $0$ otherwise. (Recall that products of permutations $\Siq$ partially make up PMR-QMC configurations as explained in \cref{app:pmr-qmc-background}.)

Similarly, we can estimate $\avg{H_1(\tau) H_1}$ by evaluating all $R^2$ estimators of the form $\avg{A (\tau) B}$ for $A, B \in$ 
$\{\Lambda_0$, $\Lambda_1 D_1 P_1$, $\ldots$, $\Lambda_{R-1} D_{R-1} P_{R-1}\}$.
Just as the above estimator, it is best to consider the four possible ways of combining a pure diagonal contribution with a pure off-diagonal one,
\begin{align}
    \label{app-eq:diag-diag-corr-estimator}
    \frac{ \Lambda_0 (z) }{e^{-\beta[E_{z_0}, \ldots, E_{z_q}]}} \sum_{j=0}^q \Lambda_0(z_j) e^{-\tau [E_{z_j}, \ldots, E_{z_q}]} e^{-(\beta - \tau)[E_{z_0}, \ldots, E_{z_j}]} &\estimates \avg{\Lambda_0(\tau) \Lambda_0} \\
    \frac{ \delta_{P_k}^{(q)} \Lambda_k(z) }{e^{-\beta[E_{z_0}, \ldots, E_{z_{q}}]}} \sum_{j=0}^q \Lambda_0(z_j) e^{-\tau [E_{z_j}, \ldots, E_{z_{q-1}}]} e^{-(\beta - \tau)[E_{z_0}, \ldots, E_{z_j}]} &\estimates \avg{(\Lambda_k D_k P_k)(\tau)\Lambda_0} \\
    \frac{ \Lambda_0(z) }{e^{-\beta[E_{z_0}, \ldots, E_{z_q}]}} 
    \sum_{j=1}^{q-1} \delta^{(j)}_{P_l} \Lambda_l(z_j) e^{-\tau [E_{z_j}, \ldots, E_{z_{q}}]} e^{-(\beta - \tau) [E_{z_0}, \ldots, E_{z_{j-1}}]} &\estimates \avg{\Lambda_0(\tau) (\Lambda_l D_l P_l)} \\
    \label{app-eq:simple-AtauB-correlator-estimator}
    \frac{ \delta_{P_k}^{(q)} \Lambda_k(z) }{e^{-\beta [E_{z_0}, \ldots, E_{z_q}]}} \sum_{j=1}^{q-1} {\delta_{P_l}^{(j)}} {\Lambda_l(z_{j})} e^{-\tau [E_{z_{j}}, \ldots, {E_{z_{q-1}}}]}  e^{-(\beta-\tau) [E_{z_0}, \ldots, {E_{z_{j-1}}}]} &\estimates \avg{(\Lambda_k D_k P_k)(\tau) (\Lambda_l D_l P_l)},
\end{align}
where each estimator follows as different special cases of the derivation in Sec.~VIII.A in Ref.~\cite{ezzell2025advanced}. As with the $\avg{D_0(\tau) D_0}$ estimator, each of these estimators requires $O(q^2)$ operations, so the total complexity to evaluate $\avg{H_1(\tau) H_1}$ is simply $O(R^2 q^2)$. 

Although these estimators are more complicated than the pure diagonal counterparts \cref{appsub:d0-correlator}, the big picture for ES and FS is the same. Namely, the only non-trivial $\tau$ dependence is contained in the divided differences of the correlator estimator, for which we can apply the divided difference integral relations we have already derived. 

\subsection{General ES estimator}
\label{app-sub:general-es-estimator}

As in the derivation of the diagonal ES estimator, we need only consider deriving an estimator for $\int_0^\beta \avg{H_1(\tau) H_1} \dtau$ and combine it with our $\avg{H_1}$ estimators. Although $\avg{H_1(\tau) H_1}$ really consists of four different possible estimators, each of these contains an expression of the form $e^{-\tau [\ldots]} e^{-(\beta - \tau) [\ldots]}$ for which we can apply the convolution lemma, \cref{app-lem:convolution}. For example, applying this lemma to \cref{app-eq:diag-diag-corr-estimator,app-eq:simple-AtauB-correlator-estimator}, we find 
\begin{align}
    \frac{ \Lambda_0 (z) }{e^{-\beta[E_{z_0}, \ldots, E_{z_q}]}} \sum_{j=0}^q \Lambda_0(z_j) e^{-\beta[E_{z_0}, \ldots, E_{z_q}, E_{z_j}]} &\estimates \int_0^\beta \avg{\Lambda_0(\tau) \Lambda_0} \dtau \\
    \frac{ \delta_{P_k}^{(q)} \Lambda_k(z) }{e^{-\beta [E_{z_0}, \ldots, E_{z_q}]}} \sum_{j=1}^{q-1} {\delta_{P_l}^{(j)}} {\Lambda_l(z_{j})}  e^{-\beta  [E_{z_0}, \ldots, E_{z_{q-1}}]} &\estimates \int_0^\beta \avg{(\Lambda_k D_k P_k)(\tau) (\Lambda_l D_l P_l)} \dtau. 
\end{align}

The former clearly resembles $\avg{D_0(\tau) D_0}$ and is computable in $O(q^2)$ time. Unfortunately, unlike the diagonal case simplification in Theorem 1, there does not appear any way to simplify the sum into an $O(1)$ expression since $\Lambda_0(z_j)$ has no relation to $E_{z_j}$. On the other hand, the second estimator can be simplified. Namely, since the inner divided difference has no direct $j$ dependence,
\begin{equation}
    \frac{ \delta_{P_k}^{(q)} \Lambda_k(z) }{e^{-\beta [E_{z_0}, \ldots, E_{z_q}]}} \sum_{j=1}^{q-1} {\delta_{P_l}^{(j)}} {\Lambda_l(z_{j})}  e^{-\beta  [E_{z_0}, \ldots, E_{z_{q-1}}]} = \frac{ \delta_{P_k}^{(q)} \Lambda_k(z) }{e^{-\beta [E_{z_0}, \ldots, E_{z_q}]}} e^{-\beta [E_{z_0}, \ldots, E_{z_{q-1}}]} \sum_{j=1}^{q-1} {\delta_{P_l}^{(j)}} {\Lambda_l(z_{j})},
\end{equation}
then this is actually an $O(q)$ expression. In total, this means evaluating $\chi_E^{H_1}$ for completely general $H_1$ takes $O(R^2 q^2)$ time. In the special case that $H_1$ has no pure diagonal term, however, our ES estimator needs $O(R^2 q)$ effort instead. 

\subsection{General FS estimator}
\label{app-sub:general-fs-estimator}

For the FS, the same logic as in the previous section applies except we employ the more complicated integral lemma, \cref{lemm:fidsus-integral}, to evaluate $\int_0^{\beta/2} \tau \avg{H_1(\tau) H_1} \dtau$. Doing so to \cref{app-eq:diag-diag-corr-estimator,app-eq:simple-AtauB-correlator-estimator}, we find
\begin{align}
    \frac{ \Lambda_0 (z) }{e^{-\beta[E_{z_0}, \ldots, E_{z_q}]}} \sum_{j=0}^q \Lambda_0(z_j) \sum_{r=0}^j e^{-\frac{\beta}{2} [E_{z_0}, \ldots, E_{z_j}]} \sum_{m=j}^q e^{-\frac{\beta}{2} [E_{z_r}, \ldots, E_{z_q}, E_{z_j}, E_{z_m}]} &\estimates \int_0^{\beta/2} \tau  \avg{\Lambda_0(\tau) \Lambda_0} \dtau  \\
    \frac{ \delta_{P_k}^{(q)} \Lambda_k(z) }{e^{-\beta [E_{z_0}, \ldots, E_{z_q}]}} \sum_{j=1}^{q-1} {\delta_{P_l}^{(j)}} {\Lambda_l(z_{j})} \sum_{r=0}^{j-1} e^{-\frac{\beta}{2} [E_{z_0}, \ldots, E_{z_j}]} \sum_{m=j-1}^{q-1} e^{-\frac{\beta}{2} [E_{z_r}, \ldots, E_{z_{q-1}}, E_{z_m}]}  &\estimates \int_0^{\beta/2} \tau \avg{(\Lambda_k D_k P_k)(\tau) (\Lambda_l D_l P_l)} \dtau,
\end{align}
where neither can be simplified by our prior formulas or clear new insights. As such, both require $O(q^4)$ effort to compute, and the overall complexity of a general FS estimator requires $O(R^2 q^4)$ time. 

\end{document}